\documentclass[onecolumn,pra, superscriptaddress,longbibliography,
12pt,tightenlines,aps]{revtex4-2}

\pdfoutput=1

\usepackage{graphicx}
\usepackage{bm}
\usepackage{amsmath}
\usepackage{environ}
\usepackage{appendix}
\usepackage{physics}
\usepackage{xfrac}
\usepackage[shortlabels]{enumitem}
\usepackage[nopar]{lipsum}
\usepackage{optidef}
\usepackage{stmaryrd}
\usepackage{xcolor}
\usepackage{bm}

\usepackage{amsthm}
\usepackage{amssymb}
\usepackage{amsfonts}
\usepackage{slashed}
\usepackage{bbm}
\usepackage{latexsym,epsfig,bbm}
\usepackage[makeroom]{cancel}

\usepackage[colorlinks = true, linkcolor = blue, urlcolor=blue, citecolor = red, anchorcolor = green]{hyperref}

\newcommand{\cM}{\mathcal{M}}
\newcommand{\cN}{\mathcal{N}}
\newcommand{\cA}{\mathcal{A}}

\usepackage{color}
\definecolor{blue}{rgb}{0,0.2,1}

\definecolor{red}{rgb}{0.9,0,0}

\newtheorem{theorem}{Theorem}
\newtheorem{lemma}[theorem]{Lemma}
\newtheorem{definition}[theorem]{Definition}

\newtheorem{corollary}[theorem]{Corollary}
\newtheorem{remark}{Remark}
\newtheorem{proposition}[theorem]{Proposition}

\usepackage{mathtools}

\newcommand{\Sn}{\mathfrak{S}_n}
\NewDocumentCommand{\End}{o}{\mathrm{End}^{\Sn}}

\allowdisplaybreaks
\newcommand{\comment}[1]{}

\begin{document}

 \title{Query Complexity of Classical and Quantum\\Channel Discrimination}

\author{Theshani~Nuradha}
\email{theshani.gallage@gmail.com, nuradha@illinois.edu}
\affiliation{School of Electrical and Computer Engineering, Cornell University, Ithaca, New York 14850, USA}
\affiliation{Department of Mathematics \& \\ 
Illinois Quantum Information Science and Technology (IQUIST) Center, University of Illinois Urbana-Champaign, Urbana, IL 61801, USA}

\author{Mark~M.~Wilde}
\email{wilde@cornell.edu}
\affiliation{School of Electrical and Computer Engineering, Cornell University, Ithaca, New York 14850, USA}

\begin{abstract}
    Quantum channel discrimination has been studied from an information-theoretic perspective, wherein one is interested in the optimal decay rate of error probabilities as a function of the number of unknown channel accesses. In this paper, we study the query complexity of quantum channel discrimination, wherein the goal is to determine the minimum number of channel uses needed to reach a desired error probability. To this end, we show that the query complexity of binary channel discrimination depends logarithmically on the inverse error probability and inversely on the negative logarithm of the (geometric and Holevo) channel fidelity. As special cases of these findings, we precisely characterize the query complexity of discriminating two classical channels and two classical--quantum channels. Furthermore, by obtaining an optimal characterization of the sample complexity of quantum hypothesis testing  when the error probability does not exceed a fixed threshold, we provide a more precise characterization of query complexity under a similar error probability threshold constraint. We also provide lower and upper bounds on
the query complexity of binary asymmetric channel discrimination and multiple quantum channel discrimination. For the former, the query complexity depends on the geometric R\'enyi and Petz--R\'enyi channel divergences, while for the latter, it depends on the negative logarithm of the (geometric and Uhlmann) channel fidelity. For multiple channel discrimination, the upper bound scales as the logarithm of the number of channels. 
\end{abstract}

\maketitle 

\tableofcontents

\section{Introduction}

Quantum hypothesis testing has been studied in early works of quantum information theory~\cite{Hel67,Hel69,Hol73}. 
In quantum state discrimination, the hypotheses are modelled as quantum states, and the task is to devise a measurement
strategy that maximizes the probability of correctly identifying the state of a given quantum system. In the  basic version of this problem, a quantum system is prepared in one of two possible states, denoted by $\rho$ and $\sigma$,  and it is the goal of the distinguisher, who does not know \textit{a priori} which state was prepared, to determine the identity of the unknown state. 
 In the case that $n\in \mathbb{N}$ copies (or samples) are provided, the states are then denoted by $\rho^{\otimes n}$ and $\sigma^{\otimes n}$. Having extra samples for the discrimination task 
 decreases the probability of making an error in the process.  It is well known that the error probability decreases exponentially fast in the number of samples, provided that $\rho \neq \sigma$~\cite{HP91,ON00,ACM+07,NS09,Nag06,Hay07}. Quantum hypothesis testing can be generalized to the setting in which there are multiple hypotheses~\cite{YKL75}, and it is also known in this case that the error probability generally decays exponentially fast with $n$~\cite{Li16}.

Quantum channel discrimination is a generalization of quantum hypothesis testing and involves determining the identity of an unknown channel that is queried. Indeed, the task is to query the unknown quantum channel several times ($n$ times) to determine which channel was applied. The asymptotic regime of quantum channel discrimination (i.e., $n \to \infty$) has been extensively studied in~\cite{Hayashi_9_channel_discr,Cooney2016,wilde2020amortized,WW19,FFRS20,bergh2022parallelization,SHW22,bergh2023infinite,HuangBosonciDephasing24}. Additionally, Ref.~\cite{harrow2010adaptive} studied various settings in which a finite number of channel uses are sufficient to distinguish two channels perfectly using adaptive strategies. More broadly, there has been significant interest in the topic of quantum channel discrimination \cite{Memory_ChannelDiscriminationPRL,Duan09,PW09,MPW10,Puzzuoli2017,KW21}.

Recently, Ref.~\cite{cheng2024invitation} studied the non-asymptotic regime of quantum hypothesis testing by considering its sample complexity. Here, sample complexity refers to the minimum number of samples of the unknown state required to achieve a fixed non-zero error probability. Furthermore, Ref.~\cite{cheng2024invitation} also explored several applications of the sample complexity of quantum hypothesis testing in quantum algorithms for simulation and unstructured
search, and quantum learning and classification. This non-asymptotic contribution also provides a foundation for studying information-constrained settings of quantum hypothesis testing. These constraints include ensuring privacy~\cite{nuradha_privateQHT25,cheng_privateQHT24}, where privacy is quantified by quantum local differential privacy~\cite{hirche2023quantum, nuradha2023quantum}, and access to noisy quantum states instead of noiseless states~\cite{george2025quantum}.

 In this work, we extend the study of quantum hypothesis testing to the more general setting of channel discrimination. 
In particular, we consider how many queries of an unknown channel are required to achieve a fixed non-zero error probability for discriminating two or more channels. We use the term ``query complexity'' to refer to the minimum number of channel uses in this scenario, as this terminology is common in theoretical computer science and is used for similar settings including unitary operator and gate discrimination \cite{huang2022query,kawachi2019quantum,chiribella2013query,rossi2022quantum,ito2021lower}. Similar to the sample complexity of quantum hypothesis testing, we suspect that our findings will be useful in applications related to quantum learning theory, quantum computing, and quantum algorithms, in which one has to distinguish several channels. 

\bigskip
\textbf{Contributions:}
In this paper, we provide a comprehensive analysis of the query complexity of classical and quantum channel discrimination. First, we define the query complexity of three quantum channel discrimination settings, including symmetric binary (Definition~\ref{def:binary_symmetric_C}), asymmetric binary (Definition~\ref{def:binary_asymmetric_C}), and symmetric multiple channel discrimination (Definition~\ref{def:M-ary_C}), while providing equivalent expressions in Remark~\ref{rem:equivalent_QC}. Before proceeding to our query complexity results, we improve a sample complexity lower bound from \cite{cheng2024invitation}, and, by building on the recent finding in \cite[Theorem~2]{kazemi2025sample}, we obtain an optimal characterization of the sample complexity of binary quantum hypothesis testing under a threshold constraint on the error probability (see  Theorem~\ref{thm:alterantive_SC_binary}).

Next, we study the query complexity of symmetric binary channel discrimination. To this end, Theorem~\ref{theorem:binary_symmetric_C} states that the query complexity scales logarithmically in the inverse error probability $\varepsilon \in (0,1)$ and inversely in the negative logarithm of the geometric channel fidelity (lower bound) and Holevo channel fidelity (upper bound). Moreover, Corollary~\ref{Cor:classical_channels} provides a tight characterization of the query complexity of discriminating two classical channels $\cN$ and $\cM$, and Corollary~\ref{Cor:query_CQ_1} provides a similar tight characterization for classical--quantum channels. 

By building on the aforementioned optimal characterization of sample complexity, we then obtain a more precise characterization of the query complexity of binary channel discrimination in terms of the prior probabilities, fixed error probability, and a channel divergence (see Theorem~\ref{prop:alternative_query_C}). For the special case of two classical channels and two classical--quantum channels, the upper and lower bounds differ only by a constant and are thus optimal, in contrast to the bounds presented in Theorem~\ref{theorem:binary_symmetric_C}; see Corollary~\ref{cor:precise-classical} and Corollary~\ref{Cor:Q_C_CQ_2}, respectively.

Furthermore, we characterize the query complexity of asymmetric binary channel discrimination with the use of the geometric R\'enyi and Petz--R\'enyi channel divergences (see Theorem~\ref{thm:binary_asymmetric_C}). As the third setting, we establish bounds on the query complexity of  symmetric multiple channel discrimination (see Theorem~\ref{theorem:sc_M-ary_C}). Here, we find that the query complexity scales with the negative logarithm of the channel fidelities and the upper bound scales as $O\!\left(\ln(M)\right)$, where $M$ is the number of distinct channels to be discriminated. 

\medskip
\textit{Note: After the completion of the first arXiv post of our work \cite{nuradha2025querycomplexityclassicalquantum}, we came across the independent work~\cite{Li_25_query_channel}, which also studies the non-asymptotic regime of quantum channel discrimination.} 

\section{Preliminaries and Notations}
In this section, we establish some notation and recall various quantities of interest and relations used in our paper.

Let $\mathbb{N}\coloneqq \{1,2,\ldots\}$.
Throughout our paper, we let $\rho$ and $\sigma$ denote quantum states, which are positive semi-definite operators acting on a separable Hilbert space and with trace equal to one. Let $\cN_{A \to B}$ and $\cM_{A \to B}$ be quantum channels, which are completely positive trace-preserving maps. 
Let  $I$  denote the identity operator.
For every operator $A$, we define the trace norm as 
\begin{align} \label{eq:Schatten}
	\left\|A\right\|_1 \coloneqq  \Tr\!\left[ |A| \right].
\end{align}
where $|A| \coloneqq \left( A^\dagger A \right)^{1/2}$.

Let $f(x)$ and $g(x)$ be functions taking on non-negative values. We write $f(x) = \Theta\!\left(g(x) \right)$
if there exist constants $c_1, c_2 > 0$, and $x_0 \geq 0$ such that 
$ c_1 g(x) \leq f(x) \leq c_2 g(x)$ for all $x \geq x_0$.
Intuitively, $f(x)$ grows at the same rate as $g(x)$, up to constant factors, for sufficiently large $x$. In our paper, our findings are in fact slightly stronger than indicated by the $\Theta$ notation; that is, for every instance of the notation $f(x) = \Theta\!\left(g(x) \right)$ that appears in our paper, the following statement actually holds: for all $x>0$, there exist constants $c_1, c_2 > 0$ such that $ c_1 g(x) \leq f(x) \leq c_2 g(x)$ (i.e., the statement holds for all $x>0$ and not simply for all $x\geq x_0$ where $x_0>0$).

\subsection{Quantum Information-Theoretic Quantities}

A divergence $\bm{D}(\rho\|\sigma)$ is a function of two quantum states $\rho$ and $\sigma$ that obeys the data-processing inequality; i.e., the following holds for all states $\rho$ and $\sigma$ and every channel $\mathcal{N}$:
\begin{equation}
    \bm{D}(\rho\|\sigma) \geq \bm{D}(\mathcal{N}(\rho)\|\mathcal{N}(\sigma)).
\end{equation}
\begin{definition}[Generalized Divergences]\label{def:Gene_div}Let $\rho$ and $\sigma$ be quantum states.
    \begin{enumerate}
    \item The normalized trace distance is defined as 
		\begin{align}
			 \frac12 \left\|\rho-\sigma \right\|_1.
		\end{align}
        \item The Petz--R\'enyi divergence of order $\alpha \in (0,1) \cup (1,\infty)$ is defined as~\cite{Pet86}
        \begin{align}
            D_\alpha(\rho\|\sigma) & \coloneqq \frac{1}{\alpha-1} \ln Q_\alpha(\rho\|\sigma) \label{eq:petz-renyi-div} \\
            {\hbox{where}} \quad
            Q_\alpha(A\|B) & \coloneqq \lim_{\varepsilon \to 0^+} \Tr[A^\alpha (B+\varepsilon I)^{1-\alpha}], \qquad \forall A,B\geq 0.
        \end{align}
        It obeys the data-processing inequality for all $\alpha \in (0,1)\cup(1,2]$~\cite{Pet86}. 
        Note also that $D_{1/2}(\rho\|\sigma) = -\ln F_{\mathrm{H}}(\rho,\sigma)$, where the Holevo fidelity  is defined as~\cite{Holevo1972fid}
		\begin{align} \label{eq:Holevo_fidelity}
			F_\mathrm{H}(\rho,\sigma) \coloneqq  \left(\Tr\!\left[\sqrt{\rho} \sqrt{\sigma} \right]\right)^2.
		\end{align}

        \item The sandwiched R\'enyi divergence of order $\alpha \in (0,1) \cup (1,\infty)$ is defined as~\cite{MDS+13,wilde_strong_2014}
        \begin{align}
            \widetilde{D}_\alpha(\rho\|\sigma) & \coloneqq \frac{1}{\alpha-1} \ln \widetilde{Q}_\alpha(\rho\|\sigma) \label{eq:sandwiched-renyi-div} \\
            {\hbox{where}} \quad \widetilde{Q}_\alpha(A\|B) & \coloneqq \lim_{\varepsilon\to 0^+} \Tr[\left(A^{\frac{1}{2}} (B+\varepsilon I)^{\frac{1-\alpha}{\alpha}} A^{\frac{1}{2}} \right)^\alpha], \qquad \forall A,B\geq 0.
        \end{align}
        It obeys the data-processing inequality for all $\alpha \in [1/2,1)\cup(1,\infty)$~\cite{FL13} (see also \cite{Wilde_2018}). Note also that $\widetilde{D}_{1/2}(\rho\|\sigma) = -\ln F(\rho,\sigma)$, where the quantum fidelity is defined as~\cite{Uhl76}
		\begin{align} \label{eq:fidelity}
			F(\rho,\sigma) \coloneqq  \left\|\sqrt{\rho} \sqrt{\sigma} \right\|_1^2.
		\end{align}
  The Bures distance is defined as~\cite{Hel67b, Bur69}
		\begin{align}
			d_\mathrm{B}(\rho,\sigma) 
			&\coloneqq \sqrt{2\left(1 - \sqrt{F}(\rho,\sigma)\right)}.
		\end{align}

        \item The geometric  R\'enyi divergence of order $\alpha \in (0,1) \cup (1,\infty)$ is defined as \cite{matsumoto2015new} 
         \begin{align}
            \widehat{D}_\alpha(\rho\|\sigma) & \coloneqq \frac{1}{\alpha-1} \ln \widehat{Q}_\alpha(\rho\|\sigma) \label{eq:geometric-renyi-div} \\
            {\hbox{where}} \quad \widehat{Q}_\alpha(A\|B) & \coloneqq \lim_{\varepsilon\to 0^+} \Tr[ (B +\varepsilon I) \left( (B+\varepsilon I)^{-\frac{1}{2}} A (B+\varepsilon I)^{-\frac{1}{2}} \right)^\alpha], \quad \forall A,B\geq 0.
        \end{align}
        It obeys the data-processing inequality for all $\alpha \in (0,1)\cup(1,2]$ \cite[Theorem~7.45]{KW20}. For $\alpha=1/2$, we have 
        \begin{equation}
            \widehat{D}_{\frac{1}{2}}(\rho\|\sigma) =- \ln \widehat{F}(\rho, \sigma), 
        \end{equation}
      where the geometric fidelity \cite{matsumoto2010reverse} is defined as
\begin{equation}
\label{eq:intro-geo-fid-def}
     \widehat{F}(\rho,\sigma)\coloneqq \inf_{\varepsilon >0 } \left(\Tr[ \rho(\varepsilon) \# \sigma(\varepsilon)] \right)^2.
\end{equation}
Here $A \# B$ denotes the matrix geometric mean of the positive definite operators $A$ and~$B$:
\begin{equation}
    A \# B \coloneqq A^{1/2}\left( A^{-1/2} B A^{-1/2} \right)^{1/2} A^{1/2}, \label{eq:geometric-mean}
\end{equation}
$\rho(\varepsilon) \coloneqq \rho + \varepsilon I$,
and $\sigma(\varepsilon) \coloneqq \sigma + \varepsilon I$.
Using the geometric fidelity, we also define
        \begin{align} \label{eq:d_B_hat}
			d_{\widehat{F}}(\rho,\sigma) 
			&\coloneqq \sqrt{2\left(1 - \sqrt{\widehat{F}}(\rho,\sigma)\right)}.
		\end{align}
 Note that $d_{\widehat{F}}$ is not a distance metric because it does not satisfy the triangular inequality (see \cite[page~1787]{bhatia2019matrix} for a counterexample).

          \item The quantum relative entropy is defined as~\cite{Ume62}
        \begin{equation}
            D(\rho\|\sigma) \coloneqq \lim_{\varepsilon \to 0^+} \Tr[\rho(\ln \rho - \ln (\sigma + \varepsilon I))],
        \end{equation}
        and it obeys the data-processing inequality~\cite{lindblad1975completely}.
    \end{enumerate}
\end{definition}

Generalized divergences can be extended to quantum channels $\cN_{A \to B}$ and $\cM_{A \to B}$ as follows \cite[Definition~II.2]{LKDW18}:
\begin{equation} \label{eq:channel_divergences}
    \bm{D}(\cN \Vert \cM) \coloneqq \sup_{\rho_{RA}} \bm{D}(\cN_{A \to B}(\rho_{RA}) \Vert \cM_{A \to B}(\rho_{RA}) ).
\end{equation}
Then, we can also define the channel variants of all the generalized divergences presented in Definition~\ref{def:Gene_div} (replacing $\bm{D}$ with the following: $D_\alpha, \widetilde{D}_\alpha, \widehat{D}_\alpha, d_B, d_{\widehat{F}})$. In the same spirit, channel fidelities (denoted as $\bm{F}$ to include $F_H, F , \widehat{F}$) are defined as 
\begin{equation}\label{eq:channel_fidelities}
    \bm{F}(\cN, \cM) \coloneqq \inf_{\rho_{RA}} \bm{F}(\cN_{A \to B}(\rho_{RA}) ,\cM_{A \to B}(\rho_{RA}) ).
\end{equation}
As shown in \cite[Proposition~55]{katariya2021geometric}, the channel fidelity based on $F$ can be efficiently computed by means of a semi-definite program (SDP), and a similar statement can be made for $\widehat{F}$ \cite[Proposition~48]{katariya2021geometric}. Precisely, the SDP is given by
\begin{equation}
    \left[\widehat{F}(\cN,\cM)\right]^{1/2} =\sup_{\lambda \geq 0, W_{RB} \in \operatorname{Herm}} \left\{ \lambda : \lambda I_R \leq \Tr_B[W_{RB}] 
     ,   \begin{bmatrix}
        \Gamma^{\cN}_{RB} & W_{RB} \\
        W_{RB} & \Gamma^{\cM}_{RB}
        \end{bmatrix} \geq 0\right\},
\end{equation}
where $\Gamma^{\cN}_{RB} $ and $\Gamma^{\cM}_{RB}$ are the Choi operators for $\cN $ and $\cM$, respectively, defined as $\Gamma^{\cN}_{RB} \coloneqq \sum_{i,j} |i\rangle\!\langle i|\otimes \mathcal{N}(|i\rangle\!\langle j|)$.

We make use of the following inequalities throughout our work: Let $\cN_{A \to B}$ and $\cM_{A \to B}$ denote quantum channels, and  let $\rho_{RA}$ and $\sigma_{RA}$ denote quantum states.
\begin{enumerate}
    \item Chain rule for geometric R\'enyi divergence (for $\alpha \in (1,2]$ \cite[Theorem~3.4]{fang2021geometric} and for $\alpha \in (0,1)$ \cite[Proposition~45]{katariya2021geometric}):
    \begin{equation}
    \label{eq:chain_rule_geometric_renyi}
        \widehat{D}_\alpha\!\left( \cN_{A \to B}(\rho_{RA}) \Vert \cM_{A \to B} (\sigma_{RA})\right) \leq \widehat{D}_\alpha(\cN \Vert \cM) + \widehat{D}_\alpha(\rho_{RA} \Vert \sigma_{RA}).
    \end{equation}
    \item Since $\widehat{D}_{\frac{1}{2}}(\rho\|\sigma) =- \ln \widehat{F}(\rho, \sigma)$, the following is a rewriting of \eqref{eq:chain_rule_geometric_renyi}:
    \begin{equation}\label{eq:chain_rule_fide}
        \widehat{F}\!\left( \cN_{A \to B}(\rho_{RA}), \cM_{A \to B} (\sigma_{RA})\right) \geq \widehat{F}(\cN, \cM) \ \widehat{F}(\rho_{RA},  \sigma_{RA}).
    \end{equation}
    \item Let $n \in \mathbb{N}$. Then, 
    \begin{equation}
        F\!\left(  \cN^{\otimes n}, \cM^{\otimes n} \right) \leq \left[  F(\cN, \cM) \right]^n. \label{eq:fidelity_channel_ineq_n}
    \end{equation}
    This follows because
    \begin{align}
        F\!\left(  \cN^{\otimes n}, \cM^{\otimes n} \right)  &= \inf_{\rho_{R_n A_n}} F\!\left(  \cN^{\otimes n}(\rho_{R_n A_n}), \cM^{\otimes n}(\rho_{R_n A_n}) \right) \\
        & \leq \inf_{\rho_{RA}^{\otimes n}} F\!\left(  \cN^{\otimes n}(\rho_{RA}^{\otimes n}), \cM^{\otimes n}(\rho_{RA}^{\otimes n}) \right) \\ 
        &= \inf_{\rho_{RA}}\left[  F\!\left(\cN(\rho_{RA}), \cM(\rho_{RA}) \right) \right]^n \\ 
        &=\left[  F(\cN, \cM) \right]^n,
    \end{align}
where the first inequality holds by choosing a smaller set comprised of product states for the optimization, and the penultimate equality from the multiplicativity of fidelity.

Note that it is known from~\cite{UnitaryDistinguish_PRL} that, for every two distinct and non-orthogonal unitary channels $\mathcal{U}$ and $\mathcal{V}$, there exists a value $n\in \mathbb{N}$ such that $F\!\left(  \mathcal{U}^{\otimes n}, \mathcal{V}^{\otimes n} \right)=0$, whereas $F\!\left(  \mathcal{U}, \mathcal{V} \right)\in (0,1)$. So this represents a case in which the inequality in \eqref{eq:fidelity_channel_ineq_n} is strict. Alternatively, the inequality in~\eqref{eq:fidelity_channel_ineq_n} can be saturated for some pairs of channels, including replacer channels.
\end{enumerate}

\subsection{Setup} \label{Sec:Setup}

 \begin{figure}
        \centering
        \includegraphics[width=\linewidth]{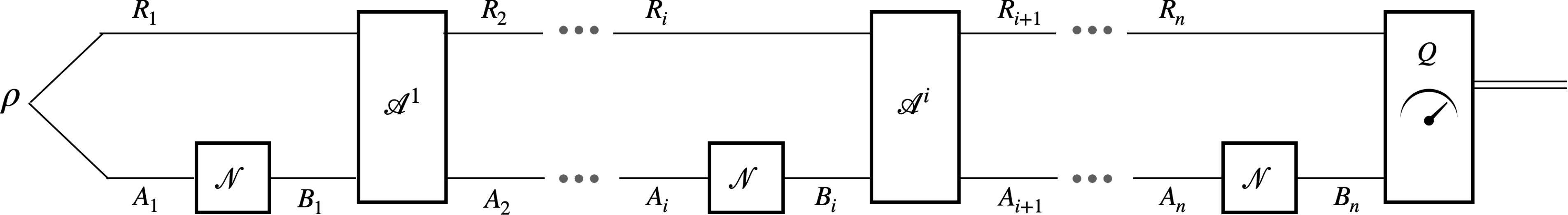}  
        \caption{Channel discrimination protocol when the unknown channel is $\cN$. In the case that the unknown channel is $\cM$, we replace $\cN$ in the figure by $\cM$.}
        \label{fig:channelDiscriminationSetup}
    \end{figure}

Given two quantum channels $\mathcal{N}_{A \to B}$ and $\mathcal{M}_{A \to B}$ acting on an input system $A$ and an output system $B$, the most general adaptive strategy for discriminating these two channels is as follows (see Fig.~\ref{fig:channelDiscriminationSetup} for a depiction). First we prepare an arbitrary input state $\rho_{R_1A_1}= \tau_{R_1A_1}$, where $R_1$ is a reference system (ancillary register). The $i$th use of the respective channel takes a register $A_i$ as input and gives a register $B_i$ as output. After each use of the channel $\cN$ or $\cM$, we apply an (adaptive) channel $\cA_{R_i B_i \to R_{i+1} A_{i+1}}^{(i)}$ to the registers $R_i$ and $B_i$ such that the following holds (depending on the channel applied): for $1\leq i \leq n$,
\begin{align}
    \rho_{R_{i} B_{i}} & \coloneqq \cN_{A_i \to B_i} (\rho_{R_i A_i}), \\ 
    \tau_{R_{i} B_{i}} & \coloneqq  \cM_{A_i \to B_i} (\tau_{R_i A_i}),
\end{align}
and 
\begin{align}
    \rho_{R_{i+1} A_{i+1}} & \coloneqq (\cA_{R_i B_i \to R_{i+1} A_{i+1}}^{(i)} \circ \cN_{A_i \to B_i}) (\rho_{R_i A_i}), \\ 
    \tau_{R_{i+1} A_{i+1}} & \coloneqq (\cA_{R_i B_i \to R_{i+1} A_{i+1}}^{(i)} \circ \cM_{A_i \to B_i}) (\tau_{R_i A_i}),
\end{align}
for every $1 \leq i < n$ on the left-hand side, where $n \in \mathbb{N}$. Finally, a quantum measurement $( Q_{R_n B_n} , I_{R_n B_n}- Q_{R_n B_n} )$ is applied on the systems $R_n B_n$ to determine which channel was queried. The outcome $Q$ corresponds to guessing the channel $\cN$ at the end of the protocol, while the outcome $I-Q$ corresponds to guessing the channel $\cM$. The probability of guessing $\cN$ when $\cM$ is applied is given by $\Tr\!\left[ Q_{R_n B_n} \tau_{R_n B_n}\right]$ (type II error probability), while the probability of guessing $\cM$ when $\cN$ is applied is given by $\Tr\!\left[ (I_{R_n B_n} -Q_{R_n B_n} )\rho_{R_n B_n}\right]$ (type I error probability).

In what follows, we use the notation $\{Q, \cA\}$ to identify a strategy of using the tuple of channels, $\left(\cA_{R_i B_i \to R_{i+1} A_{i+1}}^{(i)} \right)_i$, and a final measurement $( Q_{R_n B_n} , I_{R_n B_n}- Q_{R_n B_n} )$ as described above. In this setup, we also suppose that we have access to $n \in \mathbb{N}$ queries of the channel $\cN$ or $\cM$.

Finally, consider that a product/parallel strategy consists of preparing $n$ copies of a bipartite state, followed by the unknown channel $\cN$ or $\cM$ acting on one share of each state, and then applying a collective measurement on all of the systems. Such a strategy is actually a special case of the general strategy described above~\cite{Memory_ChannelDiscriminationPRL}. See also~\cite[Remark~1]{wilde2020amortized}.

\subsubsection{Symmetric Binary Channel Discrimination}

\label{subS:symmetric_binary}

In the setting of symmetric binary channel discrimination, we suppose that there is a prior probability $p \in (0,1)$ associated with the use of the channel $\cN$, and there is a prior probability $q \equiv 1-p$ associated with using the channel $\cM$. Indeed, the unknown channel is selected by flipping a coin, with the probability of heads being $p$ and the probability of tails being~$q$. If the outcome of the coin flip is heads, then $n$ channel uses of $\cN$ are applied, and if the outcome is tails, then $n$ channel uses of $\cM$ are applied, along with the adaptive channels $\left(\cA_{R_i B_i \to R_{i+1} A_{i+1}}^{(i)} \right)_i$ for $1 \leq i <n$ and the input state $\rho_{R_1 A_1}= \tau_{R_1A_1}$ in both cases.  Thus, the expected error probability in this experiment is as follows:
\begin{align}
    p_{e,\{Q,\cA\}, \rho_{R_1A_1}}(p, \cN,q,\cM, n) & \coloneqq   p \Tr\!\left[ (I_{R_n B_n} -Q_{R_n B_n} )\rho_{R_n B_n}\right] + q \Tr\!\left[ Q_{R_n B_n} \tau_{R_n B_n}\right]
    \label{eq:err-prob-fixed-meas_C}.
\end{align}

Given $p$, the distinguisher can minimize the error-probability expression in~\eqref{eq:err-prob-fixed-meas_C} over all adaptive strategies $\{Q, \cA\}$ and the initial state $\rho_{R_1A_1}$.
The optimal error probability $p_e(p, \cN,q,\cM,n)$ of hypothesis testing is as follows:
\begin{align}
	p_e(p, \cN,q,\cM,n) &\coloneqq  \inf_{\{Q, \cA\},\rho_{R_1A_1}}  p_{e,\{Q,\cA\}, \rho_{R_1A_1}}(p, \cN,q,\cM, n) \\
 \label{eq:Helstrom1_C}
 & =\inf_{\{Q, \cA\},\rho_{R_1A_1}} \left\{  p \Tr\!\left[ (I_{R_n B_n} -Q_{R_n B_n} )\rho_{R_n B_n}\right] + q \Tr\!\left[ Q_{R_n B_n} \tau_{R_n B_n}\right]  \right\}
	\\
	\label{eq:Helstrom2_C}	
	&= \inf_{ \cA,\rho_{R_1A_1}} \frac12 \left( 1 - \left\| p \rho_{R_n B_n} - q \tau_{R_n B_n} \right\|_1 \right), 
\end{align}
where the last equality follows by applying the Helstrom--Holevo theorem~\cite{Hel69, Hol73} for states and recalling that the optimization over all final measurements is abbreviated by $Q$.

\subsubsection{Asymmetric Binary Channel Discrimination}
In the setting of asymmetric channel discrimination, there are no prior probabilities associated with the selection of the channel $\cN$ or $\cM$---we simply assume that one of them is chosen deterministically, but the chosen channel is unknown to the distinguisher. The goal of the distinguisher is to minimize the probability of the second kind of error subject to a constraint on the probability of the first kind of error in guessing which channel is selected. Given a fixed $\varepsilon \in [0,1]$, the scenario reduces to the following optimization problem:
\begin{equation} 
\beta_{\varepsilon}( \cN^{(n)}\Vert \cM^{(n)})\coloneqq\inf_{\{ Q, \cA\},\rho_{R_1A_1}}\left\{
\begin{array}
[c]{c}
\Tr\!\left[ Q_{R_n B_n} \tau_{R_n B_n}\right]:\Tr\!\left[ (I_{R_n B_n} -Q_{R_n B_n} )\rho_{R_n B_n}\right] \leq\varepsilon
\end{array}
\right\}  .
\label{eq:beta-err-asymm_C}
\end{equation}

\subsubsection{Multiple Channel Discrimination}

Here the goal is to select one among $M$ possible channels as the guess for the unknown channel. 
Let $\mathcal{S} \coloneqq  \left( (p_m,  \cN^m)\right)_{m=1}^M $ be an ensemble of $M$ channels with prior probabilities taking values in the tuple $\left( p_m \right)_{m=1}^M$.
Without loss of generality, let us assume that $p_m>0$ for all $m\in \{1,2,\ldots, M\}$.
Fix the adaptive channels applied after each channel use as $\left(\cA_{R_i B_i \to R_{i+1} A_{i+1}}^{(i)} \right)_i$ and the input to the protocol as $\rho_{R_1A_1}$.
Let  $\rho_{R_1A_1}^m \coloneqq \rho_{R_1A_1}$ for all $1 \leq m \leq M$,  and 
\begin{equation}
   \rho_{R_{i+1} A_{i+1}}^m  \coloneqq ( \cA_{R_i B_i \to R_{i+1} A_{i+1}}^{(i)} \circ \cN^m_{A_i \to B_i}) (\rho_{R_i A_i}^m),
\end{equation}
for all $1\leq i <n$.

The minimum error probability of $M$-channel discrimination, given $n$ channel uses of the unknown channel, is as follows:
\begin{align}
	\label{eq:error_M-ary_C}
	p_e(\mathcal{S},n) 
	&\coloneqq \inf_{\{ Q, \cA\},\rho_{R_1A_1}} \sum_{m=1}^M p_m \Tr\!\left[ (I_{R_n B_n} -Q_{R_n B_n}^m )\rho_{R_n B_n}^m\right],
\end{align}
where the minimization is over all adaptive strategies denoted by $\cA$, every initial state $\rho_{R_1A_1} $, and every positive operator-valued measure (POVM) denoted as $Q$ applied at the final stage (i.e., a tuple $(Q^1_{R_n B_n},\ldots, Q^M_{R_n B_n})$ satisfying $Q^m_{R_n B_n} \geq 0$ for all $m\in \{1,\ldots,M\}$ and $\sum_{m=1}^M Q^m_{R_n B_n} = I_{R_n B_n}$).

\section{Improved Bounds for Sample Complexity}

In this section, we provide improved characterizations of the sample complexity of binary quantum hypothesis testing, when compared to some from \cite{cheng2024invitation}. For our first result here, we provide an improved lower bound when compared to one of the lower bounds from \cite[Theorem~II.7]{cheng2024invitation}. For our second result, we draw inspiration from~\cite[Theorem~2]{kazemi2025sample}, which has recently improved the sample complexity bounds for classical binary hypothesis testing. This latter result leads to an optimal characterization of sample complexity under a threshold constraint on the error probability. 

First, let us recall symmetric binary quantum hypothesis testing of states: Suppose that there are two states $\rho$ and $\sigma$, and  $\rho^{\otimes n}$ is selected with probability $p\in (0,1)$ and $\sigma^{\otimes n}$ is selected with probability $q\equiv 1-p$. The sample complexity of this setting is defined in~\cite[Definition~II.3, Remark~I.1] {cheng2024invitation} as follows: 
\begin{equation}\label{eq:SC_states_binary}
    n^*(p,\rho,q,\sigma,\varepsilon) \coloneqq \inf \left\{ n\in \mathbb{N}: \frac{1}{2}\left(  1-\left\Vert p\rho^{\otimes n}-q\sigma^{\otimes
n}\right\Vert _{1}\right) \leq \varepsilon \right\}.
\end{equation}

First, we establish a lower bound that improves upon one of the lower bounds from \cite[Theorem~7]{cheng2024invitation}. Additionally, by comparing with \cite[Theorem~II.6]{cheng2024invitation} observe that this lower bound is tight when both $\rho$ and $\sigma$ are pure states.

\begin{theorem} 
\label{theorem:binary_symmetric_improved_lower_bnd}
	Let $p$, $q$, $\varepsilon$, $\rho$, and $\sigma$ be as stated in \eqref{eq:SC_states_binary}.
	Then the following lower bound on sample complexity holds:
	\begin{align}
		\label{eq:binary_symmetric3}
		 \frac{\ln\!\left(\frac{pq}{\varepsilon\left(1-\varepsilon\right)}\right) }{ -\ln F(\rho,\sigma) }  \leq n^*(p,\rho,q,\sigma, \varepsilon).		
	\end{align}
\end{theorem}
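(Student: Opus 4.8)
The plan is to reduce the statement to a weighted Fuchs--van de Graaf inequality that lower-bounds the optimal error probability by the fidelity, and then to exploit the multiplicativity of fidelity under tensor powers together with a logarithm. Writing $p_e^{(n)} \coloneqq \tfrac12\left(1 - \left\|p\rho^{\otimes n} - q\sigma^{\otimes n}\right\|_1\right)$ for the optimal error probability with $n$ samples, the central claim I would isolate is
\begin{equation}
p_e^{(n)}\left(1 - p_e^{(n)}\right) \geq pq\, F\!\left(\rho^{\otimes n}, \sigma^{\otimes n}\right),
\end{equation}
which, after invoking $F(\rho^{\otimes n},\sigma^{\otimes n}) = F(\rho,\sigma)^n$, is the only ingredient that must be proved from scratch.

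First I would establish the underlying two-state bound $\left\|p\rho - q\sigma\right\|_1 \leq \sqrt{1 - 4pq\,F(\rho,\sigma)}$. Since $p_e = \tfrac12(1-\left\|p\rho-q\sigma\right\|_1)$ gives $p_e(1-p_e) = \tfrac14(1 - \left\|p\rho-q\sigma\right\|_1^2)$, this is algebraically equivalent to $p_e(1-p_e) \geq pq\,F(\rho,\sigma)$. To prove it, I would take Uhlmann purifications $|\psi\rangle, |\phi\rangle$ of $\rho, \sigma$ with $\langle\psi|\phi\rangle = \sqrt{F(\rho,\sigma)}$ real and nonnegative, and use contractivity of the trace norm under the partial trace (a CPTP map) to get $\left\|p\rho - q\sigma\right\|_1 \leq \left\|p|\psi\rangle\!\langle\psi| - q|\phi\rangle\!\langle\phi|\right\|_1$. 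The operator on the right is Hermitian and supported on the (at most) two-dimensional span of $|\psi\rangle$ and $|\phi\rangle$; a direct computation of its trace, equal to $p-q$, and its determinant on that subspace, equal to $-pq(1-F(\rho,\sigma))$, shows its two eigenvalues have opposite signs, so its trace norm is $\sqrt{(p-q)^2 + 4pq(1-F(\rho,\sigma))} = \sqrt{1 - 4pq\,F(\rho,\sigma)}$. This is the main obstacle, but it is a short self-contained linear-algebra computation, and it is an equality for pure states (no partial trace is needed), consistent with the stated tightness there.

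With the key inequality in hand, I would close the argument as follows. For any $n$ feasible for the threshold, i.e.\ $p_e^{(n)} \leq \varepsilon$, I first note $p_e^{(n)} \leq \min(p,q) \leq \tfrac12$, which follows from $\left\|p\rho^{\otimes n}-q\sigma^{\otimes n}\right\|_1 \geq |p-q|$; hence both $p_e^{(n)}$ and $\varepsilon$ lie in $[0,\tfrac12]$, where $x\mapsto x(1-x)$ is increasing, so $pq\,F(\rho,\sigma)^n \leq p_e^{(n)}(1-p_e^{(n)}) \leq \varepsilon(1-\varepsilon)$. Rearranging gives $F(\rho,\sigma)^n \leq \varepsilon(1-\varepsilon)/(pq)$, and taking logarithms while using $\ln F(\rho,\sigma) < 0$ (valid for $\rho \neq \sigma$) yields $n \geq \ln(pq/(\varepsilon(1-\varepsilon)))/(-\ln F(\rho,\sigma))$. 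Since this holds for every feasible $n$, it holds for the infimum $n^*$, giving the claimed bound. I would remark that the regime of interest is $\varepsilon \leq \tfrac12$ (the monotonicity step uses this), that the orthogonal case $F(\rho,\sigma)=0$ makes the bound trivially true, and that the equality in the first step for pure states matches the upper bound of \cite[Theorem~II.6]{cheng2024invitation}, explaining the tightness.
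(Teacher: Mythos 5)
Your proposal is correct, and its skeleton is the same as the paper's proof: the feasibility condition gives $1-2\varepsilon \leq \left\|p\rho^{\otimes n}-q\sigma^{\otimes n}\right\|_1$, the weighted Fuchs--van de Graaf inequality $\left\|p\rho-q\sigma\right\|_1 \leq \sqrt{1-4pq\,F(\rho,\sigma)}$ converts this into $F(\rho^{\otimes n},\sigma^{\otimes n}) \leq \varepsilon(1-\varepsilon)/(pq)$, and multiplicativity of fidelity plus a logarithm finishes. Two differences are worth recording. First, the paper imports the key inequality wholesale from \cite[Theorem~5]{Aud14} (applied with $A=p\rho$, $B=q\sigma$), whereas you prove it from scratch: Uhlmann purifications achieving $\langle\psi|\phi\rangle=\sqrt{F(\rho,\sigma)}$, contractivity of the trace norm under partial trace, and the rank-two computation (trace $p-q$, determinant $-pq\left(1-F(\rho,\sigma)\right)$, hence trace norm $\sqrt{(p-q)^2+4pq(1-F(\rho,\sigma))}=\sqrt{1-4pq\,F(\rho,\sigma)}$); this computation is correct and makes the argument self-contained, at the cost of a few extra lines. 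Second, you run the algebra through $p_e(1-p_e)\geq pq\,F$ and monotonicity of $x\mapsto x(1-x)$ on $[0,\tfrac12]$, using $p_e^{(n)}\leq\min(p,q)\leq\tfrac12$; this makes explicit the restriction $\varepsilon\leq\tfrac12$, which the paper's proof uses silently when it squares $1-2\varepsilon\leq\sqrt{1-4pq\,F(\rho,\sigma)}$ (that step needs $1-2\varepsilon\geq 0$). The restriction is genuinely required---for $\varepsilon>\tfrac12$ one has $n^*=1$ while the stated bound can exceed $1$---so your explicitness here is a small gain in rigor rather than a deviation from the paper's argument.
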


\begin{proof}
See Appendix~\ref{app:proof-improved-samp-comp-lower-bnd}.
\end{proof}

Next, we provide alternative sample complexity bounds, which lead to a tighter characterization in terms of the prior probabilities $p$ and $q$, when compared to~\cite[Theorem~II.7 \& Corollary~II.8]{cheng2024invitation}. In fact, the resulting lower and upper bounds differ only by a constant and are thus optimal, as indicated in Theorem~\ref{thm:alterantive_SC_binary} below.

Towards that, we define the following quantity for $p \in (0,1/2]$, $q \equiv 1-p $, and $\varepsilon \in( 0,p)$:
\begin{equation}\label{eq:lambda_*}
    \lambda_* \coloneqq \frac{\ln\!\left( \frac{q}{\varepsilon}\right)}{\ln\!\left( \frac{q}{\varepsilon}\right)+\ln\!\left( \frac{p}{\varepsilon}\right)}.
\end{equation}
Observe that $\lambda_* \in [1/2,1)$ for $p \in(0,1/2]$. Also, define the following quantity for $s\in (0,1)$:
\begin{equation}\label{eq:Q_s}
    Q_s(\rho \Vert \sigma) \coloneqq \Tr\!\left[ \rho^s \sigma^{1-s} \right].
\end{equation}
The quantity $Q_s$ is multiplicative in the following sense, for states $\rho_1$, $\rho_2$, $\sigma_1$, and $\sigma_2$ and $s\in(0,1)$:   
\begin{equation} \label{eq:Q_s_multiplicativity}
     Q_s(\rho_1 \otimes \rho_2 \Vert \sigma_1 \otimes \sigma_2) = Q_s(\rho_1 \Vert \sigma_1) \, Q_s(\rho_2 \Vert \sigma_2). 
\end{equation}

\begin{theorem} \label{thm:alterantive_SC_binary}
   Let $p \in (0,1/2]$, $q \equiv 1-p$, $\varepsilon \in (0, p/64)$, $\rho$, and $\sigma$ be as stated in~\eqref{eq:SC_states_binary}.  Then, we have that 
\begin{equation}
\left \lceil \frac{1}{2} \lambda_* \frac{ \ln\!\left( \frac{p}{\varepsilon} \right)}{-\ln Q_{\lambda^*}(\rho \Vert \sigma)} \right \rceil   \leq  n^*(p,\rho,q,\sigma,\varepsilon) \leq \left \lceil 2 \lambda_* \frac{ \ln\!\left( \frac{p}{\varepsilon} \right)}{-\ln Q_{\lambda^*}(\rho \Vert \sigma)} \right \rceil, 
\end{equation}
where $\lambda_*$ and $Q_{\lambda_*}(\cdot \Vert \cdot)$ are defined in~\eqref{eq:lambda_*} and~\eqref{eq:Q_s}, respectively. 

Consequently, we have that 
\begin{equation}
   n^*(p,\rho,q,\sigma,\varepsilon) = \Theta\! \left (\lambda_* \frac{ \ln\!\left( \frac{p}{\varepsilon} \right)}{-\ln Q_{\lambda^*}(\rho \Vert \sigma)} \right ).
\end{equation}
\end{theorem}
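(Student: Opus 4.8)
The plan is to establish the two matching bounds on $n^*$ separately and then read off the $\Theta$ statement directly from the explicit constants $1/2$ and $2$; since these hold for every admissible $\varepsilon$, the stronger ``for all $x>0$'' form of $\Theta$ described in the Preliminaries follows at once. The upper bound is a short Chernoff-type estimate, whereas the lower bound, which is the crux, comes from a data-processing argument for the Petz--R\'enyi divergence at the specific order $\lambda_*$. Throughout I write $p_e(n)\coloneqq \tfrac12\rbra*{1-\Abs*{p\rho^{\otimes n}-q\sigma^{\otimes n}}_1}$ for the optimal error.

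For the upper bound I would invoke the standard quantum Chernoff upper estimate $\tfrac12\rbra*{1-\Abs*{pA-qB}_1}\le p^{s}q^{1-s}\Tr[A^{s}B^{1-s}]$, valid for $A,B\ge 0$ and $s\in[0,1]$. Applying it with $A=\rho^{\otimes n}$, $B=\sigma^{\otimes n}$, and using the multiplicativity~\eqref{eq:Q_s_multiplicativity} of $Q_s$, gives $p_e(n)\le p^{\lambda_*}q^{1-\lambda_*}\,[Q_{\lambda_*}(\rho\Vert\sigma)]^{n}$ at $s=\lambda_*$. Requiring the right-hand side to be at most $\varepsilon$, taking logarithms, and using the defining relation~\eqref{eq:lambda_*} (which is engineered so that $\lambda_*\ln(p/\varepsilon)=(1-\lambda_*)\ln(q/\varepsilon)$, equivalently $(p/\varepsilon)^{\lambda_*}=(q/\varepsilon)^{1-\lambda_*}$) collapses the prior factor and yields the threshold $n\ge 2\lambda_*\ln(p/\varepsilon)/(-\ln Q_{\lambda_*})$. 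Since $n^*$ is the least integer for which $p_e(n)\le\varepsilon$, the stated ceiling upper bound follows, with no constraint on $\varepsilon$ needed here.

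For the lower bound I would fix any $n$ with $p_e(n)\le\varepsilon$ and let $(\alpha,\beta)$ denote the type-I and type-II errors of the optimal test, so that $p\alpha+q\beta=p_e(n)\le\varepsilon$ and hence $\alpha\le\varepsilon/p$ and $\beta\le\varepsilon/q$. The key step is to apply the data-processing inequality for the Petz--R\'enyi divergence $D_{\lambda_*}$ (valid since $\lambda_*\in[1/2,1)\subset(0,1)$, see~\eqref{eq:petz-renyi-div}) to the binary classical distributions $(1-\alpha,\alpha)$ and $(\beta,1-\beta)$ produced by the optimal measurement. Because the order lies below $1$, the sign of $1/(\lambda_*-1)$ flips data processing into $Q_{\lambda_*}(\rho^{\otimes n}\Vert\sigma^{\otimes n})\le (1-\alpha)^{\lambda_*}\beta^{1-\lambda_*}+\alpha^{\lambda_*}(1-\beta)^{1-\lambda_*}$; dropping the factors bounded by $1$ and using multiplicativity gives $[Q_{\lambda_*}(\rho\Vert\sigma)]^{n}\le \alpha^{\lambda_*}+\beta^{1-\lambda_*}\le(\varepsilon/p)^{\lambda_*}+(\varepsilon/q)^{1-\lambda_*}$. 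The balancing property of $\lambda_*$ again makes the two terms equal, so the right-hand side is $2(\varepsilon/p)^{\lambda_*}$, and taking logarithms gives $n\ge\rbra*{\lambda_*\ln(p/\varepsilon)-\ln2}/(-\ln Q_{\lambda_*})$.

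The final quantitative point, and the place where the hypothesis $\varepsilon<p/64$ enters, is to absorb the additive $\ln 2$: since $\lambda_*\ge 1/2$ and $p/\varepsilon>64$, one has $\lambda_*\ln(p/\varepsilon)\ge\tfrac12\ln 64=3\ln 2$, whence $\lambda_*\ln(p/\varepsilon)-\ln 2\ge\tfrac12\lambda_*\ln(p/\varepsilon)$, and the stated ceiling lower bound follows after using $n^*\in\mathbb{N}$. I expect the main obstacle to be precisely this lower-bound chain: electing to data-process at the order $\lambda_*$ rather than at an arbitrary $s$, tracking the correct (order-below-one) direction of the resulting $Q_{\lambda_*}$ inequality, and verifying that the error-threshold hypothesis is strong enough to swallow the additive constant while preserving the leading factor $1/2$. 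Everything else is bookkeeping organized around the single identity $\lambda_*\ln(p/\varepsilon)=(1-\lambda_*)\ln(q/\varepsilon)$.
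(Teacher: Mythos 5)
Your proposal is correct, and while your upper bound is essentially identical to the paper's (quantum Chernoff bound of \cite{ACM+07} at $s=\lambda_*$, multiplicativity of $Q_{\lambda_*}$, and the balancing identity $(p/\varepsilon)^{\lambda_*}=(q/\varepsilon)^{1-\lambda_*}$), your lower bound takes a genuinely different route. The paper reduces the quantum problem to a classical one via the Nussbaum--Szko\l{}a distributions $t_{i,j}=\mu_i|\langle\phi_i|\psi_j\rangle|^2$, $r_{i,j}=\nu_j|\langle\phi_i|\psi_j\rangle|^2$ (using $p_e \geq \tfrac12 p_e^c$ and the identity $Q_\lambda(\rho\|\sigma)=\sum_{i,j}t_{i,j}^\lambda r_{i,j}^{1-\lambda}$ from \cite{ANS+08}), and then invokes the recent classical result \cite[Theorem~1]{kazemi2025sample}, which yields $p_e \geq \tfrac18 p^{\lambda_*}q^{1-\lambda_*}(Q_{\lambda_*}(\rho\|\sigma))^{2n}$; the factor $\tfrac12$ in the final bound comes from the exponent $2n$ in that classical estimate, and the hypothesis $\varepsilon<p/64$ is used to absorb the multiplicative $\tfrac18$. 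You instead apply the data-processing inequality for the Petz--R\'enyi divergence at order $\lambda_*\in[1/2,1)$ directly to the binary measurement channel induced by the optimal Helstrom test, correctly noting the inequality reversal for $Q_{\lambda_*}$ when the order is below one, which gives $[Q_{\lambda_*}(\rho\|\sigma)]^n \leq \alpha^{\lambda_*}+\beta^{1-\lambda_*} \leq 2(\varepsilon/p)^{\lambda_*}$ after balancing; here $\varepsilon<p/64$ is used only to absorb an additive $\ln 2$. Both arguments are sound and deliver the stated constants, but yours is more self-contained (no reduction to classical distributions, no external classical theorem) and in fact yields the strictly tighter intermediate bound $n \geq \bigl(\lambda_*\ln(p/\varepsilon)-\ln 2\bigr)/\bigl(-\ln Q_{\lambda_*}(\rho\|\sigma)\bigr)$, avoiding the factor-of-two loss in the exponent that the paper inherits from the classical route; what the paper's route buys is a direct structural link to \cite{kazemi2025sample}, whose form motivated the theorem in the first place.
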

\begin{proof}
To arrive at the upper bound, for all $\lambda \in (0,1)$, consider that
\begin{align}
    \frac{1}{2} \left(1- \left\| p \rho^{\otimes n} - q \sigma^{\otimes n} \right\|_1  \right) &  \leq  p^\lambda q ^{1-\lambda}\Tr\!\left[ \left(\rho^{\otimes n}\right) ^\lambda \left(\sigma^{\otimes n} \right)^{1-\lambda} \right] \\ 
    &= p^\lambda q^{1-\lambda} \left( Q_\lambda(\rho \Vert \sigma) \right)^n, \label{eq:upper_bound_mid}
\end{align}
where the first inequality follows by setting $A= p\rho^{\otimes n}$ and $B=q \sigma^{\otimes n}$ in~\cite[Theorem~1]{ACM+07} (i.e., for operators $A, B \geq 0$ and $0\leq s \leq 1$, we have $\Tr\!\left[ A^s B^{1-s}\right] \geq \Tr\!\left[A+B -|A-B| \right]/2 $), and the equality from~\eqref{eq:Q_s_multiplicativity}.
By choosing $\lambda= \lambda_*$ and $\varepsilon \in (0,p)$, we have that 
\begin{align}
   \frac{ p^{\lambda_*} q^{\lambda_*}}{\varepsilon} &= \left(\frac{p}{\varepsilon}\right)^{\lambda_*} \left(\frac{q}{\varepsilon}\right)^{1-\lambda_*} \\
   &= \left(\frac{p}{\varepsilon}\right)^{2\lambda_*}, \label{eq:p_q_lambda_rel}
\end{align}
where the last equality follows from~\eqref{eq:lambda_*}.

With that, together with~\eqref{eq:upper_bound_mid}, we arrive at 
\begin{equation}
    \frac{1}{2} \left(1- \left\| p \rho^{\otimes n} - q \sigma^{\otimes n} \right\|_1  \right) \leq \varepsilon \left(\frac{p}{\varepsilon}\right)^{2\lambda_*} \left( Q_{\lambda_*}(\rho \Vert \sigma) \right)^n.
\end{equation}
By choosing 
\begin{equation}
    n =\left \lceil 2 \lambda_* \frac{ \ln\!\left( \frac{p}{\varepsilon} \right)}{-\ln Q_{\lambda^*}(\rho \Vert \sigma)} \right \rceil, 
\end{equation}
we conclude that the average error probability satisfies $ \frac{1}{2} \left(1- \left\| p \rho^{\otimes n} - q \sigma^{\otimes n} \right\|_1  \right)  \leq \varepsilon$, completing the proof of the upper bound.

\medskip
For the lower bound, consider the regime $\varepsilon \in (0,p/64)$. Let 
\begin{equation}
    p_e(p,\rho,q,\sigma,n) \coloneqq  \frac{1}{2} \left(1- \left\| p \rho^{\otimes n} - q \sigma^{\otimes n} \right\|_1  \right).
\end{equation}
Let spectral decompositions of the states $\rho$ and $\sigma$ be given by 
\begin{equation}
    \rho = \sum_{i} \mu_i | \phi_i\rangle\!\langle \phi_i|, \quad \quad  \sigma = \sum_{j} \nu_j | \psi_j\rangle\!\langle \psi_j|.
\end{equation}
Then, as in~\cite{ANS+08}, define the following induced probability distributions:
\begin{equation}
    t_{i,j} \coloneqq \mu_i | \langle \phi_i| \psi_j\rangle|^2, \quad \quad r_{i,j} \coloneqq \nu_j | \langle \phi_i| \psi_j\rangle|^2.
\end{equation}

By~\cite[Proposition~1]{ANS+08}, we have that 
\begin{equation} \label{eq:NS_Q_S_Eq}
    Q_\lambda(\rho \Vert \sigma) = \sum_{i,j} t_{i,j}^s r_{i,j}^{1-s}.
\end{equation}
Then, by~\cite[Proposition~2, Eq.~(20)]{ANS+08}, we also have that
\begin{equation}
    p_e(p,\rho,q,\sigma,n) \geq \frac{1}{2}p_e^c (p,t,q,r,n), \label{eq:LB_r1}
\end{equation}
where $p_e^c$ corresponds to the error of binary hypothesis testing in the classical setting of distinguishing probability distributions $t$ and $r$ with $n$ samples of the data; i.e., 
\begin{equation}
    p_e^c (p,t,q,r,n) \coloneqq   \sum_{i,j} \min \! \left\{ p( t_{i,j})^n, q( r_{i,j})^n \right \}.  
\end{equation}
By applying~\cite[Theorem~1]{kazemi2025sample}, we also have that for $\lambda_{*}$ set as in~\eqref{eq:lambda_*},
\begin{align}
    p_e^c (p,t,q,r,n) &\geq \frac{1}{4}  p^{\lambda_*} q^{1- \lambda_*}  \sum_{i,j} (t_{i,j})^{2n\lambda_*} (r_{i,j})^{2n(1-\lambda_*)} \\
    &=\frac{1}{4}  p^{\lambda_*} q^{1- \lambda_*} \left(\sum_{i,j} t_{i,j}^{\lambda_*} r_{i,j}^{1-\lambda_*} \right)^{2n} \\
    &= \frac{1}{4}  p^{\lambda_*} q^{1- \lambda_*} \left( Q_{\lambda_*}(\rho \Vert \sigma)\right)^{2n}, \label{eq:LB_r2}
\end{align}
where the last equality follows by~\eqref{eq:NS_Q_S_Eq}.
Also, we have that 
\begin{align}
    \left(\frac{ p}{\varepsilon} \right)^{\lambda_*} \geq  \left(\frac{ p}{\varepsilon} \right)^{1/2} \geq 8,
\end{align}
because $\lambda_* \geq 1/2 $ and $\varepsilon \in (0,p/64)$.

With that, by combining the inequalities~\eqref{eq:LB_r1} and~\eqref{eq:LB_r2}, we arrive at
\begin{align}
    \varepsilon &\geq  p_e(p,\rho,q,\sigma,n) \\
    & \geq \frac{1}{8}  p^{\lambda_*} q^{1- \lambda_*} \left( Q_{\lambda_*}(\rho \Vert \sigma)\right)^{2n} \\
    & \geq  \left(\frac{\varepsilon} {p}\right)^{\lambda_*} p^{\lambda_*} q^{1- \lambda_*} \left( Q_{\lambda_*}(\rho \Vert \sigma)\right)^{2n}  \\ 
    &= \left(\frac{\varepsilon} {p}\right)^{\lambda_*} \left(\frac{p}{\varepsilon} \right)^{2 \lambda_*}  \varepsilon \left( Q_{\lambda_*}(\rho \Vert \sigma)\right)^{2n} \\
    & =  \left(\frac{p}{\varepsilon} \right)^{\lambda_*}  \varepsilon \left( Q_{\lambda_*}(\rho \Vert \sigma)\right)^{2n},  \label{eq:start_for_channel_proof_Q_s}
\end{align}
where the last equality follows by~\eqref{eq:p_q_lambda_rel}. This implies that the following inequality holds: 
\begin{equation}
    n \geq  \frac{1}{2} \lambda_* \frac{ \ln\!\left( \frac{p}{\varepsilon} \right)}{-\ln Q_{\lambda^*}(\rho \Vert \sigma)},
\end{equation}
which concludes the proof of the lower bound.
\end{proof}

Note that the upper bound in Theorem~\ref{thm:alterantive_SC_binary} is valid for $\varepsilon \in (0,p)$, and it is for the lower bound that we require $\varepsilon \in (0, p/64)$.

\begin{remark}[Optimality of the Sample Complexity of Quantum Hypothesis Testing]
    In~\cite[Corollary~II.8]{cheng2024invitation}, it was shown that for constant $p$ and $q$
    \begin{equation}
         n^*(p,\rho,q,\sigma,\varepsilon) = \Theta\! \left (\frac{ \ln\!\left( \frac{1}{\varepsilon} \right)}{-\ln F(\rho, \sigma)} \right ).
    \end{equation}
Furthermore, it was shown that this is not uniquely characterized by the Uhlmann fidelity $F$ or the Holevo fidelity $F_H$, and in fact can be replaced with a general class of fidelities known as $z$-fidelities (see~\cite[Eq.~(II.42) and Eq.~(II.43)]{cheng2024invitation}). 

From Theorem~\ref{thm:alterantive_SC_binary} above, we see that there is a unique characterization of sample complexity with upper and lower bounds differing only by a factor of four when $\varepsilon \in (0, p/64)$. As such, this characterization is indeed unique and also useful in practical applications where constants make a noticeable difference in the execution of the protocol, along with the challenges imposed by decoherence and noise in quantum hardware. 
\end{remark}

\section{Problem Statement for Query Complexity}

Formally, we state the  definitions of the query complexity of symmetric binary, asymmetric binary, and multiple channel discrimination in the following Definitions~\ref{def:binary_symmetric_C}, \ref{def:binary_asymmetric_C}, and~\ref{def:M-ary_C}, respectively. 
In each case, we define them by employing the error-probability metrics in~\eqref{eq:Helstrom1_C},~\eqref{eq:beta-err-asymm_C}, and~\eqref{eq:error_M-ary_C} and in such a way that the query complexity is equal to the minimum value of $n\in\mathbb{N}$ (number of channel uses) needed to get the error probability below a threshold~$\varepsilon \in [0,1]$. Our definitions for query complexity of channel discrimination are motivated by the definitions of sample complexity introduced in~\cite{cheng2024invitation} for the discrimination of two or multiple states. Also, note that for the binary channel discrimination setting, they are defined similarly in the independent work~\cite{Li_25_query_channel}.

\begin{definition}[Query Complexity of Symmetric Binary Channel Discrimination] \label{def:binary_symmetric_C}
	Let $p\in(0,1)$, $q \equiv 1-p$, and $\varepsilon \in [0,1]$, and let $\cN$ and $\cM$ be quantum channels.
	The query complexity $n^*(p,\cN,q,\cM, \varepsilon)$ of symmetric binary channel discrimination is defined as follows:
	\begin{align}
		n^*(p,\cN,q,\cM, \varepsilon)
\label{eq:def:sample_complexity_symmetric1_C}
		&\coloneqq \inf\left\{ n \in \mathbb{N} : p_e(p, \cN,q,\cM,n) \leq \varepsilon \right \} .
	\end{align}
\end{definition}

\begin{definition}[Query Complexity of Asymmetric Binary Channel Discrimination] 
\label{def:binary_asymmetric_C}
Let $\varepsilon,\delta\in\left[  0,1\right]  $, and let  $\cN$ and $\cM$ be quantum channels. The query complexity $n^{\ast}(\cN,\cM,\varepsilon,\delta)$ of asymmetric binary channel discrimination is
defined as follows:
\begin{equation}
n^{\ast}(\cN,\cM,\varepsilon,\delta)\coloneqq \inf\left\{  n\in\mathbb{N}
:\beta_{\varepsilon} \!\left(\cN^{(n)}\Vert \cM^{(n)} \right)\leq
\delta\right\}  \label{eq:asymm-beta-rewrite_C}  .
\end{equation}

\end{definition}

\begin{definition}[Query Complexity of $M$-ary Channel Discrimination] \label{def:M-ary_C}
	Let $\varepsilon\in[0,1]$, and let $\mathcal{S} \coloneqq  \left( (p_m,  \cN^m)\right)_{m=1}^M $ be an ensemble of $M$ channels.
	The query complexity $n^*(\mathcal{S},\varepsilon)$ of $M$-ary channel discrimination is defined as follows:
	\begin{align} \label{eq:def:sc_M-ary_C}
		n^*(\mathcal{S},\varepsilon) 
		& \coloneqq  \inf\left\{ n \in \mathbb{N} : p_e(\mathcal{S},n) \leq \varepsilon\right\}.
	\end{align}
\end{definition}

\begin{remark}[Equivalent Expressions for Query Complexities] \label{rem:equivalent_QC}
The query complexity \\ $n^*(p,\cN,q,\cM, \varepsilon)$ of symmetric binary quantum channel discrimination has the following equivalent expressions:
	\begin{align}
		& n^*(p,\cN,q,\cM, \varepsilon) \notag \\
& =     \inf_{\{ Q, \cA\},\rho_{R_1A_1}}
		\left\{ n\in \mathbb{N} : p \Tr\!\left[ (I_{R_n B_n} -Q_{R_n B_n} )\rho_{R_n B_n}\right] + q \Tr\!\left[ Q_{R_n B_n} \tau_{R_n B_n}\right]  \leq \varepsilon 
		\right\}
		\\
		&= \inf_{ \cA,\rho_{R_1A_1}} \left\{ n\in\mathbb{N}: \frac12 \left( 1 - \left\| p \rho_{R_n B_n} - q \tau_{R_n B_n} \right\|_1\right) \leq \varepsilon  \right\} \label{eq:def:sample_complexity_symmetric2_C}.
\end{align}
    
 By recalling  the quantity $\beta_{\varepsilon}(\cN^{(n)}\Vert\cM^{( n)})$ defined in~\eqref{eq:beta-err-asymm_C}, 
we can rewrite the sample complexity $n^{\ast}(\cN,\cM,\varepsilon
,\delta)$ of asymmetric binary quantum hypothesis testing in the following two ways:
\begin{align}
n^{\ast}(\cN,\cM,\varepsilon,\delta)  & =\inf_{\{ Q, \cA\},\rho_{R_1A_1}}\left\{
\begin{array}
[c]{c}
n\in\mathbb{N}:
\Tr\!\left[ (I_{R_n B_n} -Q_{R_n B_n} )\rho_{R_n B_n}\right] \leq\varepsilon,\\
\Tr\!\left[ Q_{R_n B_n} \tau_{R_n B_n}\right]\leq\delta
\end{array}
\right\}
\label{eq:asymm-beta-rewrite-both-errs}
\\
& =\inf\left\{  n\in\mathbb{N}:\beta_{\delta}\!\left( \cM^{(n)} \Vert
\cN^{(n)}\right)\leq\varepsilon\right\}  .\label{eq:asymm-beta-rewrite-2}
\end{align}
This follows from reasoning similar to that presented in~\cite[Section~C of supplementary material]{cheng2024invitation} because 
the expression in~\eqref{eq:asymm-beta-rewrite-both-errs} indicates that the query complexity for asymmetric binary quantum channel discrimination can be thought of as the minimum number of queries to the unknown channels needed for the type~I error probability to be below $\varepsilon$ and the type~II error probability to be below $\delta$.

Finally, the query complexity $n^*(\mathcal{S},\varepsilon)$ of $M$-ary quantum channel discrimination can be rewritten as follows:
	\begin{align}
		n^*(\mathcal{S},\varepsilon) 
  & =\inf_{\{ Q, \cA\},\rho_{R_1A_1}} 
		\left\{ n\in \mathbb{N}: \ \sum_{m=1}^M p_m \Tr\!\left[ (I_{R_n B_n} -Q_{R_n B_n}^m )\rho_{R_n B_n}^m\right] \leq \varepsilon
		\right\}.
	\end{align}
    
\end{remark}
\section{Query Complexity of Channel Discrimination}

In this section, we provide lower and upper bounds on the query complexity of quantum channel discrimination, applicable to the three main settings discussed in Section~\ref{Sec:Setup} (namely, symmetric binary channel discrimination, asymmetric binary channel discrimination, and symmetric multiple channel discrimination). 

\subsection{Symmetric Binary Channel Discrimination}

\begin{proposition}[Trivial Cases]\label{Prop:Trivial_cases}
Let $p$, $q$, $\varepsilon$, $\cN$, and $\cM$ be as stated in Definition~\ref{def:binary_symmetric_C}.
    If one of the following conditions hold:
    \begin{enumerate}
        \item  $\varepsilon \in [1/2,1]$,
        \item $\exists\, s\in[0,1]$ such that $\varepsilon \geq p^s q^{1-s}$,
        \item $\exists$ $\psi_{RA}$ with the dimensions of the systems $R$ and $A$ being equal such that \\ 
        $\operatorname{supp}\!\left( \cN_{A \to B} (\psi_{RA}) \right) \cap \operatorname{supp}\!\left( \cM_{A \to B} (\psi_{RA}) \right) = \emptyset$,
    \end{enumerate}
    then 
	\begin{align}
		\label{eq:binary_symmetric1_C}
		n^*(p,\cN,q,\cM, \varepsilon) = 1.
	\end{align}
	If $\cN = \cM$ and $\min\{p,q\} > \varepsilon \in [0,1/2)$, then
	\begin{align}
		\label{eq:binary_symmetric2_C}
		n^*(p,\cN,q,\cM, \varepsilon) = +\infty.
	\end{align}   
\end{proposition}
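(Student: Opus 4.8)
The plan is to treat the four sufficient conditions separately, observing first that since $\mathbb{N} = \{1,2,\ldots\}$ every query complexity obeys $n^* \geq 1$; hence for each of the three conditions claiming $n^* = 1$ it suffices to exhibit a single-query strategy with error probability at most $\varepsilon$, i.e.\ to show $p_e(p,\cN,q,\cM,1) \leq \varepsilon$, and then invoke $n^*\geq 1$ to conclude $n^*=1$.

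For condition~1 I would use the trivial ``always guess the more likely channel'' strategy: at $n=1$, taking the final measurement operator $Q_{R_1B_1} = I$ (resp.\ $Q_{R_1B_1}=0$) gives error probability $q$ (resp.\ $p$), so $p_e(p,\cN,q,\cM,1) \leq \min\{p,q\} \leq 1/2 \leq \varepsilon$. For condition~2 I would apply the inequality of~\cite{ACM+07} already invoked in the proof of Theorem~\ref{thm:alterantive_SC_binary}, namely $\Tr[A^s B^{1-s}] \geq \tfrac12\Tr[A+B-|A-B|]$ for $A,B\geq 0$ and $s\in[0,1]$, with $A = p\,\cN_{A\to B}(\rho_{RA})$ and $B = q\,\cM_{A\to B}(\rho_{RA})$ for an arbitrary input $\rho_{RA}$. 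This yields $\tfrac12\!\left(1 - \left\|p\,\cN(\rho_{RA}) - q\,\cM(\rho_{RA})\right\|_1\right) \leq p^s q^{1-s}\, Q_s\!\left(\cN(\rho_{RA})\Vert\cM(\rho_{RA})\right)$, and since the Petz--R\'enyi divergence is nonnegative on states we have $Q_s \leq 1$, so the right-hand side is at most $p^s q^{1-s} \leq \varepsilon$; taking the infimum over input states gives $p_e(p,\cN,q,\cM,1) \leq p^s q^{1-s} \leq \varepsilon$. For condition~3 I would simply note that orthogonality of the two output supports forces $\left\|p\,\cN(\psi_{RA}) - q\,\cM(\psi_{RA})\right\|_1 = p+q = 1$, so a single query with input $\psi_{RA}$ achieves $p_e(p,\cN,q,\cM,1) = 0 \leq \varepsilon$.

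For the final statement I would argue that when $\cN = \cM$ the two branches of the protocol coincide. By induction on $i$, starting from $\rho_{R_1A_1} = \tau_{R_1A_1}$ and using that the same adaptive channels $\left(\cA_{R_i B_i \to R_{i+1}A_{i+1}}^{(i)}\right)_i$ and the same underlying channel act in both branches, I obtain $\rho_{R_iA_i} = \tau_{R_iA_i}$ for all $i$, and hence $\rho_{R_nB_n} = \tau_{R_nB_n}$ for every strategy and every $n$. Consequently $\left\|p\rho_{R_nB_n} - q\tau_{R_nB_n}\right\|_1 = |p-q|$, so $p_e(p,\cN,q,\cM,n) = \tfrac12(1-|p-q|) = \min\{p,q\}$ for all $n$. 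Since $\min\{p,q\} > \varepsilon$ by hypothesis, the set $\{n\in\mathbb{N} : p_e \leq \varepsilon\}$ is empty, whence $n^*(p,\cN,q,\cM,\varepsilon) = \inf\emptyset = +\infty$.

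The computations are all elementary. The only points requiring a little care are the identity $\tfrac12(1-|p-q|) = \min\{p,q\}$ (which simultaneously shows that condition~1 is in fact a special case of condition~2, via $\min_{s\in[0,1]} p^s q^{1-s} = \min\{p,q\}$, and pins down the $\cN=\cM$ value as exactly $\min\{p,q\}$ rather than merely an upper bound) and the inductive verification that identical inputs propagate unchanged through the adaptive protocol when $\cN=\cM$. I expect no genuine obstacle beyond this bookkeeping.
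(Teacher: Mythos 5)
Your proof is correct and follows essentially the same route as the paper: a single-query strategy for each of the three conditions (trivial guessing, the inequality of \cite{ACM+07} combined with $\Tr[\rho^s\tau^{1-s}]\le 1$, and orthogonal supports forcing zero error), and, for $\cN=\cM$, the observation that the two branches of the protocol coincide so the error is pinned at exactly $\min\{p,q\}>\varepsilon$ for every $n$. The only cosmetic differences are that the paper guesses uniformly at random (error exactly $1/2$) where you guess the likelier channel, and it bounds $\Tr[\rho^s\tau^{1-s}]\le 1$ via H\"older's inequality where you invoke nonnegativity of the Petz--R\'enyi divergence (which strictly covers $s\in(0,1)$; the endpoints $s\in\{0,1\}$ are trivially checked).
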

\begin{proof}
    See Appendix~\ref{app:proof-remark-triv-cond_C}.
\end{proof}

Define the following quantity for $s\in [0,1]$:
\begin{equation} \label{eq:C_N_S}
    C_s(\cN \Vert \cM) \coloneqq - \inf_{\rho_{RA}} \ln \Tr\!\left[ \cN_{A \to B}(\rho_{RA}) ^s \cM_{A \to B}(\rho_{RA})^{1-s}\right].
\end{equation}
\begin{theorem}[Query Complexity: Symmetric Binary Channel Discrimination] \label{theorem:binary_symmetric_C}
	Let $p$, $q$, $\varepsilon$, $\cN$, and $\cM$ be as stated in Definition~\ref{def:binary_symmetric_C} such that the conditions in Proposition~\ref{Prop:Trivial_cases} do not hold.
	Then the following bounds hold:
	\begin{align}
		\label{eq:binary_symmetric3_C}
		\max\left\{ \frac{\ln\!\left( \frac{pq}{\varepsilon \left(1-\varepsilon\right)} \right)}{ -\ln \widehat{F}(\cN,\cM) } ,\frac{1-\frac{\varepsilon(1-\varepsilon)}{pq}}{ \left[d_{\widehat{F}}(\cN,\cM)\right]^2  } \right\} \leq n^*(p,\cN,q,\cM, \varepsilon)
		\leq \left \lceil \inf_{s\in\left[  0,1\right]  }\frac{\ln\!\left(  \frac
{p^{s}q^{1-s}}{\varepsilon}\right)  }{C_s(\cN \Vert  \cM)}\right\rceil,
	\end{align}
where $C_s(\cN \Vert \cM)$ is defined in~\eqref{eq:C_N_S}, $\widehat{F}(\cN,\cM)$ corresponds to the channel fidelity in~\eqref{eq:channel_fidelities} with ${\bf{F}} \equiv \widehat{F}$ in~\eqref{eq:intro-geo-fid-def},
and $d_{\widehat{F}}(\cN, \cM)$ corresponds to the channel divergence in~\eqref{eq:channel_divergences} with ${\bf{D}} \equiv d_{\widehat{F}}$ in~\eqref{eq:d_B_hat}.
\end{theorem}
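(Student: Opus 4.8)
The plan is to prove the two directions separately and to obtain both lower bounds from a single common inequality. For the \textbf{upper bound} I would exhibit a concrete achievability strategy by restricting the infimum defining $p_e$ to parallel (product) strategies, which are a special case of the general adaptive protocol. Fixing any input $\rho_{RA}$ and feeding $\rho_{RA}^{\otimes n}$ into $\cN^{\otimes n}$ versus $\cM^{\otimes n}$ reduces the task to binary state discrimination of $[\cN(\rho_{RA})]^{\otimes n}$ and $[\cM(\rho_{RA})]^{\otimes n}$ with priors $p,q$. Applying the operator inequality $\Tr[A^s B^{1-s}]\geq \tfrac12\Tr[A+B-|A-B|]$ from \cite[Theorem~1]{ACM+07} with $A=p[\cN(\rho_{RA})]^{\otimes n}$ and $B=q[\cM(\rho_{RA})]^{\otimes n}$, together with multiplicativity of $\Tr[(\cdot)^s(\cdot)^{1-s}]$ under tensor powers, gives $p_e \leq p^s q^{1-s}\big(\Tr[\cN(\rho_{RA})^s \cM(\rho_{RA})^{1-s}]\big)^n$. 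Choosing $\rho_{RA}$ to (nearly) attain the infimum turns the trace factor into $e^{-C_s(\cN\Vert\cM)}$ by the definition \eqref{eq:C_N_S}, so $p_e\leq p^s q^{1-s} e^{-n C_s(\cN\Vert\cM)}$; imposing that this be at most $\varepsilon$ and solving for $n$ yields $n\geq \ln(p^s q^{1-s}/\varepsilon)/C_s(\cN\Vert\cM)$, and taking the ceiling and optimizing over $s\in[0,1]$ produces the stated upper bound. Exclusion of the trivial cases of Proposition~\ref{Prop:Trivial_cases} guarantees $\varepsilon < p^s q^{1-s}$, so the numerator is positive.

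For the \textbf{lower bound} the essential point is that adaptivity cannot help too much, which I would quantify using the geometric fidelity. For an arbitrary adaptive strategy I would track $\widehat{F}(\rho_{R_i A_i},\tau_{R_i A_i})$ through the protocol: it starts at $1$ since the inputs coincide, each channel use multiplies it by at least $\widehat{F}(\cN,\cM)$ by the chain rule \eqref{eq:chain_rule_fide}, and each interleaved adaptive channel can only increase it by data processing for $\widehat{D}_{1/2}$. Iterating over the $n$ channel uses gives $\widehat{F}(\rho_{R_n B_n},\tau_{R_n B_n})\geq [\widehat{F}(\cN,\cM)]^n$, valid for every strategy and hence for the optimal one.

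I would then invoke the generalized Fuchs--van de Graaf bound $p_e(1-p_e)\geq pq\,F(\omega_0,\omega_1)$ (equivalently $\Vert p\omega_0-q\omega_1\Vert_1^2\leq 1-4pqF$) applied to the final states, together with $\widehat{F}\leq F$ (which holds since $\widehat{D}_{1/2}\geq \widetilde{D}_{1/2}$). Combined with the previous step this gives the central inequality $p_e(1-p_e)\geq pq\,[\widehat{F}(\cN,\cM)]^n$. Evaluating at $n=n^*$, where $p_e\leq\varepsilon<1/2$ so that $p_e(1-p_e)\leq\varepsilon(1-\varepsilon)$ by monotonicity of $x(1-x)$ on $[0,1/2]$, yields $[\widehat{F}(\cN,\cM)]^{n^*}\leq \varepsilon(1-\varepsilon)/(pq)$. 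Taking logarithms gives the first lower bound $n^*\geq \ln(pq/(\varepsilon(1-\varepsilon)))/(-\ln\widehat{F}(\cN,\cM))$. For the second, I would apply Bernoulli's inequality to $[\widehat{F}(\cN,\cM)]^n = (\sqrt{\widehat{F}(\cN,\cM)})^{2n}\geq 1-2n(1-\sqrt{\widehat{F}(\cN,\cM)}) = 1 - n\,[d_{\widehat{F}}(\cN,\cM)]^2$, using the definition \eqref{eq:d_B_hat}; combined with $[\widehat{F}(\cN,\cM)]^{n^*}\leq\varepsilon(1-\varepsilon)/(pq)$ this rearranges to $n^*\geq (1-\varepsilon(1-\varepsilon)/(pq))/[d_{\widehat{F}}(\cN,\cM)]^2$. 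Taking the maximum of the two lower bounds completes the argument.

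The \textbf{main obstacle} is the lower bound's treatment of adaptivity: the Uhlmann fidelity does not satisfy a clean multiplicative chain rule across an adaptive protocol, so one is forced to work with the geometric fidelity $\widehat{F}$, whose chain rule \eqref{eq:chain_rule_fide} and data processing are precisely what make the per-query contraction factor $\widehat{F}(\cN,\cM)$ emerge; the passage back to $F$ (needed only for the state-level error bound) rests on $\widehat{F}\leq F$. A secondary technical point is justifying the generalized Fuchs--van de Graaf inequality with unequal priors, and, on the achievability side, ensuring the infimum in $C_s(\cN\Vert\cM)$ is nearly attained so that a single reusable input state realizes the claimed one-query contraction.
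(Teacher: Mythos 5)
Your proposal is correct and follows essentially the same route as the paper's proof: for the upper bound, restriction to product strategies plus the quantum Chernoff inequality of \cite{ACM+07} and multiplicativity, giving $p_e \leq p^s q^{1-s} e^{-n C_s(\cN\Vert\cM)}$; for the lower bounds, the inequality $\left\| p\rho - q\sigma\right\|_1^2 \leq 1 - 4pq F(\rho,\sigma)$, the ordering $\widehat{F} \leq F$, and the chain rule \eqref{eq:chain_rule_fide} together with data processing for $\widehat{F}$ under the adaptive maps, yielding the central bound $\widehat{F}(\rho_{R_n B_n},\tau_{R_n B_n}) \geq [\widehat{F}(\cN,\cM)]^n$ and hence $[\widehat{F}(\cN,\cM)]^{n^*} \leq \varepsilon(1-\varepsilon)/(pq)$. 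Your second lower bound, obtained by applying Bernoulli's inequality directly to $[\widehat{F}(\cN,\cM)]^{n} = \bigl(\sqrt{\widehat{F}(\cN,\cM)}\bigr)^{2n}$, is just a mild streamlining of the paper's equivalent step $1-(1-x)^n \leq nx$ applied at the level of root fidelities, and it produces the identical bound.
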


\begin{proof}
    See Appendix~\ref{app:binary_symmetric_C}. 
\end{proof}

\begin{corollary} \label{cor:symmetric_binary}
   Let $p$, $q$, $\varepsilon$, $\cN$, and $\cM$ be as stated in Definition~\ref{def:binary_symmetric_C} such that the conditions in Proposition~\ref{Prop:Trivial_cases} do not hold.
	Then the following bounds hold:
	\begin{align}
		\frac{\ln\!\left( \frac{pq}{\varepsilon \left(1-\varepsilon\right)} \right)}{ -\ln \widehat{F}(\cN,\cM) }  \leq n^*(p,\cN,q,\cM, \varepsilon)
		\leq 
        \left\lceil \frac{2 \ln \!\left( \frac{\sqrt{p q} }{ \varepsilon } \right) }{-\ln  F_H(\cN,\cM)} \right\rceil,
	\end{align}
where $\widehat{F}(\cN,\cM)$ and $F_H(\cN, \cM)$ correspond to channel fidelities in~\eqref{eq:channel_fidelities} with ${\bf{F}} \equiv \widehat{F}$ in~\eqref{eq:intro-geo-fid-def} and ${\bf{F}} \equiv F_H$ in~\eqref{eq:Holevo_fidelity}, respectively. 
\end{corollary}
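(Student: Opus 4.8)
The plan is to obtain both bounds as direct simplifications of Theorem~\ref{theorem:binary_symmetric_C}, so no new machinery is required. For the lower bound, I would observe that the left-hand side of~\eqref{eq:binary_symmetric3_C} is a maximum of two quantities, one of which is precisely $\frac{\ln(pq/(\varepsilon(1-\varepsilon)))}{-\ln\widehat{F}(\cN,\cM)}$. Since a maximum is at least each of its arguments, the stated lower bound follows immediately.

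For the upper bound, the idea is to evaluate the infimum over $s\in[0,1]$ appearing in~\eqref{eq:binary_symmetric3_C} at the single point $s=1/2$, which can only weaken (increase) the bound. The key step is to identify $C_{1/2}(\cN\Vert\cM)$ with the Holevo channel fidelity. Recalling from~\eqref{eq:C_N_S} that $C_{1/2}(\cN\Vert\cM) = -\inf_{\rho_{RA}}\ln\Tr[\sqrt{\cN(\rho_{RA})}\sqrt{\cM(\rho_{RA})}]$ and from~\eqref{eq:Holevo_fidelity} that $\Tr[\sqrt{\rho}\sqrt{\sigma}] = \sqrt{F_H(\rho,\sigma)}$, I would write $\ln\Tr[\sqrt{\cN(\rho_{RA})}\sqrt{\cM(\rho_{RA})}] = \frac{1}{2}\ln F_H(\cN(\rho_{RA}),\cM(\rho_{RA}))$. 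Using the monotonicity of $\ln$ together with the definition of channel fidelity in~\eqref{eq:channel_fidelities}, the infimum over input states passes through the logarithm, giving $C_{1/2}(\cN\Vert\cM) = -\frac{1}{2}\ln F_H(\cN,\cM)$.

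Substituting $s=1/2$ (so that $p^{1/2}q^{1/2}=\sqrt{pq}$) then yields $\frac{\ln(\sqrt{pq}/\varepsilon)}{C_{1/2}(\cN\Vert\cM)} = \frac{2\ln(\sqrt{pq}/\varepsilon)}{-\ln F_H(\cN,\cM)}$, and since the infimum over $s$ is no larger than this particular value, applying the (monotone) ceiling function and then the upper bound of Theorem~\ref{theorem:binary_symmetric_C} closes the argument. I do not anticipate a genuine obstacle here; the only point requiring care is justifying that the state infimum commutes with the logarithm when converting $C_{1/2}$ into the Holevo channel fidelity, which is immediate from the monotonicity of $\ln$ and the definition~\eqref{eq:channel_fidelities}.
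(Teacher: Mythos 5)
Your proposal is correct and is exactly the paper's own argument: the lower bound follows by keeping only the $\widehat{F}$ term in the maximum of Theorem~\ref{theorem:binary_symmetric_C}, and the upper bound follows by substituting $s=1/2$ there and identifying $C_{1/2}(\cN\Vert\cM)=-\tfrac{1}{2}\ln F_H(\cN,\cM)$ via the monotonicity of the logarithm. No gaps to report.
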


\begin{proof}
    The proof follows directly from Theorem~\ref{theorem:binary_symmetric_C} together with the substitution $s=1/2$ in the upper bound therein.
\end{proof}

For classical channels, where the input and output systems of a channel are classical, the query complexity is tight, and it is a function of the classical channel fidelities.
\begin{corollary}[Query Complexity: Classical Channels] \label{Cor:classical_channels}
Let $p$, $q$, $\varepsilon$, $\cN$, and $\cM$ be as stated in Definition~\ref{def:binary_symmetric_C} such that the conditions in Proposition~\ref{Prop:Trivial_cases} do not hold.
   Additionally, let $\cN$ and $\cM$ be classical channels. Then we have that 
    \begin{equation}
         n^*(p,\cN,q,\cM, \varepsilon) = \Theta\!\left( \frac{\ln\!\left(\frac{1}{\varepsilon}\right) }{ -\ln F(\cN,\cM) }  \right),
    \end{equation}
where $F(\cN, \cM)$ corresponds to the channel fidelities in~\eqref{eq:channel_fidelities}.
\end{corollary}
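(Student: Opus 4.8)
The plan is to reduce every channel fidelity appearing in the bounds to a single classical quantity and then read off the $\Theta$ scaling from Corollary~\ref{cor:symmetric_binary}. Writing the classical channels through their transition kernels, $\cN : x \mapsto N(\cdot|x)$ and $\cM : x \mapsto M(\cdot|x)$ with Kraus operators $\sqrt{N(y|x)}\,|y\rangle\!\langle x|$ and $\sqrt{M(y|x)}\,|y\rangle\!\langle x|$, the first point to record is that for \emph{any} input $\rho_{RA}$ the two output states are block-diagonal in the classical output basis of $B$:
\begin{equation}
    \cN_{A\to B}(\rho_{RA}) = \sum_y |y\rangle\!\langle y|_B \otimes \theta_y, \qquad \cM_{A\to B}(\rho_{RA}) = \sum_y |y\rangle\!\langle y|_B \otimes \xi_y,
\end{equation}
where $\theta_y \coloneqq \sum_x N(y|x)\,\rho_x^R$ and $\xi_y \coloneqq \sum_x M(y|x)\,\rho_x^R$, with $\rho_x^R \coloneqq \langle x|_A \rho_{RA} |x\rangle_A$ the diagonal blocks of the input and $p_x \coloneqq \Tr[\rho_x^R]$ a probability distribution. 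Set $F^\ast \coloneqq \min_x \bigl(\sum_y \sqrt{N(y|x)M(y|x)}\bigr)^2$.

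The core of the proof is to show that the geometric, Holevo, and Uhlmann channel fidelities all coincide with $F^\ast$. For the three upper bounds, the deterministic single-letter input $|x\rangle\!\langle x|_A$ (with trivial reference) yields the commuting output distributions $N(\cdot|x)$ and $M(\cdot|x)$, on which all three fidelities equal the squared Bhattacharyya coefficient $\bigl(\sum_y \sqrt{N(y|x)M(y|x)}\bigr)^2$; minimizing over $x$ gives $\widehat F(\cN,\cM) \le F_H(\cN,\cM) \le F(\cN,\cM) \le F^\ast$, where the ordering $\widehat F \le F_H \le F$ follows from the maximality of the geometric R\'enyi divergence together with $\Tr[\sqrt\rho\sqrt\sigma]\le\|\sqrt\rho\sqrt\sigma\|_1$. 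For the matching lower bound it is enough to bound the \emph{smallest} of the three, $\widehat F$, from below, since the ordering then forces all of them to equal $F^\ast$. To this end I would invoke superadditivity of the matrix geometric mean (a consequence of its joint concavity and homogeneity): blockwise,
\begin{equation}
    \theta_y \# \xi_y \geq \sum_x \bigl(N(y|x)\rho_x^R\bigr) \# \bigl(M(y|x)\rho_x^R\bigr) = \sum_x \sqrt{N(y|x)M(y|x)}\,\rho_x^R,
\end{equation}
so that, since the geometric mean of block-diagonal operators is blockwise, $\Tr[\omega^{\cN} \# \omega^{\cM}] = \sum_y \Tr[\theta_y \# \xi_y] \ge \sum_x p_x \sum_y \sqrt{N(y|x)M(y|x)} \ge \sqrt{F^\ast}$. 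The regularization $\rho(\varepsilon) = \rho + \varepsilon I$ in the definition of $\widehat F$ preserves the block structure and only adds nonnegative terms, so this bound survives the infimum over $\varepsilon$, giving $\widehat F(\cN,\cM) \ge F^\ast$ and hence $\widehat F(\cN,\cM) = F_H(\cN,\cM) = F(\cN,\cM) = F^\ast$.

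With the identity $\widehat F(\cN,\cM) = F_H(\cN,\cM) = F(\cN,\cM)$ in hand, substituting into Corollary~\ref{cor:symmetric_binary} collapses its two bounds onto the single divergence $-\ln F(\cN,\cM)$:
\begin{equation}
    \frac{\ln\!\left(\frac{pq}{\varepsilon(1-\varepsilon)}\right)}{-\ln F(\cN,\cM)} \le n^*(p,\cN,q,\cM,\varepsilon) \le \left\lceil \frac{\ln\!\left(\frac{pq}{\varepsilon^2}\right)}{-\ln F(\cN,\cM)} \right\rceil .
\end{equation}
Holding $p$ and $q$ fixed, the upper numerator equals $2\ln(1/\varepsilon) + \ln(pq) \le 2\ln(1/\varepsilon)$ and the lower numerator equals $\ln(1/\varepsilon) - \ln(1/(pq)) + \ln(1/(1-\varepsilon))$; since the nontrivial hypothesis forces $\varepsilon < \min\{p,q\}$ (so that $-\ln F(\cN,\cM)$ is a fixed positive constant and $\ln(1/\varepsilon)$ is bounded below by a positive constant), both bounds have the form $c\,\ln(1/\varepsilon)/(-\ln F(\cN,\cM))$ up to the ceiling and lower-order additive terms, which is precisely $\Theta\!\left(\ln(1/\varepsilon)/(-\ln F(\cN,\cM))\right)$.

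I expect the genuine obstacle to be the lower-bound half of the fidelity identity: showing that a quantum reference cannot render two classical channels more distinguishable than the best single input letter already does. Data processing points the wrong way here---tracing out the reference only \emph{increases} fidelity---so the argument truly rests on the blockwise superadditivity of the geometric mean, which is also the reason the smallest fidelity $\widehat F$ is the correct object to lower bound. A second, routine, point concerns the endpoint $\varepsilon \uparrow \min\{p,q\}$, where $\ln(pq/(\varepsilon(1-\varepsilon)))\to 0$ while the target $\ln(1/\varepsilon)/(-\ln F(\cN,\cM))$ stays bounded away from zero; this finite near-threshold window is absorbed into the constant $c_1$ using the trivial bound $n^* \ge 1$, so that the $\Theta$ characterization holds across the whole admissible range of $\varepsilon$.
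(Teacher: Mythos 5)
Your proposal is correct, but it reaches the key identity $\widehat{F}(\cN,\cM)=F_H(\cN,\cM)=F(\cN,\cM)=\min_x\bigl(\sum_y\sqrt{N(y|x)M(y|x)}\bigr)^2$ by a genuinely different route than the paper. The paper (Lemma~\ref{lem:channel_divergence_classical}) works on the \emph{input} side: since a classical channel dephases its input, the infimum over $\rho_{RA}$ may be restricted to classical--quantum states, and then joint concavity of $\sqrt{F}$ plus data processing collapses the channel fidelity to the worst single letter; the same argument is then repeated (implicitly) for $\sqrt{F_H}$ and $\sqrt{\widehat{F}}$, invoking equality of all fidelities on commuting states. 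You instead work on the \emph{output} side: the block-diagonal structure in the output basis, superadditivity of the matrix geometric mean (applied blockwise over $y$ and superadditively over $x$), and the ordering $\widehat{F}\le F_H\le F$ let you prove the lower bound $\widehat{F}(\cN,\cM)\ge F^\ast$ only for the \emph{smallest} fidelity and get the other two by sandwiching against the trivial single-letter upper bound. The two concavity facts are close cousins---superadditivity of $\Tr[A\,\#\,B]$ is joint concavity of the root geometric fidelity---but your sandwich trick means only one such inequality is needed, whereas the paper needs it for each of the three fidelities; this is a modest economy and arguably cleaner. You are also more careful than the paper on two points the paper glosses over: the survival of the blockwise bound under the $\varepsilon$-regularization in the definition of $\widehat{F}$ (handled correctly, though for full rigor one should apply superadditivity directly to the regularized decomposition $\{N(y|x)\rho_x^R\}_x\cup\{\varepsilon I\}$ so that $\#$ is only ever evaluated on strictly positive operators), and the near-threshold window $\varepsilon\uparrow\min\{p,q\}$, where the lower-bound numerator $\ln\bigl(pq/(\varepsilon(1-\varepsilon))\bigr)$ degenerates to zero and the $\Theta$ lower bound must be rescued by $n^\ast\ge 1$; the paper's one-line proof (``apply Corollary~\ref{cor:symmetric_binary} with $1-\varepsilon\le 1$'') silently absorbs both issues.
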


\begin{proof}
    The proof follows by applying 
Corollary~\ref{cor:symmetric_binary} along with $1-\varepsilon \leq 1$ and identifying that, for classical channels with classical inputs, Lemma~\ref{lem:channel_divergence_classical} implies the following equality: $F(\cN,\cM) = \widehat{F}(\cN,\cM) =F_H(\cN,\cM)$.
\end{proof}

\begin{lemma}[Channel Divergences for Classical Channels] \label{lem:channel_divergence_classical}
    Let $\cN$ and $\cM$ be classical channels (classical inputs $x \in \mathcal{X}$ and classical outputs $y \in \mathcal{Y}$), which correspond to conditional probability distributions $P^\cN_{Y|X}$ and $P^\cM_{Y|X}$, respectively. Then the channel fidelities defined in~\eqref{eq:channel_fidelities} reduce as follows:
    \begin{align}
        F(\cN,\cM) & = \inf_{x \in \mathcal{X}} F\!\left( \cN(|x \rangle\!\langle x|),  \cM(|x\rangle\!\langle x|) \right) \\ 
        &= \inf_{x \in \mathcal{X}}  \left( \sum_{y \in \mathcal{Y}} \sqrt{ P^\cN_{Y|X}(y|x) P^\cM_{Y|X}(y|x) } \right)^2.
    \end{align}
Furthermore, in this case,
\begin{equation}
F(\cN,\cM)=F_H(\cN,\cM)=\widehat{F}(\cN,\cM).    
\end{equation}
\end{lemma}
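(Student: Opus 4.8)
The plan is to reduce the optimization over all input states $\rho_{RA}$ (with reference system $R$) to an optimization over the finite set of classical input letters, and then to show the three fidelities coincide on the resulting probability distributions. For the first reduction, I would argue that for a classical channel $\cN$, the completely dephasing structure means that appending a reference and feeding in a general bipartite state gains nothing beyond feeding in classical letters. Concretely, a classical channel $\cN_{A\to B}$ can be written as $\cN(\omega)=\sum_{x,y}P^{\cN}_{Y|X}(y|x)\,\langle x|\omega|x\rangle\,|y\rangle\!\langle y|$, so it factors through the completely dephasing channel $\Delta$ on the input. Since the channel fidelity $\bm F(\cN,\cM)$ is defined by an infimum and fidelity is monotone under the data-processing inequality, one shows the optimal input is (without loss of generality) classical: any bipartite input $\rho_{RA}$ produces outputs $\cN(\rho_{RA}),\cM(\rho_{RA})$ that are classical-quantum (in fact classical) and, by the joint concavity/convexity structure of fidelity together with the fact that both channels dephase the $A$ system identically, the infimum is attained at a state with a definite classical input letter $x$. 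This yields
\begin{equation}
	\bm F(\cN,\cM)=\inf_{x\in\mathcal X}\bm F\!\left(\cN(|x\rangle\!\langle x|),\cM(|x\rangle\!\langle x|)\right).
\end{equation}

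Next I would evaluate each of the three fidelities on the pair of \emph{commuting} (diagonal) output states $\cN(|x\rangle\!\langle x|)=\sum_y P^{\cN}_{Y|X}(y|x)\,|y\rangle\!\langle y|$ and $\cM(|x\rangle\!\langle x|)=\sum_y P^{\cM}_{Y|X}(y|x)\,|y\rangle\!\langle y|$. When two states are simultaneously diagonal, all the standard fidelities collapse to the classical Bhattacharyya-type expression: for $F$ this is immediate from $\|\sqrt{\rho}\sqrt{\sigma}\|_1^2=(\sum_y\sqrt{p_y q_y})^2$; for the Holevo fidelity $F_{\mathrm H}$ one uses $\sqrt{\rho}\sqrt{\sigma}=\sum_y\sqrt{p_y q_y}\,|y\rangle\!\langle y|$ so that $(\Tr[\sqrt{\rho}\sqrt{\sigma}])^2$ gives the same value; and for the geometric fidelity $\widehat F$ one computes the matrix geometric mean on diagonal operators, where $\rho(\eps)\#\sigma(\eps)$ is diagonal with entries $\sqrt{(p_y+\eps)(q_y+\eps)}$ and the $\eps\to 0$ infimum recovers $\sum_y\sqrt{p_y q_y}$. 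Hence all three equal $\big(\sum_y\sqrt{P^{\cN}_{Y|X}(y|x)P^{\cM}_{Y|X}(y|x)}\big)^2$ for each fixed $x$, and taking the infimum over $x$ gives the stated formula and the equality $F(\cN,\cM)=F_{\mathrm H}(\cN,\cM)=\widehat F(\cN,\cM)$.

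The main obstacle I anticipate is rigorously justifying the reduction to classical inputs for the geometric fidelity $\widehat F$, since its channel version is defined through an infimum over $\eps>0$ of a matrix-geometric-mean expression and it is less transparent than $F$ or $F_{\mathrm H}$ that entangled reference inputs cannot help. The cleanest route is to invoke data processing directly: apply the input dephasing channel $\Delta_A$ (which acts as the identity on the outputs of a classical channel, i.e.\ $\cN\circ\Delta=\cN$) and use monotonicity of each channel fidelity under pre-processing to push the optimal input to a classical ensemble, then use the fact that fidelity between a classical mixture of orthogonally-flagged states decomposes so that the worst case is a single letter. Once the reduction to a fixed classical input $x$ is in place, the remaining computation is routine diagonalization, since commuting states trivialize every fidelity to the same Bhattacharyya overlap.
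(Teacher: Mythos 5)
Your overall strategy is the same as the paper's: use the dephasing structure of classical channels to restrict the infimum to classical--quantum inputs, reduce from a mixture over letters to a single worst-case letter, and then observe that on commuting (diagonal) output states all three fidelities collapse to the squared Bhattacharyya overlap. Your explicit diagonal computations for $F$, $F_{\mathrm{H}}$, and $\widehat{F}$ (including the geometric mean of diagonal operators and the $\varepsilon\to 0^+$ limit) are correct and in fact more detailed than what the paper writes for that step.

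There is, however, one step that fails as literally stated, and it occurs exactly at the point you yourself flag as the main obstacle. After dephasing, a general input has the form $\rho_{RA}=\sum_x p(x)\,\rho^x_R\otimes|x\rangle\!\langle x|_A$, so the channel outputs are $\sum_x p(x)\,\rho^x_R\otimes\omega_x$ and $\sum_x p(x)\,\rho^x_R\otimes\nu_x$. These are \emph{not} ``orthogonally-flagged'' mixtures: the conditional reference states $\rho^x$ can have overlapping (even identical) supports, so the states are not block-diagonal with respect to any common decomposition indexed by $x$, and the direct-sum formula for fidelity you invoke in your last paragraph does not apply. The correct mechanism for the mixture-to-single-letter reduction is the one you gesture at in your first paragraph and the one the paper uses: joint concavity of the \emph{root} fidelities, namely $\sqrt{F}\bigl(\sum_x p(x)\,\rho^x\otimes\omega_x,\sum_x p(x)\,\rho^x\otimes\nu_x\bigr)\geq\sum_x p(x)\sqrt{F}(\rho^x\otimes\omega_x,\rho^x\otimes\nu_x)=\sum_x p(x)\sqrt{F}(\omega_x,\nu_x)\geq\inf_x\sqrt{F}(\omega_x,\nu_x)$, and identically for $\sqrt{F_{\mathrm{H}}}$ and $\sqrt{\widehat{F}}$. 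Note that joint concavity of $\Tr[\sqrt{\rho}\sqrt{\sigma}]$ and of $\Tr[\rho\,\#\,\sigma]$ are nontrivial inputs (Lieb's and Ando's concavity theorems, respectively), which is precisely what the paper cites to dispatch $F_{\mathrm{H}}$ and $\widehat{F}$. Alternatively, your flagging idea can be repaired: first degrade the reference by the measure-and-prepare channel $|x\rangle\!\langle x|_R\mapsto\rho^x_R$ applied to the genuinely flagged states $\sum_x p(x)\,|x\rangle\!\langle x|_R\otimes\omega_x$ and $\sum_x p(x)\,|x\rangle\!\langle x|_R\otimes\nu_x$; data processing then lower-bounds the actual output fidelity by the flagged one, and the direct-sum formula applies to the flagged pair for all three fidelities. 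Either repair closes the gap, but one of them is needed.
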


\begin{proof}
    Define the classical channels $\cN$ and $\cM$ as follows: for $\mathcal{R} \in \{ \mathcal{N}, \mathcal{M} \}$
    \begin{align}
        \mathcal{R}(Z_A) \coloneqq \sum_{x,y}  P^{\mathcal{R}}_{{Y|X}}{(y|x)} \ \langle x| Z_A |x \rangle  \ |y\rangle\!\langle y|.
    \end{align}
Then 
\begin{equation}
    \mathcal{R}( \rho_{RA})= \sum_{x,y} P^{\mathcal{R}}_{{Y|X}}{(y|x)} \ \langle x|_A \rho_{RA} |x \rangle_A  \otimes  |y\rangle\!\langle y|.
\end{equation}
With the above structure, for the optimization in~\eqref{eq:channel_fidelities}, it is sufficient to restrict the optimization to only classical-quantum states of the form 
\begin{equation}
    \rho_{RA} = \sum_{x} p(x) \rho^x \otimes |x\rangle\!\langle x|.
\end{equation}
With that, consider that
\begin{align}
    \sqrt{F\!\left( \cN(\rho_{RA}), \cM(\rho_{RA}) \right)} & = \sqrt{F\!\left(\sum_{x} p(x) \rho^x \otimes \cN(|x\rangle\!\langle x|), \sum_{x} p(x) \rho^x \otimes \cM(|x\rangle\!\langle x|)\right)} \\ 
    & \geq \sum_{x} p(x) \sqrt{F\!\left( \rho^x \otimes \cN(|x\rangle\!\langle x|), \rho^x \otimes \cM(|x\rangle\!\langle x|)\right)} \\ 
   & \geq \inf_{x \in \mathcal{X}}\sqrt{F\!\left( \rho^x \otimes \cN(|x\rangle\!\langle x|), \rho^x \otimes \cM(|x\rangle\!\langle x|)\right)} \\ 
   &= \inf_{x \in \mathcal{X}} \sqrt{F\!\left(  \cN(|x\rangle\!\langle x|), \cM(|x\rangle\!\langle x|)\right)}
\end{align}
where the first inequality follows from the joint concavity of root fidelity, and the last equality by the data-processing inequality for fidelity (using the partial trace channel and the channel appending the state $\rho^x$).

This then leads to the following equality for classical channels $\mathcal{N}$ and $\mathcal{M}$:
\begin{equation}
    F(\cN, \cM) = \inf_{x \in \mathcal{X}} {F\!\left(  \cN(|x\rangle\!\langle x|), \cM(|x\rangle\!\langle x|)\right)}.
\end{equation}
Using the fact that the square root is a monotonic function, we arrive at the second equality by recalling the definition of fidelity in~\eqref{eq:fidelity}.

We conclude the proof by noting that all operators in the definition of fidelities are equal for commuting states, which arise with classical channels, together with the joint concavity of $\sqrt{F_H}$ and $\sqrt{\widehat{F}}$, similar to $\sqrt{F}$.
\end{proof}

Next, we also see that it is possible to obtain a tight characterization of the query complexity of classical--quantum channels. To this end, we use amortized channel divergences defined in~\cite{wilde2020amortized} to arrive at the said characterization. Note that~\cite{Li_25_query_channel} has used amortized channel divergences for characterizing query complexity and provided partial results on the special case of classical--quantum channels.  
Before stating the result, we define several quantities that are useful in this context. 
Define the amortized channel divergence in terms of a generalized divergence as follows: for channels $\cN_{A \to B}$ and $\cM_{A \to B}$ 
\begin{equation} \label{eq:amortized_channel_general}
    {\bf{D}}^\cA(\cN \Vert \cM) \coloneqq \sup_{\rho_{RA}, \sigma_{RA}} {\bf{D}}\!\left( \cN_{A \to B} (\rho_{RA}) \Vert \cM_{A \to B} (\sigma_{RA})\right) - {\bf{D}}(\rho_{RA} \Vert \sigma_{RA}).
\end{equation}
A classical--quantum channel $\cN$ can be defined in terms of $( \omega_x)_{x \in \mathcal{X}}$, where $\omega_x$ is a quantum state for all $x \in \mathcal{X}$:
\begin{equation}
    \cN(\cdot)= \sum_{x \in \mathcal{X}} \langle x| \cdot | x \rangle \omega_x.
\end{equation}

\begin{corollary}[Query Complexity: Classical--Quantum Channels] \label{Cor:query_CQ_1}
Let $p$, $q$, $\varepsilon$, $\cN$, and $\cM$ be as stated in Definition~\ref{def:binary_symmetric_C} such that the conditions in Proposition~\ref{Prop:Trivial_cases} do not hold.
   Additionally, let $\cN$ and $\cM$ be classical--quantum channels such that $\cN$ corresponds to $\left( \omega_x\right)_{x \in \mathcal{X}}$, and $\cM$ corresponds to $\left( \nu_x\right)_{x \in \mathcal{X}}$. Then we have that 
    \begin{equation}
        \frac{\ln\!\left( \frac{pq}{\varepsilon \left(1-\varepsilon\right)} \right)}{ -\ln  \min_{x \in \mathcal{X}} {F}_H(\omega_x,\nu_x) }  \leq n^*(p,\cN,q,\cM, \varepsilon)
		\leq 
        \left\lceil \frac{2 \ln \!\left( \frac{\sqrt{p q} }{ \varepsilon } \right) }{-\ln  \min_{x \in \mathcal{X}} {F}_H(\omega_x,\nu_x) } \right\rceil ,
    \end{equation}
where $F_H$ is defined in~\eqref{eq:Holevo_fidelity}.
Consequently, for constant $p$ and $q$, we have that 
\begin{equation}
    n^*(p,\cN,q,\cM, \varepsilon) = \Theta\! \left( \frac{ \ln \!\left( \frac{1 }{ \varepsilon } \right) }{-\ln  \min_{x \in \mathcal{X}} {F}_H(\omega_x,\nu_x) } \right).
\end{equation}
\end{corollary}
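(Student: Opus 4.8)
The plan is to reduce the two channel fidelities appearing in Corollary~\ref{cor:symmetric_binary} to minima of the corresponding \emph{state} fidelities over the classical input alphabet, and then substitute. Concretely, I would first establish the identities $F_H(\cN,\cM)=\min_{x\in\mathcal{X}}F_H(\omega_x,\nu_x)$ and $\widehat{F}(\cN,\cM)=\min_{x\in\mathcal{X}}\widehat{F}(\omega_x,\nu_x)$ by mimicking the proof of Lemma~\ref{lem:channel_divergence_classical}. Since a classical--quantum channel acts only on the diagonal of its input, one has $\cN=\cN\circ\Delta_A$ (and likewise for $\cM$), where $\Delta_A$ is the completely dephasing channel on $A$; hence the optimization in~\eqref{eq:channel_fidelities} may be restricted, without loss of generality, to inputs $\rho_{RA}=\sum_x p(x)\,\rho^x_R\otimes|x\rangle\!\langle x|_A$ that are classical on $A$, which produce the outputs $\sum_x p(x)\,\rho^x_R\otimes\omega_x$ and $\sum_x p(x)\,\rho^x_R\otimes\nu_x$. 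The ``$\geq$'' direction then follows from joint concavity of $\sqrt{F_H}$ (resp.\ $\sqrt{\widehat{F}}$) together with multiplicativity under the common tensor factor $\rho^x_R$, exactly as in Lemma~\ref{lem:channel_divergence_classical}, whereas the ``$\leq$'' direction follows by probing with the single pure input $|x^\ast\rangle\!\langle x^\ast|_A$, since $\cN(|x^\ast\rangle\!\langle x^\ast|)=\omega_{x^\ast}$ and $\cM(|x^\ast\rangle\!\langle x^\ast|)=\nu_{x^\ast}$.

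Given these identities, the upper bound is immediate: substituting $F_H(\cN,\cM)=\min_x F_H(\omega_x,\nu_x)$ into the upper bound of Corollary~\ref{cor:symmetric_binary} reproduces exactly the claimed expression $\lceil 2\ln(\sqrt{pq}/\varepsilon)\,/\,(-\ln \min_x F_H(\omega_x,\nu_x))\rceil$.

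For the lower bound I would argue by reduction to state discrimination. The key observation is that each output $\omega_x$ (resp.\ $\nu_x$) is a fixed state depending only on the classical symbol fed to the channel, and not on the reference or on earlier outcomes; consequently the optimal strategy is to probe repeatedly with the single most distinguishable symbol $x^\ast$, which reduces the problem to discriminating $\omega_{x^\ast}^{\otimes n}$ from $\nu_{x^\ast}^{\otimes n}$. Applying the state lower bound of Theorem~\ref{theorem:binary_symmetric_improved_lower_bnd} to this pair, and using the ordering $F_H\le F$ (so that $\min_x F_H\le\min_x F$), yields $n^*\geq \ln(pq/(\varepsilon(1-\varepsilon)))\,/\,(-\ln\min_x F_H(\omega_x,\nu_x))$. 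Combining the two bounds, whose only discrepancy is in the numerators $\Theta(\ln(1/\varepsilon))$, gives the stated $\Theta$ characterization for constant $p$ and $q$.

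The step I expect to be the main obstacle is the lower bound, specifically the rigorous justification that adaptive, entangled strategies confer no advantage over repeatedly querying the single best input $x^\ast$. A cleaner route that sidesteps this reduction is to feed $\widehat{F}(\cN,\cM)=\min_x\widehat{F}(\omega_x,\nu_x)$ directly into the lower bound of Corollary~\ref{cor:symmetric_binary}, which already accommodates arbitrary adaptive strategies through the geometric R\'enyi chain rule~\eqref{eq:chain_rule_fide}; since $\widehat{F}\le F_H$ for states, this gives a lower bound of the same form with $\min_x\widehat{F}$ in place of $\min_x F_H$. This weaker bound is fully rigorous and still suffices for the $\Theta$ statement, because the constant fidelity value is absorbed into the $\Theta$ constants; matching the displayed $F_H$ in the explicit lower bound is precisely where the no-advantage reduction is needed.
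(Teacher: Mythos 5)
Your reduction of the channel fidelities to minima of state fidelities, $F_H(\cN,\cM)=\min_{x}F_H(\omega_x,\nu_x)$ and $\widehat{F}(\cN,\cM)=\min_{x}\widehat{F}(\omega_x,\nu_x)$ (via $\cN=\cN\circ\Delta_A$, joint concavity, and probing with $|x^\ast\rangle\!\langle x^\ast|$), is sound, and your upper bound is correct and essentially the paper's: the paper runs the best-single-symbol product strategy through the state sample-complexity bound \eqref{eq:SC_from_previous}, which amounts to the same thing as substituting the identity into Corollary~\ref{cor:symmetric_binary}.

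The lower bound, however, contains a genuine gap in both of your routes, and the lower bound is where the content of this corollary lies. Route A rests on the assertion that adaptive, possibly entangled strategies are no better than repeatedly querying the single best symbol $x^\ast$; as you yourself flag, this is unproven, and it is not an innocuous step---the states $\rho_{R_nB_n},\tau_{R_nB_n}$ generated by an adaptive protocol are not of the form $\omega_{x^\ast}^{\otimes n}$, $\nu_{x^\ast}^{\otimes n}$, and ``adaptivity confers no advantage'' for classical--quantum channels is a theorem, not an observation. Route B does not repair this, because it can be \emph{exactly vacuous} in cases where the corollary is nontrivial: the geometric fidelity of two distinct pure states is zero. (Any reverse test preparing two distinct pure states must use classical distributions with disjoint supports, since $\Gamma(p)$ pure forces $\Gamma(e_i)$ to equal that pure state for every $i$ in the support of $p$; alternatively, one can compute \eqref{eq:intro-geo-fid-def} directly for qubit pure states and find $\Tr[\rho(\varepsilon)\#\sigma(\varepsilon)]\to0$ as $\varepsilon\to0^+$.) Hence for classical--quantum channels whose outputs are, say, $\omega_x=|0\rangle\!\langle0|$ and $\nu_x=|+\rangle\!\langle+|$ for all $x$, one has $\min_x\widehat{F}(\omega_x,\nu_x)=0$, so your Route B lower bound reads $\ln\!\left(pq/(\varepsilon(1-\varepsilon))\right)/(+\infty)=0$, whereas $\min_x F_H(\omega_x,\nu_x)=1/4$ and the claimed $\Theta$ statement requires a lower bound growing like $\ln(1/\varepsilon)$. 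So the discrepancy between $\widehat{F}$ and $F_H$ is not a constant that can be ``absorbed into the $\Theta$ constants''; it can be infinite, and even when finite it is channel-dependent, which defeats the purpose of this corollary as a tight characterization.

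What is missing is precisely the paper's key ingredient: an argument at the level of the \emph{amortized} channel divergence. The paper instantiates \eqref{eq:amortized_channel_general} with ${\bf{D}}=-\ln F_H$, whose chain rule \eqref{eq:chain_rule_F_H_amortized} holds by definition of amortization, so the peeling argument of Theorem~\ref{theorem:binary_symmetric_C} (steps \eqref{eq:similar_start}--\eqref{eq:bound_on-error}) goes through with $\widehat{F}$ replaced by $F_H$ and yields $n^\ast\geq\ln\!\left(pq/(\varepsilon(1-\varepsilon))\right)/\left(-\ln F_H^{\cA}(\cN,\cM)\right)$, valid for all adaptive strategies. The crucial, nontrivial step is then the amortization collapse for classical--quantum channels, \cite[Lemma~26]{wilde2020amortized} (valid for Petz--R\'enyi divergences, applied at $\alpha=1/2$), which gives $F_H^{\cA}(\cN,\cM)=\min_{x\in\mathcal{X}}F_H(\omega_x,\nu_x)$. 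That lemma is the rigorous form of your ``no advantage from adaptivity'' claim; without it (or an equivalent argument), neither the displayed $F_H$ lower bound nor the matching $\Theta$ characterization is established.
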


\begin{proof}
For the upper bound, recall from~\cite[Corollary~II.8]{cheng2024invitation} and Theorem~\ref{theorem:binary_symmetric_improved_lower_bnd}, and together with that, the sample complexity of state discrimination as defined in~\eqref{eq:SC_states_binary} satisfies the following inequalities: 
\begin{equation} \label{eq:SC_from_previous}
 \frac{\ln\!\left( \frac{pq}{\varepsilon\left(1-\varepsilon\right)} \right)}{ -\ln {F}_H(\rho, \sigma) }  \leq   n^*(p, \rho,q ,\sigma, \varepsilon) \leq \left\lceil \frac{2 \ln \!\left( \frac{\sqrt{p q} }{ \varepsilon } \right) }{-\ln  {F}_H(\rho,\sigma) } \right\rceil,
\end{equation}
where the upper bound is achieved by product measurements (independent measurements on each copy of the state) and classical post-processing.

Let us consider the following protocol, in which we use a state discrimination protocol for the discrimination of classical--quantum channels. This is done by choosing the input to the unknown channel to be $x \in \mathcal{X}$ for every query to it and then using a state discrimination protocol to discriminate the states $\omega_x^{\otimes n}$ and $\nu_x^{\otimes n}$. 
For this setup, choosing the state discrimination protocol to be the one that achieves the upper bound of~\eqref{eq:SC_from_previous} leads to
\begin{equation}
    n^*(p,\cN,q,\cM, \varepsilon) \leq \left\lceil \frac{2 \ln \!\left( \frac{\sqrt{p q} }{ \varepsilon } \right) }{-\ln  {F}_H(\omega_x,\nu_x) } \right\rceil. 
\end{equation}
By optimizing over all $x \in \mathcal{X}$, we arrive at the desired upper bound. 

\medskip Before going over the proof of the lower bound, let us recall the following properties:
By choosing~${\bf{D}}$ in~\eqref{eq:amortized_channel_general} as $-\ln F_H$ and considering its definition, we have that for all $\rho_{RA}, \sigma_{RA}$
\begin{equation}
    -\ln F_H^\cA(\cN \Vert \cM) \geq {-\ln F_H}\!\left( \cN_{A \to B} (\rho_{RA}) \Vert \cM_{A \to B} (\sigma_{RA})\right) + {\ln F_H}(\rho_{RA} \Vert \sigma_{RA}).
\end{equation}
This leads to the chain rule 
\begin{equation}\label{eq:chain_rule_F_H_amortized}
  { F_H}\!\left( \cN_{A \to B} (\rho_{RA}) \Vert \cM_{A \to B} (\sigma_{RA})\right)    \geq  F_H^\cA(\cN \Vert \cM) F_H(\rho_{RA} \Vert \sigma_{RA}).
\end{equation}

To prove the lower bound, recall~\eqref{eq:def:sample_complexity_symmetric2_C} and use~\eqref{eq:start_for other proofs}, so that we arrive at
\begin{align}
    \frac{\varepsilon\left(1-\varepsilon\right)}{pq} 
    &\geq \inf_{\{ \cA\},\rho_{R_1A_1}}  F\!\left( \rho_{R_n B_n}, \tau_{R_n B_n}\right) \\ 
    &\geq \inf_{\{ \cA\},\rho_{R_1A_1}}  F_H\!\left( \rho_{R_n B_n}, \tau_{R_n B_n}\right),
\end{align}
where the last inequality holds from $F_H \leq F$.

Then, by applying similar proof arguments as in the proof of the first lower bound in Theorem~\ref{theorem:binary_symmetric_C} (i.e., steps \eqref{eq:similar_start}--\eqref{eq:bound_on-error}) by replacing $\widehat{F}$ with $F_H$ together with~\eqref{eq:chain_rule_F_H_amortized} and the data-processing property of Holevo fidelity, we arrive at 
\begin{equation}
    n^*(p,\cN,q, \cM, \varepsilon)
    \geq \frac{\ln\!\left( \frac{pq}{\varepsilon\left(1-\varepsilon\right)} \right)}{ -\ln {F}_H^\cA( \cN, \cM) }. 
\end{equation}
Finally, by applying~\cite[Lemma~26]{wilde2020amortized}, which is valid for classical--quantum channels and for the Petz-R\'enyi divergence ($\alpha \in (0,2)$), with the choice $\alpha=1/2$, we have
\begin{equation}
   {F}_H^\cA( \cN, \cM) = \min_{x \in \mathcal{X}} F_H(\omega_x, \nu_x),
\end{equation}
completing the proof of the lower bound. 
\end{proof}

\begin{remark}[Optimal Strategies for Classical--Quantum Channels]
    We remark that for classical--quantum channels, from Corollary~\ref{Cor:query_CQ_1} and for fixed $p$ and $q$, the optimal asymptotic sample complexity is achieved by a product strategy (picking the best possible input and then applying a tensor-power strategy), and there is no advantage from adaptive strategies. Also, we see that the reference system $R$ is not required to achieve the optimal sample complexity asymptotically.
\end{remark}

\begin{remark}[Special Classes of Channels]
    For some special classes of channels, including environment-seizable channels~\cite[Definition~36]{wilde2020amortized}, the query complexity bounds boil down to quantum hypothesis testing of states with the environment states as the states to be distinguished. In that case, the results derived in~\cite{cheng2024invitation} can be directly used to derive bounds on the query complexity. Furthermore, for the task of discriminating bosonic dephasing channels, it is possible to provide analytical expressions for the query complexity, by utilizing the findings in~\cite{HuangBosonciDephasing24}.
\end{remark}

\subsubsection{Alternative Characterization of Query Complexity}

In this section, we provide an alternative characterization of the query complexity of channel discrimination, by applying Theorem~\ref{thm:alterantive_SC_binary}. Towards this goal, define the following quantities for $s \in (0,1)$:
\begin{align}
    \widehat{Q}_s(\rho \Vert \sigma) & \coloneqq \Tr\!\left[ \sigma \left(\sigma^{-1/2} \rho \sigma^{-1/2} \right)^s \right], \\
    \widehat{Q}_s(\cN \Vert \cM) & \coloneqq \inf_{\rho_{RA}} \widehat{Q}_s\! \left( \cN_{A \to B} (\rho_{RA}) \Vert \cM_{A \to B} (\rho_{RA}) \right) \label{eq:Q_hat_s_channel}, \\
    {Q}_s(\cN \Vert \cM) & \coloneqq \inf_{\rho_{RA}} Q_s\! \left( \cN_{A \to B} (\rho_{RA}) \Vert \cM_{A \to B} (\rho_{RA}) \right) \label{eq:Q_s_channel}, 
\end{align}
where $Q_s(\cdot \Vert \cdot)$ is defined in~\eqref{eq:Q_s}. 
Also note that, by~\eqref{eq:C_N_S},
\begin{equation}
    C_s(\cN \Vert \cM) = -\ln Q_s(\cN \Vert \cM).
\end{equation}
\begin{theorem} \label{prop:alternative_query_C}
    Let $p \in (0,1/2]$, $q \equiv 1-p$, $\varepsilon \in (0, p/64)$, and let $\cN$ and $\cM$ be as stated in Definition~\ref{def:binary_symmetric_C}. We have the following: 
    \begin{equation}
        \left \lceil \frac{1}{2} \lambda_* \frac{ \ln\!\left( \frac{p}{\varepsilon} \right)}{-\ln \widehat{Q}_{\lambda^*}(\cN \Vert \cM)} \right \rceil   \leq  n^*(p,\cN,q,\cM,\varepsilon) \leq \left \lceil 2 \lambda_* \frac{ \ln\!\left( \frac{p}{\varepsilon} \right)}{-\ln Q_{\lambda^*}(\cN \Vert \cM)} \right \rceil, 
    \end{equation}
where $Q_{\lambda_*}(\cdot \Vert \cdot)$, $\widehat{Q}_{\lambda_*}(\cdot \Vert \cdot)$, and $\lambda_*$ are defined in~\eqref{eq:Q_s_channel},~\eqref{eq:Q_hat_s_channel}, and~\eqref{eq:lambda_*}, respectively. 
\end{theorem}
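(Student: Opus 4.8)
The plan is to lift the state-level argument of Theorem~\ref{thm:alterantive_SC_binary} to channels in two separate pieces. For the upper bound I would use a parallel (product) strategy on the best input, reducing to state discrimination so that the Petz-type channel quantity $Q_{\lambda_*}(\cN\Vert\cM)$ from~\eqref{eq:Q_s_channel} appears; for the lower bound I would control an arbitrary adaptive strategy through the chain rule for the geometric R\'enyi divergence~\eqref{eq:chain_rule_geometric_renyi}, which forces the geometric channel quantity $\widehat{Q}_{\lambda_*}(\cN\Vert\cM)$ from~\eqref{eq:Q_hat_s_channel} to appear instead.

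\emph{Upper bound.} I would fix an input $\rho_{RA}$ and run the parallel strategy that prepares $\rho_{RA}^{\otimes n}$ and sends one share of each copy through the unknown channel, so that the task reduces to distinguishing the states $\cN_{A\to B}(\rho_{RA})^{\otimes n}$ and $\cM_{A\to B}(\rho_{RA})^{\otimes n}$. Since a parallel strategy is a special case of the general adaptive protocol of Section~\ref{Sec:Setup}, this shows $n^*(p,\cN,q,\cM,\varepsilon)\le n^*(p,\cN(\rho_{RA}),q,\cM(\rho_{RA}),\varepsilon)$ for each input, and the state upper bound of Theorem~\ref{thm:alterantive_SC_binary} bounds the latter by $\lceil 2\lambda_*\ln(p/\varepsilon)/(-\ln Q_{\lambda_*}(\cN(\rho_{RA})\Vert\cM(\rho_{RA})))\rceil$. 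The bracketed expression is monotone increasing in $Q_{\lambda_*}$ (because $\ln(p/\varepsilon)>0$ as $\varepsilon<p$), so minimizing over $\rho_{RA}$ realizes the defining infimum of $Q_{\lambda_*}(\cN\Vert\cM)$ in~\eqref{eq:Q_s_channel} and yields the claimed upper bound; the one technical point is passing the infimum through the ceiling, which I would handle by fixing any $n$ above the threshold $2\lambda_*\ln(p/\varepsilon)/(-\ln Q_{\lambda_*}(\cN\Vert\cM))$ and choosing a near-optimal input for that $n$.

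\emph{Lower bound.} Exactly as in the proof of the lower bound of Theorem~\ref{thm:alterantive_SC_binary}, I would apply the Nussbaum--Szko\l{}a reduction~\cite{ANS+08} together with the classical bound~\cite[Theorem~1]{kazemi2025sample} to the \emph{single} pair of final states $(\rho_{R_nB_n},\tau_{R_nB_n})$ produced by an arbitrary adaptive strategy, obtaining $p_e(p,\cN,q,\cM,n)\ge \tfrac18\,p^{\lambda_*}q^{1-\lambda_*}\,(Q_{\lambda_*}(\rho_{R_nB_n}\Vert\tau_{R_nB_n}))^{2}$. I would then replace the output quantity by the channel quantity using two ingredients: the ordering $\widehat{Q}_s\le Q_s$ for $s\in(0,1)$ (equivalently $\widehat{D}_s\ge D_s$, since the geometric R\'enyi divergence is maximal), and the multiplicative form of~\eqref{eq:chain_rule_geometric_renyi}, namely $\widehat{Q}_{\lambda_*}(\cN(\rho)\Vert\cM(\sigma))\ge \widehat{Q}_{\lambda_*}(\cN\Vert\cM)\,\widehat{Q}_{\lambda_*}(\rho\Vert\sigma)$. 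Iterating this through the protocol---each channel use contributes a factor $\widehat{Q}_{\lambda_*}(\cN\Vert\cM)$, each adaptive channel only \emph{raises} $\widehat{Q}_{\lambda_*}$ by data processing (the inequality reverses relative to the $s>1$ case because of the negative prefactor $1/(s-1)$), and the common input gives $\widehat{Q}_{\lambda_*}(\rho_{R_1A_1}\Vert\tau_{R_1A_1})=1$---yields $Q_{\lambda_*}(\rho_{R_nB_n}\Vert\tau_{R_nB_n})\ge\widehat{Q}_{\lambda_*}(\rho_{R_nB_n}\Vert\tau_{R_nB_n})\ge(\widehat{Q}_{\lambda_*}(\cN\Vert\cM))^{n}$. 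Substituting and running the same algebra as in~\eqref{eq:start_for_channel_proof_Q_s}, using $\varepsilon<p/64$ so that $(p/\varepsilon)^{\lambda_*}\ge 8$ and the identity $p^{\lambda_*}q^{1-\lambda_*}=\varepsilon\,(p/\varepsilon)^{2\lambda_*}$ from~\eqref{eq:p_q_lambda_rel}, then gives $n\ge \tfrac12\lambda_*\ln(p/\varepsilon)/(-\ln\widehat{Q}_{\lambda_*}(\cN\Vert\cM))$.

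I expect the main obstacle to be the lower bound, specifically justifying that the geometric quantity is the one that survives: the ordinary Petz quantity $Q_{\lambda_*}$ obeys no chain rule under serial composition and so cannot be extracted from the adaptive protocol, whereas $\widehat{Q}_{\lambda_*}$ can via~\eqref{eq:chain_rule_geometric_renyi}. Keeping the inequality directions straight is the delicate bookkeeping---$\widehat{Q}_s\le Q_s$ but data processing \emph{increases} $\widehat{Q}_s$ when $s\in(0,1)$---and I would sanity-check both by specializing to $s=1/2$, where $\widehat{Q}_{1/2}=\sqrt{\widehat{F}}$ and the two statements collapse to the fidelity chain rule~\eqref{eq:chain_rule_fide} and the monotonicity of $\widehat{F}$ recorded in the preliminaries.
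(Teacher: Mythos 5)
Your proposal is correct and follows essentially the same route as the paper's own proof: a product strategy on an optimized input combined with the state-level upper bound of Theorem~\ref{thm:alterantive_SC_binary} for the upper bound, and, for the lower bound, the Nussbaum--Szko\l{}a/classical reduction applied to the final states of an arbitrary adaptive protocol, followed by the ordering $\widehat{Q}_{\lambda_*}\le Q_{\lambda_*}$ and the iterated chain rule~\eqref{eq:chain_rule_geometric_renyi} plus data processing to extract $[\widehat{Q}_{\lambda_*}(\cN\Vert\cM)]^{n}$. Your explicit handling of the infimum-through-ceiling step and the inequality-direction bookkeeping for $s\in(0,1)$ are points the paper treats only implicitly, but they do not change the argument.
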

\begin{proof}
Recall from~\eqref{eq:Helstrom2_C} that
\begin{equation}
    p_e(p, \cN, q, \cM, n) =\inf_{ \cA,\rho_{R_1A_1}} \frac12 \left( 1 - \left\| p \rho_{R_n B_n} - q \tau_{R_n B_n} \right\|_1 \right).
\end{equation}
Then, by choosing $n$ copies of a state $\rho_{RA}$ with $\rho_{R_n B_n}= \left(\cN_{A \to B}(\rho_{RA})\right)^{\otimes n}$ and $\tau_{R_n B_n}= \left(\cM_{A \to B}(\rho_{RA})\right)^{\otimes n}$, one can obtain an achievable protocol $\cA$ by devising a strategy to discriminate between $n$ copies of the states $\cN_{A \to B}(\rho_{RA})$ and $\cM_{A \to B}(\rho_{RA})$. By using the upper bound of Theorem~\ref{thm:alterantive_SC_binary}, if 
\begin{equation}
    m(\rho_{RA},\cN, \cM, p, \varepsilon) \coloneqq  \left \lceil 2 \lambda_* \frac{ \ln\!\left( \frac{p}{\varepsilon} \right)}{-\ln Q_{\lambda^*}\! \left(\cN(\rho_{RA}) \Vert \cM(\rho_{RA}) \right)} \right \rceil,
\end{equation}
the error probability of state discrimination for this specific scenario is at most $\varepsilon$. By optimizing over the state $\rho_{RA}$, it is possible to design a protocol such that the error criterion is met by choosing 
\begin{align}
    n &= \inf_{\rho_{RA}}  m(\rho_{RA},\cN, \cM, \varepsilon) \\
    &=  \left \lceil 2 \lambda_* \frac{ \ln\!\left( \frac{p}{\varepsilon} \right)}{-\ln  \inf_{\rho_{RA}} Q_{\lambda^*}\! \left(\cN(\rho_{RA} \Vert \cM(\rho_{RA}) \right)} \right \rceil \\
    &=  \left \lceil 2 \lambda_* \frac{ \ln\!\left( \frac{p}{\varepsilon} \right)}{-\ln Q_{\lambda^*}\! \left(\cN \Vert \cM \right)} \right \rceil, 
\end{align}
concluding the proof of the upper bound. 

\medskip
Before proving the lower bound, we state the following facts that are useful in the proof. 
By~\cite[Proposition~7.49]{KW21}, we have that for $\alpha \in (0,1) \cup (1,2]$
\begin{equation}
    D_\alpha(\rho \Vert \sigma) \leq \widehat{D}_\alpha(\rho \Vert \sigma),
\end{equation}
where $D_\alpha$ and $\widehat{D}_\alpha$ are defined in~\eqref{eq:petz-renyi-div} and~\eqref{eq:geometric-renyi-div}, respectively.
Using that, for $s\in(0,1)$, we have that 
\begin{equation}\label{eq:Q_s_hat_Q}
    -\ln Q_s(\rho \Vert \sigma) \leq -\ln \widehat{Q}_s(\rho \Vert \sigma).
\end{equation}
Also, by the chain rule for the geometric R\'enyi channel divergence (recall~\eqref{eq:chain_rule_geometric_renyi}), we have that
\begin{equation}
  -\ln   \widehat{Q}_s\!\left( \cN_{A \to B}(\rho_{RA}) \Vert \cM_{A \to B} (\sigma_{RA})\right) \leq -\ln \widehat{Q}_s(\cN \Vert \cM) - \ln \widehat{Q}_s(\rho_{RA} \Vert \sigma_{RA}),
\end{equation}
which then leads to the following multiplicative property (chain rule for $\widehat{Q}_s$):
\begin{equation} \label{eq:chain_multi_Q_s}
     \widehat{Q}_s\!\left( \cN_{A \to B}(\rho_{RA}) \Vert \cM_{A \to B} (\sigma_{RA})\right) \geq \widehat{Q}_s(\cN \Vert \cM) \  \widehat{Q}_s(\rho_{RA} \Vert \sigma_{RA}).
\end{equation}
Also, note that $Q_s$ satisfies the data-processing inequality for all $s\in(0,1)$:
\begin{equation} \label{eq:DPI_QS}
    \widehat{Q}_s(\rho \Vert \sigma) \leq \widehat{Q}_s\!\left( \cN(\rho) \Vert \cN(\sigma) \right).
\end{equation}


Note that the channel discrimination task reduces to the task of discriminating states $\rho_{R_nA_n}$ and $\sigma_{R_n A_n}$, which are generated by protocol $\cA$ and the initial state $\rho_{R_1A_1}$, and optimizing over $\{\cA, \rho_{R_1A_1}\}$ (see Section~\ref{subS:symmetric_binary}). 
Then, recalling~\eqref{eq:def:sample_complexity_symmetric2_C} and using~\eqref{eq:start_for_channel_proof_Q_s}, we have that
\begin{align}
    \varepsilon & \geq  \inf_{\{ \cA\},\rho_{R_1A_1}} \frac12 \left( 1 - \left\| p \rho_{R_n B_n} - q \tau_{R_n B_n} \right\|_1 \right)\\ & =  \inf_{\cA, \rho_{R_1 A_1}} p_e( p, \rho_{R_nA_n}, q, \sigma_{R_n A_n}, 1) \\ 
    & \geq  \inf_{\cA, \rho_{R_1 A_1}} \left(\frac{p}{\varepsilon} \right)^{\lambda_*}  \varepsilon \left[ Q_{\lambda_*}(\rho_{R_n A_n} \Vert \sigma_{R_n A_n})\right]^{2} \label{eq:start_Q_L_CQ} \\ 
    & \geq \inf_{\cA, \rho_{R_1 A_1}} \left(\frac{p}{\varepsilon} \right)^{\lambda_*}  \varepsilon \left[ \widehat{Q}_{\lambda_*}(\rho_{R_n A_n} \Vert \sigma_{R_n A_n})\right]^{2},
\end{align}
where the last inequality follows from~\eqref{eq:Q_s_hat_Q}.


Then, by applying similar proof arguments as in the proof of the first lower bound in Theorem~\ref{theorem:binary_symmetric_C} (i.e., steps \eqref{eq:similar_start}--\eqref{eq:bound_on-error}) by replacing $\widehat{F}$ with $\widehat{Q}_{\lambda_*}$ together with the chain rule in~\eqref{eq:chain_multi_Q_s} and the data-processing property in~\eqref{eq:DPI_QS}, completes the proof of the lower bound. 
\end{proof}

Using the above characterization of query complexity, we improve the query complexity of classical channel discrimination in this setting, in contrast to Corollary~\ref{Cor:classical_channels}, as presented next. 

\begin{corollary}[Query Complexity of Classical Channel Discrimination]
\label{cor:precise-classical}
 Let $p$, $q$, $\varepsilon$, $\cN$, and $\cM$ be as stated in Definition~\ref{def:binary_symmetric_C} such that $p \in (0,1/2]$, $q \equiv 1-p$, $\varepsilon \in (0, p/64)$, and $\cN$ and $\cM$ are classical-input, classical-output channels. Then, we have that
 
    \begin{equation}
        \left \lceil \frac{1}{2} \lambda_* \frac{ \ln\!\left( \frac{p}{\varepsilon} \right)}{-\ln {Q}_{\lambda^*}(\cN \Vert \cM)} \right \rceil   \leq  n^*(p,\cN,q,\cM,\varepsilon) \leq \left \lceil 2 \lambda_* \frac{ \ln\!\left( \frac{p}{\varepsilon} \right)}{-\ln Q_{\lambda^*}(\cN \Vert \cM)} \right \rceil, 
    \end{equation}
where $Q_{\lambda_*}(\cdot \Vert \cdot)$, and $\lambda_*$ are defined in~\eqref{eq:Q_s_channel} and~\eqref{eq:lambda_*}, respectively.  

Consequently, we have that 
\begin{equation}
     n^*(p,\cN,q,\cM,\varepsilon) = \Theta\! \left(\lambda_* \frac{ \ln\!\left( \frac{p}{\varepsilon} \right)}{-\ln {Q}_{\lambda^*}(\cN \Vert \cM)}  \right).
\end{equation}
\end{corollary}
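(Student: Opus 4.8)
The plan is to derive the corollary directly from the general bounds of Theorem~\ref{prop:alternative_query_C} by showing that, for classical-input classical-output channels, the geometric R\'enyi and Petz--R\'enyi channel quantities coincide at the relevant order, namely
\begin{equation}
  \widehat{Q}_{\lambda_*}(\cN \Vert \cM) = Q_{\lambda_*}(\cN \Vert \cM).
\end{equation}
Once this equality is available, the lower bound of Theorem~\ref{prop:alternative_query_C}, which is stated with $\widehat{Q}_{\lambda_*}(\cN \Vert \cM)$ in the denominator, can be rewritten with $Q_{\lambda_*}(\cN \Vert \cM)$, matching the denominator of the upper bound; the two ceilings then differ only by the factor of four separating $\tfrac12 \lambda_*$ from $2\lambda_*$, and the $\Theta$-characterization follows from the strengthened $\Theta$-convention recorded in the preliminaries.

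One direction is automatic. By~\eqref{eq:Q_s_hat_Q} we have $\widehat{Q}_s(\rho \Vert \sigma) \le Q_s(\rho \Vert \sigma)$ for every pair of states, so taking the infimum over inputs in~\eqref{eq:Q_hat_s_channel} and~\eqref{eq:Q_s_channel} yields $\widehat{Q}_{\lambda_*}(\cN \Vert \cM) \le Q_{\lambda_*}(\cN \Vert \cM)$. For the reverse inequality I would mirror the argument of Lemma~\ref{lem:channel_divergence_classical}. Since a classical channel satisfies $\cN = \cN \circ \Delta_A$, where $\Delta_A$ is the completely dephasing channel on the input system $A$, the output $\cN(\rho_{RA})$ depends only on $\Delta_A(\rho_{RA})$, and likewise for $\cM$; hence the infima defining both channel quantities are unchanged if I restrict to classical--quantum inputs $\rho_{RA} = \sum_x p(x)\, \rho^x \otimes |x\rangle\!\langle x|$. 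For such an input, $\cN(\rho_{RA}) = \sum_x p(x)\, \rho^x \otimes \cN(|x\rangle\!\langle x|)$ and similarly for $\cM$, so both output states are convex combinations, with common weights $p(x)$, of the product states $\rho^x \otimes \cN(|x\rangle\!\langle x|)$ and $\rho^x \otimes \cM(|x\rangle\!\langle x|)$.

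The crux is then to strip the reference system $R$. For $s \in (0,1)$ the map $t \mapsto t^s$ is operator concave, which makes $\widehat{Q}_s$ jointly concave; combined with the multiplicativity of $\widehat{Q}_s$ on tensor products and the normalization $\widehat{Q}_s(\rho^x \Vert \rho^x) = 1$, this gives
\begin{align}
  \widehat{Q}_s\!\left(\cN(\rho_{RA}) \Vert \cM(\rho_{RA})\right)
  & \ge \sum_x p(x)\, \widehat{Q}_s\!\left(\rho^x \otimes \cN(|x\rangle\!\langle x|) \Vert \rho^x \otimes \cM(|x\rangle\!\langle x|)\right) \notag \\
  & = \sum_x p(x)\, \widehat{Q}_s\!\left(\cN(|x\rangle\!\langle x|) \Vert \cM(|x\rangle\!\langle x|)\right)
  \ge \min_{x} \widehat{Q}_s\!\left(\cN(|x\rangle\!\langle x|) \Vert \cM(|x\rangle\!\langle x|)\right).
\end{align}
Because $\cN(|x\rangle\!\langle x|)$ and $\cM(|x\rangle\!\langle x|)$ are diagonal in the output basis, they commute, and there the geometric and Petz quantities agree, so $\widehat{Q}_s(\cN(|x\rangle\!\langle x|) \Vert \cM(|x\rangle\!\langle x|)) = Q_s(\cN(|x\rangle\!\langle x|) \Vert \cM(|x\rangle\!\langle x|))$. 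Taking the infimum over inputs gives $\widehat{Q}_{\lambda_*}(\cN \Vert \cM) \ge \min_x Q_{\lambda_*}(\cN(|x\rangle\!\langle x|) \Vert \cM(|x\rangle\!\langle x|))$, while feeding the classical inputs $|x\rangle\!\langle x|$ into~\eqref{eq:Q_s_channel} gives $Q_{\lambda_*}(\cN \Vert \cM) \le \min_x Q_{\lambda_*}(\cN(|x\rangle\!\langle x|) \Vert \cM(|x\rangle\!\langle x|))$. Chaining these with the automatic inequality closes the loop and establishes the desired equality.

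I expect the main obstacle to be the reference-stripping step, that is, identifying and invoking the joint concavity of $\widehat{Q}_s$ for $s \in (0,1)$ together with its multiplicativity on tensor products. These are precisely the properties that allow a potentially entangled reference register to be eliminated in favor of a convex combination of single-letter, classical-input values; confirming that $\widehat{Q}_s$ genuinely enjoys joint concavity in this range (from operator concavity of $t \mapsto t^s$) is the one nonroutine ingredient, whereas the dephasing reduction to classical--quantum inputs and the commuting-state collapse $\widehat{Q}_s = Q_s$ are routine.
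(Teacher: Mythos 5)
Your proposal is correct and follows essentially the same route as the paper: apply Theorem~\ref{prop:alternative_query_C} and reduce to the equality $\widehat{Q}_{\lambda_*}(\cN \Vert \cM) = Q_{\lambda_*}(\cN \Vert \cM)$ for classical channels, proved by mirroring Lemma~\ref{lem:channel_divergence_classical} (dephasing reduction to classical--quantum inputs, stripping the reference, commuting-state collapse). The paper cites quasi-convexity of the Petz and geometric R\'enyi divergences where you use joint concavity of $\widehat{Q}_s$, and multiplicativity rather than data processing to remove the $\rho^x$ factor, but these are interchangeable ingredients within the same argument.
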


\begin{proof}
    The proof follows by applying Proposition~\ref{prop:alternative_query_C} together with the equality 
    $Q_s(\cN \Vert \cM)= \widehat{Q}_s(\cN \Vert \cM)$ for classical channels. The proof of the latter fact follows similarly to the proof of Lemma~\ref{lem:channel_divergence_classical} along with the quasi-convexity of Petz-R\'enyi divergence and geometric R\'enyi divergence.
\end{proof}

Furthermore, we extend our characterization of query complexity to classical--quantum channels in the next corollary. 

\begin{corollary}[Query Complexity of Classical--Quantum Channel Discrimination] \label{Cor:Q_C_CQ_2}
 Let $p$, $q$, $\varepsilon$, $\cN$, and $\cM$ be as stated in Definition~\ref{def:binary_symmetric_C} such that $p \in (0,1/2]$, $q \equiv 1-p$, $\varepsilon \in (0, p/64)$, and $\cN$ and $\cM$ are classical--quantum channels corresponding to $\left( \omega_x\right)_{x \in \mathcal{X}}$ and $\left( \nu_x\right)_{x \in \mathcal{X}}$, respectively. Then, we have that
    \begin{equation}
        \left \lceil \frac{1}{2} \lambda_* \frac{ \ln\!\left( \frac{p}{\varepsilon} \right)}{-\ln  \min_{x \in \mathcal{X}} {Q}_{\lambda^*}(\omega_x \Vert \nu_x)} \right \rceil   \leq  n^*(p,\cN,q,\cM,\varepsilon) \leq \left \lceil 2 \lambda_* \frac{ \ln\!\left( \frac{p}{\varepsilon} \right)}{-\ln \min_{x \in \mathcal{X}} {Q}_{\lambda^*}(\omega_x \Vert \nu_x)}\right \rceil, 
    \end{equation}
where $Q_{\lambda_*}(\cdot \Vert \cdot)$, and $\lambda_*$ are defined in~\eqref{eq:Q_s_channel} and~\eqref{eq:lambda_*}, respectively.  

Consequently, we have that 
\begin{equation}
     n^*(p,\cN,q,\cM,\varepsilon) = \Theta\! \left(\lambda_* \frac{ \ln\!\left( \frac{p}{\varepsilon} \right)}{-\ln \min_{x \in \mathcal{X}} {Q}_{\lambda^*}(\omega_x \Vert \nu_x)}  \right).
\end{equation}
\end{corollary}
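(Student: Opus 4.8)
The plan is to treat Corollary~\ref{Cor:Q_C_CQ_2} as the Petz--R\'enyi analogue of Corollary~\ref{Cor:query_CQ_1}, keeping the same two-pronged structure (a per-letter product protocol for the upper bound, an amortized-divergence meta-converse for the lower bound) but feeding it the sharper state-discrimination estimates of Theorem~\ref{thm:alterantive_SC_binary} in place of the Holevo-fidelity bounds used there. The structural fact being exploited is that, for classical--quantum channels, the relevant Petz--R\'enyi channel quantities collapse to the single-letter optimization $\min_{x \in \mathcal{X}} Q_{\lambda_*}(\omega_x \Vert \nu_x)$.

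\textbf{Upper bound.} First I would use a non-adaptive product strategy: fix a letter $x \in \mathcal{X}$, feed it to the unknown channel on all $n$ uses, and discriminate the resulting states $\omega_x^{\otimes n}$ and $\nu_x^{\otimes n}$ with the optimal state protocol. Since this is a special case of the general adaptive strategy, Definition~\ref{def:binary_symmetric_C} gives $n^*(p,\cN,q,\cM,\varepsilon) \le n^*(p,\omega_x,q,\nu_x,\varepsilon)$ for every $x$, and the state upper bound of Theorem~\ref{thm:alterantive_SC_binary} bounds the right-hand side by $\lceil 2\lambda_* \ln(p/\varepsilon)/(-\ln Q_{\lambda_*}(\omega_x \Vert \nu_x)) \rceil$. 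Choosing $x$ to minimize $Q_{\lambda_*}(\omega_x \Vert \nu_x)$ yields the claimed upper bound. (Equivalently, one can invoke the upper bound of Theorem~\ref{prop:alternative_query_C} and show $Q_{\lambda_*}(\cN \Vert \cM) = \min_x Q_{\lambda_*}(\omega_x \Vert \nu_x)$ by joint quasi-convexity of the Petz--R\'enyi divergence together with its multiplicativity on tensor products, which reduces the optimal input to a point mass exactly as in Lemma~\ref{lem:channel_divergence_classical}.)

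\textbf{Lower bound.} This is the main work and where the argument genuinely departs from a direct application of Theorem~\ref{prop:alternative_query_C}. The error bound \eqref{eq:start_for_channel_proof_Q_s}, applied to the final states $\rho_{R_n B_n}$ and $\tau_{R_n B_n}$ produced by an arbitrary adaptive protocol and rearranged, gives $-\ln Q_{\lambda_*}(\rho_{R_n B_n} \Vert \tau_{R_n B_n}) \ge \tfrac{\lambda_*}{2} \ln(p/\varepsilon)$. To convert this into a bound on $n$, I cannot use the geometric-R\'enyi chain-rule route of Theorem~\ref{prop:alternative_query_C}, since that route yields $\widehat{Q}_{\lambda_*}(\cN \Vert \cM) \le \min_x \widehat{Q}_{\lambda_*}(\omega_x \Vert \nu_x)$, which is in general strictly smaller than the Petz quantity $\min_x Q_{\lambda_*}(\omega_x \Vert \nu_x)$ claimed here. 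Instead I would use the amortized channel divergence \eqref{eq:amortized_channel_general} with $\bm{D} = D_{\lambda_*}$, legitimate because $D_{\lambda_*}$ obeys data processing for $\lambda_* \in [1/2,1)$; the amortization meta-converse of \cite{wilde2020amortized} then telescopes the $n$ channel uses and intervening adaptive channels into $D_{\lambda_*}(\rho_{R_n B_n} \Vert \tau_{R_n B_n}) \le n\, D_{\lambda_*}^{\cA}(\cN \Vert \cM)$. For classical--quantum channels, \cite[Lemma~26]{wilde2020amortized}, valid for the Petz--R\'enyi divergence with $\alpha \in (0,2)$, gives $D_{\lambda_*}^{\cA}(\cN \Vert \cM) = \max_x D_{\lambda_*}(\omega_x \Vert \nu_x)$. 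Multiplying the amortization bound by $(1-\lambda_*) > 0$ and using $-\ln Q_{\lambda_*} = (1-\lambda_*) D_{\lambda_*}$ turns the right-hand side into $n\,[-\ln \min_x Q_{\lambda_*}(\omega_x \Vert \nu_x)]$; chaining with the error bound above gives $\tfrac{\lambda_*}{2}\ln(p/\varepsilon) \le n\,[-\ln \min_x Q_{\lambda_*}(\omega_x \Vert \nu_x)]$, hence the claimed lower bound after taking the ceiling.

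\textbf{Main obstacle and the $\Theta$ claim.} The main obstacle is precisely the lower bound's reliance on the amortized framework rather than a plain chain rule: Petz--R\'enyi has no multiplicative chain rule over channel uses, so adaptivity must be absorbed through amortization, and it is the single-letter identity \cite[Lemma~26]{wilde2020amortized} for classical--quantum channels that makes the amortized Petz--R\'enyi divergence collapse to the same per-letter Petz quantity appearing in the upper bound. Finally, the $\Theta$ statement follows immediately for constant $p$ and $q$, since the matching bounds share the factor $\lambda_* \ln(p/\varepsilon)/(-\ln \min_x Q_{\lambda_*}(\omega_x \Vert \nu_x))$ and differ only by the factor-four gap between $\tfrac12$ and $2$ together with the ceilings.
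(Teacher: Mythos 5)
Your proposal is correct and follows essentially the same route as the paper: the upper bound fixes the best input letter and invokes the state-discrimination protocol of Theorem~\ref{thm:alterantive_SC_binary}, while the lower bound combines the error inequality \eqref{eq:start_Q_L_CQ} with the amortized channel divergence instantiated at the Petz quantity (your $(1-\lambda_*)D_{\lambda_*}$ is identical to the paper's $-\ln Q_{\lambda_*}$) and \cite[Lemma~26]{wilde2020amortized} at $\alpha=\lambda_*$. Your explanation of why the geometric-R\'enyi route of Theorem~\ref{prop:alternative_query_C} cannot yield the claimed Petz-type denominator is exactly the point the paper's terse proof leaves implicit.
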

\begin{proof}
The proof follows similarly to the proof of Corollary~\ref{Cor:query_CQ_1} by making use of Theorem~\ref{thm:alterantive_SC_binary} derived in this work, instead of~\eqref{eq:SC_from_previous}, which was proven in \cite[Corollary~II.8]{cheng2024invitation},  together with the amortized channel divergence with the instantiation $ {\bf{D}} \equiv -\ln Q_{\lambda_*}$, \eqref{eq:start_Q_L_CQ} in the proof of Theorem~\ref{prop:alternative_query_C}, and \cite[Lemma~26]{wilde2020amortized} for $\alpha= \lambda^*$.
\end{proof}


\begin{remark}
     Theorem~\ref{theorem:binary_symmetric_C} and Theorem~\ref{prop:alternative_query_C} provide different characterizations of lower and upper bounds on the query complexity for various ranges of error parameters $\varepsilon$ and priors $p$.  The main novelty of Theorem~\ref{prop:alternative_query_C}  over Theorem~\ref{theorem:binary_symmetric_C} is that its reduction to classical channels and classical--quantum channels improves Corollary~\ref{Cor:classical_channels} and Corollary~\ref{Cor:query_CQ_1}, which are based on Theorem~\ref{theorem:binary_symmetric_C}, for the regime $\varepsilon < p/64$ and for $p \leq 1/2$. In fact, in this regime, Theorem~\ref{prop:alternative_query_C} leads to Corollary~\ref{cor:precise-classical} and Corollary~\ref{Cor:Q_C_CQ_2}, which show that the lower bounds and upper bounds differ only by a constant factor of four. Also, in contrast to the previous bounds, we obtain a tight characterization of query complexity with respect to all variables of interest, including the \textit{priors}. 
\end{remark}

\begin{remark}
We remark from Corollary~\ref{Cor:Q_C_CQ_2} that for sufficiently small error $\varepsilon$ (specifically, $\varepsilon \in (0, p/64)$), the task of discriminating between two classical--quantum channels can be optimally addressed using a product strategy; namely, selecting the best possible input and applying a tensor-power strategy. In this regime, the query complexity increases by at most a factor of four compared to the optimal bound, making this strategy effectively (non-asymptotically) optimal. In contrast, Corollary~\ref{Cor:query_CQ_1} does not offer such guarantees, as the gap between its upper and lower bounds varies with the priors. The results in Corollary~\ref{Cor:query_CQ_1} treat priors as fixed constants, and thus only yield asymptotic optimality. We use the term asymptotic to refer to scenarios where constants are ignored and only the scaling behavior with respect to parameters is considered. In contrast, non-asymptotic refers to results that preserve constant factors and offer concrete optimality guarantees.
   
   Moreover, Corollary~\ref{Cor:Q_C_CQ_2} indicates that access to the reference system is not required to achieve optimal query complexity up to a constant factor of four.
\end{remark}

\subsection{Asymmetric Channel Discrimination}
In this section, we characterize the query complexity of asymmetric channel discrimination of two channels defined in Definition~\ref{def:binary_asymmetric_C}. 
\begin{theorem}[Query Complexity: Asymmetric Channel Discrimination]
\label{thm:binary_asymmetric_C} Fix $\varepsilon,\delta \in (0,1)$, and let $\cN$ and $\cM$ be quantum channels. 
Then the following  bounds hold for the query complexity $n^{\ast}(\cN,\cM,\varepsilon,\delta)$ of asymmetric binary channel discrimination defined in Definition~\ref{def:binary_asymmetric_C}:
\begin{multline}
\max\left\{  \sup_{ \alpha \in (1,2] } \left(  \frac{\ln\!\left(  \frac{\left(  1-\varepsilon\right)
^{\alpha^{\prime}}}{\delta}\right)  }{\widehat{D}_{\alpha}(\cN\Vert\cM
)}\right)  ,\ \sup_{ \alpha \in (1,2] } \left(  \frac{\ln\!\left(  \frac{\left(
1-\delta\right)  ^{\alpha^{\prime}}}{\varepsilon}\right)  }{\widehat
{D}_{\alpha}(\cM \Vert\cN)}\right)  \right\}  \leq n^{\ast}(\cN,\cM,\varepsilon,\delta) \\ \leq 
\min\left\{  \left\lceil
\inf_{\alpha\in\left(  0,1\right)  }\left(  \frac{\ln\!\left(  \frac
{\varepsilon^{\alpha^{\prime}}}{\delta}\right)  }{D_{\alpha}(\cN\Vert\cM
)}\right)  \right\rceil ,\left\lceil \inf_{\alpha\in\left(  0,1\right)
}\left(  \frac{\ln\!\left(  \frac{\delta^{\alpha^{\prime}}}{\varepsilon
}\right)  }{D_{\alpha}(\cM \Vert\cN)}\right)  \right\rceil \right\}  .
\label{eq:binary_asymmetric_samp_comp_C}
\end{multline}
where $\alpha^{\prime}\coloneqq\frac{\alpha}{\alpha-1}$, and the channel divergence of $\widehat{D}_\alpha$ and $D_\alpha$ are defined as in~\eqref{eq:channel_divergences} by choosing the corresponding divergence in place of the generalized divergence ${\bf{D}}$.
\end{theorem}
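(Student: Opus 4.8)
The plan is to control the quantity $\beta_{\varepsilon}(\cN^{(n)}\Vert\cM^{(n)})$ as a function of $n$ and then read off the query complexity from Definition~\ref{def:binary_asymmetric_C}: the converse (lower bound on $n^{\ast}$) comes from upper bounding $-\ln\beta_{\varepsilon}$ via the geometric R\'enyi divergence, and the achievability (upper bound on $n^{\ast}$) from a product strategy analyzed with the Petz--R\'enyi divergence. The two ``swapped'' terms in the $\max$ and $\min$ will then follow by the symmetry identity \eqref{eq:asymm-beta-rewrite-2}.

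\emph{Converse.} Fix $\alpha\in(1,2]$ and take any adaptive strategy $\{Q,\cA\}$ with common input $\rho_{R_1A_1}=\tau_{R_1A_1}$, producing final states $\rho_{R_nB_n}$ and $\tau_{R_nB_n}$. First I would apply the data-processing inequality for $\widehat{D}_\alpha$ (valid on $(0,1)\cup(1,2]$) under the binary measurement $(Q,I-Q)$, passing to the induced Bernoulli pair; dropping the second outcome and using the constraints $\Tr[(I-Q)\rho_{R_nB_n}]\le\varepsilon$ (so $\Tr[Q\rho_{R_nB_n}]\ge 1-\varepsilon$) and $\Tr[Q\tau_{R_nB_n}]=\beta$ gives $\widehat{D}_\alpha(\rho_{R_nB_n}\Vert\tau_{R_nB_n})\ge \alpha'\ln(1-\varepsilon)-\ln\beta$. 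Next I would telescope the channel structure: the chain rule \eqref{eq:chain_rule_geometric_renyi} yields $\widehat{D}_\alpha(\rho_{R_nB_n}\Vert\tau_{R_nB_n})\le \widehat{D}_\alpha(\cN\Vert\cM)+\widehat{D}_\alpha(\rho_{R_nA_n}\Vert\tau_{R_nA_n})$, and one application of data processing under $\cA^{(n-1)}$ removes a layer; iterating down to the common input (which contributes $0$) gives $\widehat{D}_\alpha(\rho_{R_nB_n}\Vert\tau_{R_nB_n})\le n\,\widehat{D}_\alpha(\cN\Vert\cM)$. Combining the two estimates, $-\ln\beta_{\varepsilon}(\cN^{(n)}\Vert\cM^{(n)})\le n\,\widehat{D}_\alpha(\cN\Vert\cM)+\alpha'\ln\tfrac{1}{1-\varepsilon}$ for every $n$. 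Any $n<\ln\!\big((1-\varepsilon)^{\alpha'}/\delta\big)/\widehat{D}_\alpha(\cN\Vert\cM)$ then forces $-\ln\beta_{\varepsilon}<\ln(1/\delta)$, i.e. $\beta_{\varepsilon}>\delta$, so such $n$ cannot meet the threshold; taking the supremum over $\alpha\in(1,2]$ gives the first lower-bound term.

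\emph{Achievability.} Fix $\alpha\in(0,1)$ and an input $\rho_{RA}$ nearly attaining $D_\alpha(\cN\Vert\cM)$, and use the parallel strategy with input $\rho_{RA}^{\otimes n}$ followed by the optimal state test; since this is a special case of an adaptive strategy, it upper bounds $\beta_{\varepsilon}(\cN^{(n)}\Vert\cM^{(n)})$ and reduces the task to discriminating $\omega^{\otimes n}$ from $\zeta^{\otimes n}$, where $\omega\coloneqq\cN(\rho_{RA})$ and $\zeta\coloneqq\cM(\rho_{RA})$. Invoking the one-shot achievability bound $\beta_{\varepsilon}(\omega^{\otimes n}\Vert\zeta^{\otimes n})\le \varepsilon^{\alpha'}\exp\!\big(-D_\alpha(\omega^{\otimes n}\Vert\zeta^{\otimes n})\big)$ (for $\alpha\in(0,1)$; obtained from a likelihood-ratio test), together with tensor additivity $D_\alpha(\omega^{\otimes n}\Vert\zeta^{\otimes n})=nD_\alpha(\omega\Vert\zeta)$ and optimization over $\rho_{RA}$, gives $\beta_{\varepsilon}(\cN^{(n)}\Vert\cM^{(n)})\le \varepsilon^{\alpha'}\exp\!\big(-nD_\alpha(\cN\Vert\cM)\big)$. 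Demanding the right-hand side to be at most $\delta$ yields $n\ge \ln\!\big(\varepsilon^{\alpha'}/\delta\big)/D_\alpha(\cN\Vert\cM)$, so the ceiling of this quantity is an admissible number of queries; taking the infimum over $\alpha\in(0,1)$ gives the first upper-bound term.

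\emph{Symmetrization and the main obstacle.} The remaining terms, with $\cN,\cM$ and $\varepsilon,\delta$ interchanged, follow at once from \eqref{eq:asymm-beta-rewrite-2}, i.e. $n^{\ast}(\cN,\cM,\varepsilon,\delta)=n^{\ast}(\cM,\cN,\delta,\varepsilon)$, applied to the two bounds above; taking the maximum on the converse side and the minimum on the achievability side completes the proof. I expect the genuine difficulty to lie in the converse's handling of \emph{arbitrary adaptive} strategies: it is precisely the subadditive chain rule \eqref{eq:chain_rule_geometric_renyi} for the geometric R\'enyi divergence on $(1,2]$, combined with its data-processing inequality, that collapses the amortized channel quantity to the single-letter $n\,\widehat{D}_\alpha(\cN\Vert\cM)$ for \emph{any} interleaving of adaptive maps. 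The Petz--R\'enyi divergence does not obey such a chain rule, which is exactly why it appears only in the achievability direction, where a fixed product input and tensor-power additivity suffice and no control of adaptivity is needed.
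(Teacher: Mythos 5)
Your proposal is correct and follows essentially the same route as the paper's proof: the converse uses the data-processing inequality for $\widehat{D}_\alpha$ under the final measurement together with the chain rule \eqref{eq:chain_rule_geometric_renyi} telescoped through the adaptive strategy to reach $n\,\widehat{D}_\alpha(\cN\Vert\cM)$, the achievability uses a parallel strategy with the one-shot Petz--R\'enyi bound (the paper cites it as Lemma~A.3 of \cite{cheng2024invitation}) and tensor additivity, and the swapped terms follow from \eqref{eq:asymm-beta-rewrite-2}. Your closing observation about why $\widehat{D}_\alpha$ appears in the converse and $D_\alpha$ in the achievability matches the structural logic of the paper's argument exactly.
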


\begin{proof}
See Appendix~\ref{app:proof_samp_comp_binary_asymmetric_C}.
\end{proof}

For the state discrimination problem, it has been shown that the sample complexity of binary hypothesis testing in the symmetric and the asymmetric case are related as a function of the prior probabilities and fixed error. To this end, this has been shown for the classical setting in~\cite{sample_complexity_classical2024} and, subsequently, for the quantum setting in~\cite{cheng_privateQHT24}.
We next show that the query complexities of symmetric and asymmetric channel discrimination are also related in a similar fashion, and one can then also use the query complexity bounds derived in Proposition~\ref{Prop:Trivial_cases}, Theorem~\ref{theorem:binary_symmetric_C}, and Theorem~\ref{prop:alternative_query_C}  to derive bounds on the query complexity of the asymmetric setting.

\begin{proposition} [Relationship between Query Complexity of Symmetric and Asymmetric Channel Discrimination] \label{prop:Rel_AS_S}
    Let $\varepsilon, \delta \in (0,1)$, and let $\cN$, $\cM$ be quantum channels. Then, we have that 
    \begin{equation}
        n^*_{\mathsf{S}}\!\left( \frac{\delta}{\varepsilon +\delta} , \cN, \frac{\varepsilon}{\varepsilon +\delta}, \cM, \frac{2\varepsilon\delta}{\varepsilon +\delta} \right) \leq n^*_{\mathsf{AS}}\!\left( \cN, \cM, \varepsilon, \delta \right) \leq n^*_{\mathsf{S}}\!\left( \frac{\delta}{\varepsilon +\delta} , \cN, \frac{\varepsilon}{\varepsilon +\delta}, \cM, \frac{\varepsilon\delta}{\varepsilon +\delta} \right), \label{eq:first_rel_S_AS}
    \end{equation}
where $n^*_{\mathsf{S}} \equiv n^*$ as defined in Definition~\ref{def:binary_symmetric_C} and $n^*_{\mathsf{AS}} \equiv n^*$ as defined in Definition~\ref{def:binary_asymmetric_C}.

Furthermore, we also have that
for $p, \varepsilon \in (0,1)$:
 \begin{equation} \label{eq:second_rel_S_AS}
     n^*_{\mathsf{AS}}\!\left( \cN, \cM, \frac{\varepsilon}{p} ,\frac{\varepsilon}{1-p}  \right)   \leq     n^*_{\mathsf{S}}\!\left( p, \cN, 1-p, \cM, \varepsilon \right)  \leq n^*_{\mathsf{AS}}\!\left( \cN, \cM, \frac{\varepsilon}{2p} ,\frac{\varepsilon}{2(1-p)}  \right). 
    \end{equation}
\end{proposition}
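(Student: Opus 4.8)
The plan is to adapt the state-discrimination argument of \cite{sample_complexity_classical2024, cheng_privateQHT24} to the channel setting, exploiting the fact that once an adaptive strategy $\{Q,\cA\}$ together with an input $\rho_{R_1A_1}$ and $n$ channel uses are fixed, the discrimination problem collapses to ordinary binary state discrimination between the two output states $\rho_{R_nB_n}$ and $\tau_{R_nB_n}$. For such a fixed strategy, write $\alpha \coloneqq \Tr[(I_{R_nB_n}-Q_{R_nB_n})\rho_{R_nB_n}]$ for its type~I error and $\beta \coloneqq \Tr[Q_{R_nB_n}\tau_{R_nB_n}]$ for its type~II error. The symmetric error with priors $(p,q)$ is the weighted sum $p\alpha+q\beta$, while the asymmetric quantity $\beta_\varepsilon$ records the least achievable $\beta$ under the hard constraint $\alpha\leq\varepsilon$. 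The whole proof is a dictionary translating between these two descriptions of the same pair $(\alpha,\beta)$, together with the arithmetic fact that the weighted sum controls, and is controlled by, the two errors individually.

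I would establish each of the four inequalities by exhibiting, at the value of $n$ defining one complexity, a strategy feasible for the other. For the first relation I set $p=\tfrac{\delta}{\varepsilon+\delta}$ and $q=\tfrac{\varepsilon}{\varepsilon+\delta}$. For the inner lower bound, take $n=n^*_{\mathsf{AS}}(\cN,\cM,\varepsilon,\delta)$: there is a strategy with $\alpha\leq\varepsilon$ and $\beta\leq\delta$, whence $p\alpha+q\beta\leq p\varepsilon+q\delta=\tfrac{2\varepsilon\delta}{\varepsilon+\delta}$, so this same $n$ certifies $n^*_{\mathsf{S}}(p,\cN,q,\cM,\tfrac{2\varepsilon\delta}{\varepsilon+\delta})\leq n$. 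For the inner upper bound, take $n=n^*_{\mathsf{S}}(p,\cN,q,\cM,\tfrac{\varepsilon\delta}{\varepsilon+\delta})$ and the optimal symmetric strategy, so that $p\alpha+q\beta\leq\tfrac{\varepsilon\delta}{\varepsilon+\delta}$; multiplying by $\varepsilon+\delta$ gives $\delta\alpha+\varepsilon\beta\leq\varepsilon\delta$, and since both summands are nonnegative we read off $\alpha\leq\varepsilon$ and $\beta\leq\delta$ separately, so this strategy is feasible for the asymmetric problem and yields $n^*_{\mathsf{AS}}(\cN,\cM,\varepsilon,\delta)\leq n$.

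The second relation is proved identically with priors $(p,1-p)$. For its lower bound, a symmetric strategy with $p\alpha+(1-p)\beta\leq\varepsilon$ forces $\alpha\leq\varepsilon/p$ and $\beta\leq\varepsilon/(1-p)$, giving $n^*_{\mathsf{AS}}(\cN,\cM,\tfrac{\varepsilon}{p},\tfrac{\varepsilon}{1-p})\leq n^*_{\mathsf{S}}(p,\cN,1-p,\cM,\varepsilon)$; for the upper bound, an asymmetric strategy with $\alpha\leq\tfrac{\varepsilon}{2p}$ and $\beta\leq\tfrac{\varepsilon}{2(1-p)}$ yields $p\alpha+(1-p)\beta\leq\tfrac{\varepsilon}{2}+\tfrac{\varepsilon}{2}=\varepsilon$, delivering the stated bound. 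In all four steps the choice of priors and thresholds is engineered precisely so that these elementary arithmetic identities close exactly, which is why $\tfrac{2\varepsilon\delta}{\varepsilon+\delta}$ (a sum of two equal terms) appears on one side and $\tfrac{\varepsilon\delta}{\varepsilon+\delta}$ on the other.

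The proof is elementary, so the only point requiring genuine care is achievability: the directions that pass from a symmetric or asymmetric \emph{complexity} to an explicit feasible strategy use that the relevant infimum over strategies (over $\{Q,\cA\}$ and $\rho_{R_1A_1}$) is attained, or at least that a near-optimal strategy suffices. This is where I expect the main technical attention to go, since $\beta_\varepsilon$ is itself defined through an infimum and the two hard error constraints must be met simultaneously by a single strategy. In finite dimensions this follows from compactness of the set of strategies together with the Helstrom--Holevo theorem for the final measurement; more generally one argues with strategies achieving error within $\eta>0$ of the optimum and lets $\eta\to0^+$, noting that the resulting slack does not change the integer value of the query complexity. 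I would flag this achievability step explicitly but treat the arithmetic bookkeeping above as the substance of the argument.
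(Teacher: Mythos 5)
Your proposal is correct and follows essentially the same route as the paper's own proof: the same choice of priors $\bigl(\tfrac{\delta}{\varepsilon+\delta},\tfrac{\varepsilon}{\varepsilon+\delta}\bigr)$, and the same elementary dictionary between the weighted-sum error and the two individual error constraints, applied in both directions for each of the two relations. The attainment subtlety you flag (passing from a complexity, defined via infima, to an explicit feasible strategy) is real, but the paper's proof silently does the same thing—it simply "considers a protocol" satisfying the relevant error bounds—so your treatment is, if anything, more careful than the paper's on this point.
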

\begin{proof}
To prove the first inequality in~\eqref{eq:first_rel_S_AS}, consider a protocol characterized by $\{Q, \cA\}$ and a state $\rho_{R_1A_1}$ such that $\Tr\!\left[ (I_{R_n B_n} -Q_{R_n B_n} )\rho_{R_n B_n}\right] \leq \varepsilon $ and $  \Tr\!\left[ Q_{R_n B_n} \tau_{R_n B_n}\right]  \leq \delta$.
Then, we see that 
\begin{equation}
    \frac{\delta} {\varepsilon + \delta} \Tr\!\left[ (I_{R_n B_n} -Q_{R_n B_n} )\rho_{R_n B_n}\right] +  \frac{\varepsilon} {\varepsilon + \delta} \Tr\!\left[ Q_{R_n B_n} \tau_{R_n B_n}\right]  \leq \frac{2 \varepsilon \delta}{\varepsilon + \delta}.
\end{equation}
This shows that this chosen protocol with fixed $n$ channel uses of the unknown channel $\cN$ or $\cM$ satisfies the following inequality by Definition~\ref{def:binary_symmetric_C}:
\begin{equation}
      n^*_{\mathsf{S}}\!\left( \frac{\delta}{\varepsilon +\delta} , \cN, \frac{\varepsilon}{\varepsilon +\delta}, \cM, \frac{2\varepsilon\delta}{\varepsilon +\delta} \right) \leq n.
\end{equation}
With that optimizing over all protocols $\{Q,\cA\}$ such that $\Tr\!\left[ (I_{R_n B_n} -Q_{R_n B_n} )\rho_{R_n B_n}\right] \leq \varepsilon $ and $  \Tr\!\left[ Q_{R_n B_n} \tau_{R_n B_n}\right]  \leq \delta$, we obtain the desired inequality. 

For the second inequality in~\eqref{eq:first_rel_S_AS}, consider a protocol $\{Q,\cA\}$ such that 
\begin{equation} \label{eq:demanded_error}
    \frac{\delta} {\varepsilon + \delta} \Tr\!\left[ (I_{R_n B_n} -Q_{R_n B_n} )\rho_{R_n B_n}\right] +  \frac{\varepsilon} {\varepsilon + \delta} \Tr\!\left[ Q_{R_n B_n} \tau_{R_n B_n}\right]  \leq \frac{ \varepsilon \delta}{\varepsilon + \delta}. 
\end{equation}
This then leads to the following inequalities:
\begin{equation}
    \Tr\!\left[ (I_{R_n B_n} -Q_{R_n B_n} )\rho_{R_n B_n}\right] \leq \varepsilon, \quad \quad \Tr\!\left[ Q_{R_n B_n} \tau_{R_n B_n}\right] \leq \delta.
\end{equation}
Furthermore, this shows that the chosen protocol $\{Q,\cA\}$ with $n$ uses of the unknown channel satisfies the following inequality by Definition~\ref{def:binary_asymmetric_C}:
\begin{equation}
    n^*_{\mathsf{AS}}\!\left( \cN, \cM, \varepsilon, \delta \right) \leq n.
\end{equation}
Then, by optimizing over all $\{Q,\cA\}$ such that~\eqref{eq:demanded_error} holds, we complete the proof of the desired inequality. 

The proof of~\eqref{eq:second_rel_S_AS} follows by using similar proof arguments as used in the proof of~\eqref{eq:first_rel_S_AS}.
\end{proof}

With the use of Proposition~\ref{prop:Rel_AS_S}, now one can obtain query complexity bounds in the asymmetric setting by applying Theorem~\ref{theorem:binary_symmetric_C}, Theorem~\ref{prop:alternative_query_C}, and Corollary~\ref {cor:symmetric_binary}. Using the simplest of those, we next obtain bounds on the query complexity of asymmetric channel discrimination. 
\begin{corollary}
    Let $\varepsilon, \delta \in (0,1)$ along with the following additional constraints: $\varepsilon \delta /(\varepsilon +\delta) \in (0,1/4)$ and for all $s \in (0,1)$, $\varepsilon^s \delta ^{1-s} < 1/2$. Also, let $\cN$ and $\cM$ be quantum channels such that for all $\psi_{RA}$, with the dimensions of the systems $R$ and $A$ being equal,  satisfy $\operatorname{supp}\!\left( \cN_{A \to B} (\psi_{RA}) \right) \cap \operatorname{supp}\!\left( \cM_{A \to B} (\psi_{RA}) \right) \neq \emptyset$. Then, we have the following bounds on the query complexity of asymmetric channel discrimination, as defined in Definition~\ref{def:binary_asymmetric_C}:
    \begin{equation}
    \frac{ \ln \!\left( \frac{1 }{ 2(\varepsilon+\delta) } \right) }{-\ln  \widehat{F}(\cN,\cM)}   \leq   n^*(\cN, \cM, \varepsilon, \delta) \leq  \left\lceil \frac{ \ln \!\left( \frac{1 }{ \varepsilon  \delta} \right) }{-\ln  F_H(\cN,\cM)} \right\rceil ,
    \end{equation}
where $\widehat{F}(\cN,\cM)$ and $F_H(\cN, \cM)$ correspond to channel fidelities in~\eqref{eq:channel_fidelities} with ${\bf{F}} \equiv \widehat{F}$ in~\eqref{eq:intro-geo-fid-def} and ${\bf{F}} \equiv F_H$ in~\eqref{eq:Holevo_fidelity}, respectively. 
\end{corollary}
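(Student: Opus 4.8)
The plan is to combine the symmetric--asymmetric reduction of Proposition~\ref{prop:Rel_AS_S} with the symmetric bounds of Corollary~\ref{cor:symmetric_binary}. Setting $p \coloneqq \frac{\delta}{\varepsilon+\delta}$ and $q \coloneqq \frac{\varepsilon}{\varepsilon+\delta}$, the first chain~\eqref{eq:first_rel_S_AS} sandwiches the asymmetric query complexity as
\begin{equation}
n^*_{\mathsf{S}}\!\left(p,\cN,q,\cM,\tfrac{2\varepsilon\delta}{\varepsilon+\delta}\right) \leq n^*_{\mathsf{AS}}(\cN,\cM,\varepsilon,\delta) \leq n^*_{\mathsf{S}}\!\left(p,\cN,q,\cM,\tfrac{\varepsilon\delta}{\varepsilon+\delta}\right),
\end{equation}
so it suffices to feed each symmetric endpoint into Corollary~\ref{cor:symmetric_binary} and simplify. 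Before doing so, I would verify that neither symmetric instance falls into the trivial cases of Proposition~\ref{Prop:Trivial_cases}, which is exactly what the numerical hypotheses buy us.

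For the upper bound, I would apply the upper estimate of Corollary~\ref{cor:symmetric_binary} with error $\varepsilon' = \frac{\varepsilon\delta}{\varepsilon+\delta}$. Since $pq = \frac{\varepsilon\delta}{(\varepsilon+\delta)^2}$, one computes $\frac{\sqrt{pq}}{\varepsilon'} = (\varepsilon\delta)^{-1/2}$, whence $2\ln\!\frac{\sqrt{pq}}{\varepsilon'} = \ln\!\frac{1}{\varepsilon\delta}$, and the Holevo-fidelity upper bound $\bigl\lceil \ln(1/(\varepsilon\delta))/(-\ln F_H(\cN,\cM))\bigr\rceil$ drops out directly. For the lower bound, I would apply the lower estimate of Corollary~\ref{cor:symmetric_binary} with error $\varepsilon'' = \frac{2\varepsilon\delta}{\varepsilon+\delta}$; here $\frac{pq}{\varepsilon''} = \frac{1}{2(\varepsilon+\delta)}$, and using $1-\varepsilon''<1$ gives $\frac{pq}{\varepsilon''(1-\varepsilon'')} > \frac{1}{2(\varepsilon+\delta)}$. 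As $-\ln\widehat{F}(\cN,\cM)>0$, dividing preserves the inequality and yields the geometric-fidelity lower bound $\ln(1/(2(\varepsilon+\delta)))/(-\ln\widehat{F}(\cN,\cM))$. Chaining each with the corresponding side of the displayed sandwich closes both estimates.

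The step that actually consumes the hypotheses --- and the one I would treat most carefully --- is checking that the symmetric instances are non-trivial, so that Corollary~\ref{cor:symmetric_binary} is applicable. The support-overlap assumption on $(\cN,\cM)$ is precisely the negation of condition~3 of Proposition~\ref{Prop:Trivial_cases}, and it also guarantees $\widehat{F}(\cN,\cM)>0$ so that the lower bound is finite. For the lower-bound instance, the constraint $\frac{\varepsilon\delta}{\varepsilon+\delta}\in(0,1/4)$ forces $\varepsilon''=\frac{2\varepsilon\delta}{\varepsilon+\delta}<\tfrac12$, ruling out condition~1; and a short calculation shows $\varepsilon''\geq p^sq^{1-s}$ is equivalent to $\varepsilon^s\delta^{1-s}\geq\tfrac12$, so the hypothesis $\varepsilon^s\delta^{1-s}<\tfrac12$ for all $s\in(0,1)$ rules out condition~2. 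The upper-bound instance, having the smaller error $\varepsilon'=\tfrac12\varepsilon''$, is then non-trivial a fortiori (indeed $\varepsilon'\geq p^sq^{1-s}$ would require $\varepsilon^s\delta^{1-s}\geq1$, which is impossible for $\varepsilon,\delta\in(0,1)$). The remaining work is the routine algebra indicated above; I expect the only genuinely delicate point to be the endpoint behavior $s\in\{0,1\}$ in condition~2, which the monotonicity of $s\mapsto\varepsilon^s\delta^{1-s}$ handles.
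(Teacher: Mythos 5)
Your proposal is correct and follows essentially the same route as the paper's own (one-sentence) proof, which likewise combines Proposition~\ref{prop:Rel_AS_S}, Proposition~\ref{Prop:Trivial_cases}, and Corollary~\ref{cor:symmetric_binary}; your write-up simply supplies the algebra and non-triviality checks that the paper leaves implicit, and both computations ($2\ln(\sqrt{pq}/\varepsilon')=\ln(1/(\varepsilon\delta))$ and $pq/\varepsilon''=1/(2(\varepsilon+\delta))$) are verified correct. The only hairline issue—shared by the paper itself, whose hypothesis quantifies over $s\in(0,1)$ while condition~2 of Proposition~\ref{Prop:Trivial_cases} quantifies over $s\in[0,1]$—is the equality case $\max\{\varepsilon,\delta\}=\tfrac{1}{2}$, where monotonicity plus continuity only gives $\varepsilon^s\delta^{1-s}\leq\tfrac{1}{2}$ at an endpoint and the lower-bound symmetric instance can become trivial; but there $\varepsilon+\delta\geq\tfrac{1}{2}$ forces $\ln\!\left(\tfrac{1}{2(\varepsilon+\delta)}\right)\leq 0$, so the claimed lower bound holds vacuously.
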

\begin{proof}
    The proof for the general setting follows by utilizing Proposition~\ref{prop:Rel_AS_S} together with Proposition~\ref{Prop:Trivial_cases} and Corollary~\ref{cor:symmetric_binary}. 
%
\end{proof}

\subsection{Multiple Channel Discrimination}

In previous sections, our focus was on discriminating two quantum channels, and here we extend our results to the setting of multiple channel discrimination. 
\begin{theorem}[Query Complexity: $M$-ary Channel Discrimination] \label{theorem:sc_M-ary_C}
	Let $n^*(\mathcal{S},\varepsilon)$ be as stated in Definition~\ref{def:M-ary_C}.
	Then,
	\begin{align} \label{eq:sc_M-ary_C}
		\max_{m\neq \bar{m}}  \frac{ \ln\!\left( \frac{p_m p_{\bar m}}{ (p_m + p_{\bar m})\varepsilon } \right) }{ -\ln \widehat{F}(\cN^m, \cN^{\bar m }) }
		\leq 
		n^*(\mathcal{S},\varepsilon) \leq 
		\left\lceil  \max_{m\neq \bar{m}}  \frac{2\ln\!\left( \frac{ M(M-1) \sqrt{p_m} \sqrt{p_{\bar{m}}}  }{ 2\varepsilon } \right) }{ - \ln F\!\left(\cN^{m},\cN^{\bar{m}} \right) } \right\rceil ,
	\end{align}
where $\widehat{F}(\cN,\cM)$ and $F_H(\cN, \cM)$ for two channels $\cN$ and $\cM$ correspond to channel fidelities in~\eqref{eq:channel_fidelities} with ${\bf{F}} \equiv \widehat{F}$ in~\eqref{eq:intro-geo-fid-def} and ${\bf{F}} \equiv F_H$ in~\eqref{eq:Holevo_fidelity}, respectively. 
\end{theorem}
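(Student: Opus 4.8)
The plan is to reduce $M$-ary channel discrimination to pairwise binary channel discrimination and then, for each pair, to invoke the fidelity-based estimates already used in the binary case (Theorem~\ref{theorem:binary_symmetric_C} together with eqs.~\eqref{eq:chain_rule_fide}--\eqref{eq:fidelity_channel_ineq_n}).

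For the lower bound, I would fix a pair $m\neq\bar{m}$ and, from any $M$-ary strategy with POVM $(Q^k)_k$ attaining $p_e(\mathcal{S},n)\leq\varepsilon$, build the binary test $\{Q^m,I-Q^m\}$ (guess $m$ on outcome $Q^m$, else $\bar{m}$) to distinguish the protocol outputs $\rho^m_{R_nB_n}$ and $\rho^{\bar{m}}_{R_nB_n}$. The crucial point is that, because $I-Q^m\geq Q^{\bar{m}}$ and $Q^m\leq I-Q^{\bar{m}}$, the weighted error under the renormalized priors $\tilde{p}=p_m/(p_m+p_{\bar{m}})$ and $\tilde{q}=p_{\bar{m}}/(p_m+p_{\bar{m}})$ is bounded by $(p_m\Tr[(I-Q^m)\rho^m]+p_{\bar{m}}\Tr[(I-Q^{\bar{m}})\rho^{\bar{m}}])/(p_m+p_{\bar{m}})$, whose numerator is a sum of two distinct summands of the $M$-ary error; this gives an effective binary error $\eta\leq\varepsilon/(p_m+p_{\bar{m}})=:\varepsilon'$. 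I would then apply the binary estimate $\eta(1-\eta)\geq\tilde{p}\,\tilde{q}\,F(\rho^m_{R_nB_n},\rho^{\bar{m}}_{R_nB_n})\geq\tilde{p}\,\tilde{q}\,\widehat{F}(\rho^m_{R_nB_n},\rho^{\bar{m}}_{R_nB_n})$ (using $\widehat{F}\leq F$), and iterate the chain rule \eqref{eq:chain_rule_fide} through the adaptive reductions to obtain $\widehat{F}(\rho^m_{R_nB_n},\rho^{\bar{m}}_{R_nB_n})\geq[\widehat{F}(\cN^m,\cN^{\bar{m}})]^n$, exactly as in the first lower bound of Theorem~\ref{theorem:binary_symmetric_C}. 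Monotonicity of $x\mapsto x(1-x)$ on $[0,1/2]$ lets me replace $\eta$ by $\varepsilon'$, and using $1-\varepsilon'\leq1$ together with $\tilde{p}\,\tilde{q}=p_mp_{\bar{m}}/(p_m+p_{\bar{m}})^2$ yields $n\geq\ln(p_mp_{\bar{m}}/((p_m+p_{\bar{m}})\varepsilon))/(-\ln\widehat{F}(\cN^m,\cN^{\bar{m}}))$; maximizing over pairs gives the claimed lower bound.

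For the upper bound, I would use a parallel strategy: prepare a product input, query the unknown channel $n$ times, and apply a collective measurement, reducing the task to $M$-ary discrimination of the output states $\rho^k$. I would bound the resulting minimum error by a sum of pairwise contributions, $p_e\leq\sum_{m<\bar{m}}\sqrt{p_mp_{\bar{m}}}\,\sqrt{F}(\rho^m,\rho^{\bar{m}})$, via a pretty-good-measurement (Montanaro/Barnum--Knill-type) bound, then use multiplicativity of fidelity together with the channel fidelity inequality \eqref{eq:fidelity_channel_ineq_n} to replace each pairwise root fidelity by $[F(\cN^m,\cN^{\bar{m}})]^{n/2}$. Since there are $\binom{M}{2}=M(M-1)/2$ pairs, I would force $p_e\leq\varepsilon$ by requiring $\sqrt{p_mp_{\bar{m}}}\,[F(\cN^m,\cN^{\bar{m}})]^{n/2}\leq2\varepsilon/(M(M-1))$ for every pair; solving each inequality for $n$ and taking the ceiling of the maximum over pairs produces the stated upper bound.

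The main obstacle is the input choice in the upper bound: a single product (or even entangled parallel) input cannot in general simultaneously attain the infimum defining each pairwise channel fidelity $F(\cN^m,\cN^{\bar{m}})$, since the optimal inputs for different pairs typically differ, while the pretty-good-measurement bound uses the outputs of one common input. The delicate step is therefore to justify replacing every pairwise output root fidelity by the corresponding channel fidelity inside a single experiment; I would expect to handle this by organizing the strategy around the pair that maximizes the channel fidelity (the term dictating the maximum in the bound) while controlling the remaining, more distinguishable pairs, or via an appropriately structured input, and to verify that the counting factor $M(M-1)/2$ absorbs the accumulated slack. By contrast, the lower bound is comparatively routine once the pair-extraction inequality $\eta\leq\varepsilon/(p_m+p_{\bar{m}})$ is in hand.
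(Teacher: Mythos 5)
Your lower bound is correct and is essentially the paper's own argument. The paper turns the $M$-ary POVM into a two-outcome test by splitting $I-Q^m-Q^{\bar m}$ between the two retained outcomes; your test $\{Q^m,\,I-Q^m\}$ is the special case in which all of $I-Q^m-Q^{\bar m}$ is assigned to the second outcome. Both routes bound the binary error under the renormalized priors by $\varepsilon/(p_m+p_{\bar m})$ and then iterate the chain rule \eqref{eq:chain_rule_fide} and data processing exactly as in the first lower bound of Theorem~\ref{theorem:binary_symmetric_C}, ending at $\varepsilon\geq \frac{p_mp_{\bar m}}{p_m+p_{\bar m}}\big[\widehat{F}(\cN^m,\cN^{\bar m})\big]^n$.

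For the upper bound you follow the same route as the paper (one common input, $n$-fold product strategy, the pairwise bound of \cite[Eq.~(IV.145)]{cheng2024invitation}, multiplicativity of fidelity), and the obstacle you single out is precisely where the argument breaks. After the pairwise bound one is left with
\begin{equation*}
  p_e(\mathcal{S},n)\;\leq\;\inf_{\rho_{RA}}\ \frac{M(M-1)}{2}\max_{m\neq\bar m}\ \sqrt{p_mp_{\bar m}}\,\Big[\sqrt{F}\big(\cN^m(\rho_{RA}),\cN^{\bar m}(\rho_{RA})\big)\Big]^{n},
\end{equation*}
and to reach the stated bound one must pull the infimum inside the maximum. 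The paper does this in its step \eqref{eq:error_bound_product_M}, writing it as an equality, but in general only $\inf_\rho\max_{m\neq\bar m}(\cdot)\geq\max_{m\neq\bar m}\inf_\rho(\cdot)$ holds, which is the wrong direction; so the gap you flagged is also an unrepaired error in the paper's own proof. Moreover, your hoped-for remedies (organizing the strategy around the worst pair, letting the factor $M(M-1)/2$ absorb the slack) cannot work, because the stated upper bound itself fails in general. Take three classical channels on inputs $\{a,b\}$: $\cN^1$ outputs $0$ on both inputs, $\cN^2$ outputs $1$ on $a$ and $0$ on $b$, and $\cN^3$ outputs $0$ on $a$ and $1$ on $b$, each with probability $1-\eta'$. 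All three pairwise channel fidelities equal $4\eta'(1-\eta')$, so with uniform priors, $\varepsilon=10^{-2}$, and $\eta'$ sufficiently small, the right-hand side of \eqref{eq:sc_M-ary_C} equals $1$; but with a single query, $\cN^1$ and $\cN^3$ have identical outputs on input $a$ while $\cN^1$ and $\cN^2$ have identical outputs on input $b$, so for any input state (entangled with a reference or not) the error probability is at least $1/6$, whence $n^*(\mathcal{S},10^{-2})=2$. Any valid strategy in this example must use different inputs on different queries, which lies outside the common-input ansatz; a repaired upper bound along these lines (splitting the $n$ queries among the $\binom{M}{2}$ pair-optimal inputs and applying a union bound) carries a dependence on $M$ that is multiplicative in $M(M-1)/2$ rather than logarithmic. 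In short: your lower bound matches the paper's; your upper bound has a genuine gap at exactly the step you identified, and that step cannot be fixed as stated—neither by you nor by the paper.
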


\begin{proof}
    See Appendix~\ref{app:proof-sc_M-ary_C}.
\end{proof}

\section{Conclusion}

In this work, we defined the query complexity of quantum channel discrimination in the following settings: binary symmetric channel discrimination, binary asymmetric channel discrimination, and multiple symmetric channel discrimination. Then, we obtained upper and lower bounds on the query complexity of these three settings. Notably, we obtained a tight characterization of the query complexity of discriminating  two classical channels and two classical--quantum channels.

It is an interesting future work to obtain tighter characterizations of query complexity for larger classes of quantum channels. Furthermore, similar to the applications of sample complexity of quantum hypothesis testing in quantum simulation, quantum learning, and quantum algorithms as explained in~\cite{cheng2024invitation}, future work includes exploring how the query complexity results derived in this work can be applied to scenarios where discrimination of channels are applicable, in contrast to quantum states. 

\begin{acknowledgments}
We thank Amira Abbas, Marco Cerezo, Ludovico Lami, and Vishal Singh for helpful discussions. We are especially grateful to Varun Jog for pointing us to \cite{kazemi2025sample}. We also thank Hyukjoon Kwon, Byeongseon Go,  and Siheon Park for notifying us of the improved lower bound given in Theorem~\ref{theorem:binary_symmetric_improved_lower_bnd}.

TN and MMW acknowledge support from NSF Grant No.~2329662.
TN acknowledges support from the Dieter Schwarz Exchange Programme on Quantum Communication and Security at the Centre for Quantum Technologies (CQT), National University of Singapore, Singapore, during her research visit to Marco Tomamichel's group at CQT.
\end{acknowledgments}

\bibliography{Reference}

\appendix

\section{Proof of Theorem~\ref{theorem:binary_symmetric_improved_lower_bnd}}

Suppose that
\begin{equation}
\frac{1}{2}\left(1-\left\Vert p\rho-q\sigma\right\Vert _{1}\right)\leq\varepsilon.
\end{equation}
Then it follows that
\begin{align}
\frac{1}{2}\left(1-\left\Vert p\rho-q\sigma\right\Vert _{1}\right) & \leq\varepsilon\\
\implies\qquad1-2\varepsilon & \leq\left\Vert p\rho-q\sigma\right\Vert _{1}\\
 & \leq\sqrt{1-4pqF(\rho,\sigma)}\\
\implies\qquad4pqF(\rho,\sigma) & \leq1-\left(1-2\varepsilon\right)^{2}\\
 & =4\varepsilon\left(1-\varepsilon\right)\\
\implies\qquad F(\rho,\sigma) & \leq\frac{\varepsilon\left(1-\varepsilon\right)}{pq}. \label{eq:take-off-step-channels}
\end{align}
The third inequality follows from  \cite[Theorem~5]{Aud14} with $A=p\rho$
and $B=q\sigma$.

Now applying the assumption that 
\begin{equation}
\frac{1}{2}\left(1-\left\Vert p\rho^{\otimes n}-q\sigma^{\otimes n}\right\Vert _{1}\right)\leq\varepsilon,
\end{equation}
plugging in tensor-power states $\rho^{\otimes n}$ and $\sigma^{\otimes n}$,
and applying the multiplicativity of fidelity, we conclude that
\begin{align}
F(\rho,\sigma)^{n} & =F(\rho^{\otimes n},\sigma^{\otimes n})\\
 & \leq\frac{\varepsilon\left(1-\varepsilon\right)}{pq},
\end{align}
which in turn implies that
\begin{align}
n\ln F(\rho,\sigma) & \leq\ln\!\left(\frac{\varepsilon\left(1-\varepsilon\right)}{pq}\right)\\
\implies\qquad n & \geq\frac{\ln\!\left(\frac{pq}{\varepsilon\left(1-\varepsilon\right)}\right)}{-\ln F(\rho,\sigma)}.
\end{align}
This is the claimed lower bound.

\label{app:proof-improved-samp-comp-lower-bnd}

\section{Proof of Proposition~\ref{Prop:Trivial_cases}}

\label{app:proof-remark-triv-cond_C}
We first prove~\eqref{eq:binary_symmetric1_C}. 

\underline{First condition:}
  Assume that $\varepsilon \in [1/2,1]$. 
	The trivial strategy of guessing the channel uniformly at random achieves an error probability of $1/2$ (i.e., with $n=1$, $ Q = I/2$ in~\eqref{eq:Helstrom1_C}). 
	Thus, if $\varepsilon \in [1/2,1]$, then this trivial strategy accomplishes the task with just a single access to the channel. 

	\underline{Second condition:}
	Now assume that there exists $s\in[0,1]$ such that $\varepsilon \geq p^s q^{1-s}$. 
	With a single use of the channel, we can achieve the error probability $\frac12 \left( 1 - \left\| p \rho_{R_1 B_1} - q \tau_{R_1B_1} \right\|_1 \right)$ for the initial state $\rho_{R_1A_1}$, and by invoking \cite[Theorem~1]{ACM+07}
 with $A = p\rho_{R_1 B_1}$ and $B = q\tau_{R_1 B_1}$, we conclude that
	\begin{align}
		\frac12 \left( 1 - \left\| p \rho_{R_1 B_1} - q \tau_{R_1B_1} \right\|_1 \right)
		&\leq p^s q^{1-s} \Tr\!\left[ \rho_{R_1 B_1}^s \tau_{R_1 B_1}^{1-s} \right]
		\\
		&\leq p^s q^{1-s}
		\\
		&\leq \varepsilon.
	\end{align}
	In the above, we used the H\"{o}lder inequality $\left|\Tr[XY] \right| \leq \left\|X\right\|_r \left\|Y\right\|_t$, which holds for $r,t\geq 1$, such that $r^{-1} + t^{-1} = 1$, to conclude that
 $\Tr\!\left[ \rho_{R_1 B_1}^s \tau_{R_1 B_1}^{1-s} \right]\leq 1.$
	That is, set $r=s^{-1}, t = (1-s)^{-1}$, $X=\rho_{R_1B_1}$, and $Y=\tau_{R_1 B_1}$.
	The last inequality follows by the assumption.
	So we conclude that only a single access to the channel is needed in this case also.

 \underline{Third condition:} Assume that $\exists$ $\psi_{RA}$ such that
        $\operatorname{supp}\!\left( \cN_{A \to B} (\psi_{RA}) \right) \cap \operatorname{supp}\!\left( \cM_{A \to B} (\psi_{RA}) \right) = \emptyset$. 
    In this case, choose the initial state to be $\rho_{R_1 A_1} =\psi_{RA}$. Then the minimum error evaluates to be $\frac12 \left( 1 - \left\| p \cN_{A \to B} (\psi_{RA}) - q \cM_{A \to B} (\psi_{RA}) \right\|_1 \right)$. Since the supports of the two output states are disjoint, we conclude that $\left\| p \cN_{A \to B} (\psi_{RA}) - q \cM_{A \to B} (\psi_{RA}) \right\|_1  =p+q =1$. Then, we achieve zero error by following this procedure and the choice of the initial state with just one query to the unknown channel, concluding the proof.

 \medskip
    We finally prove~\eqref{eq:binary_symmetric2_C}.
	This follows because it is impossible to distinguish the states and the desired inequality $\frac12 \left( 1 - \left\| p \rho_{R_n B_n} - q \tau_{R_n B_n} \right\|_1 \right) \leq \varepsilon$ cannot be satisfied for any possible value of~$n$.
	Indeed, in this case, since $\cN= \cM$ we have 
 $\rho_{R_n B_n} =\tau_{R_n B_n}$ and 
	\begin{align}
		\frac12 \left( 1 - \left\| p \rho_{R_n B_n} - q \tau_{R_n B_n} \right\|_1 \right)
		&= \frac12 \left(1 - |p-q| \left\|\rho_{R_n B_n}\right\|_1 \right)
		\\
		&= \frac12 \left( 1 - |p-(1-p)| \right)
		\\
		&= \frac12 ( 1 - |1-2p|) \\
  &=\min\{p,q\}.
	\end{align}
Then, with the assumption $\min\{p,q\} \geq \varepsilon$, for all $n \in \cN$, the error probability is always greater than or equal to $\varepsilon$. Thus,~\eqref{eq:binary_symmetric2_C} follows.

\section{Proof of Theorem~\ref{theorem:binary_symmetric_C}}
\label{app:binary_symmetric_C}

\underline{Upper bound:} We first set
\begin{equation}
n\coloneqq \left\lceil \inf_{s\in\left[  0,1\right]  }\frac{\ln\!\left(  \frac
{p^{s}q^{1-s}}{\varepsilon}\right)  }{C_s(\cN \Vert \cM)}\right\rceil ,\label{eq:choice-samp-comp-upp-bnd}
\end{equation}
and let $s^{\ast}\in\left[  0,1\right]  $ be the optimal value of $s$ above.

With $C_s(\cN \Vert \cM) \geq 0$, due to $-\ln \Tr\!\left[ \rho^s \sigma^{1-s} \right]\geq 0$ for all states $\rho$ and $\sigma$ and $s \in [0,1]$,
then
\begin{align}
n  & \geq\frac{\ln\!\left(  \frac{p^{s^{\ast}}q^{1-s^{\ast}}}{\varepsilon
}\right)  }{C_{s^{\ast}}(\cN \Vert \cM)}\\
\Leftrightarrow\quad nC_{s^{\ast}}(\cN \Vert \cM)  &
\geq\ln\!\left(  p^{s^{\ast}}q^{1-s^{\ast}}\right)  -\ln\varepsilon
\end{align}
\begin{align}
\Leftrightarrow\quad- \inf_{\rho_{RA}}\ln\operatorname{Tr}\!\left[\left( \left(\cN( \rho_{RA})\right)^{\otimes n}\right)
^{s^{\ast}}\left(  \left(\cM(\rho_{RA})\right)^{\otimes n}\right)  ^{1-s^{\ast}}\right]-\ln\!\left(
p^{s^{\ast}}q^{1-s^{\ast}}\right)    & \geq-\ln\varepsilon\\
\Leftrightarrow\quad\inf_{\rho_{RA}}{\Tr}\!\left[\left(  p\left(\cN(\rho_{RA})\right)^{\otimes n}\right)
^{s^{\ast}}\left(  q\left(\cM(\rho_{RA})\right)^{\otimes n}\right)  ^{1-s^{\ast}}\right]  &
\leq\varepsilon.
\end{align}
Now applying the quantum Chernoff bound in~\cite[Theorem~1]{ACM+07} 
with $A = p \rho^{\otimes n}$ and $B = q\sigma^{\otimes n}$, we conclude that
\begin{align}
\varepsilon &\geq  \inf_{\rho_{RA}} \operatorname{Tr}\!\left[\left(  p \cN^{\otimes n} (\rho_{RA}^{\otimes n})^{\otimes n}\right)
^{s^{\ast}}\left(  q\cM^{\otimes n} (\rho_{RA}^{\otimes n})^{\otimes n}\right)  ^{1-s^{\ast}}\right]\\
& \geq\inf_{\rho_{RA}} \frac12 \left( 1 - \left\| p \cN^{\otimes n} (\rho_{RA}^{\otimes n}) - q \cM^{\otimes n} (\rho_{RA}^{\otimes n}) \right\|_1 \right) .
\label{eq:binary_symmetric_achievability_final}
\end{align}

Consider that 
 \begin{equation}
      p_e(p, \cN,q,\cM,n) = \inf_{\{ Q,\cA\},\rho_{R_1A_1}} \frac12 \left( 1 - \left\| p \rho_{R_n B_n} - q \tau_{R_n B_n} \right\|_1 \right).
 \end{equation}
 Choosing $n$ copies of the input state and passing it through channel $\cN$ or $\cM$ in its $n$ uses, the final state before the measurement is applied evaluates to be either of these states ($\cN^{\otimes n} (\rho_{RA}^{\otimes n}) $ or $ \cM^{\otimes n} (\rho_{RA}^{\otimes n}) $). This is one strategy that is included in the set of all adaptive strategies. This leads to the following upper bound:
 \begin{align} \label{eq:one}
     p_e(p, \cN,q,\cM,n) & \leq \inf_{\rho_{RA}} \frac12 \left( 1 - \left\| p \cN^{\otimes n} (\rho_{RA}^{\otimes n}) - q \cM^{\otimes n} (\rho_{RA}^{\otimes n}) \right\|_1 \right).
 \end{align}

The choice of $n$ in~\eqref{eq:choice-samp-comp-upp-bnd} thus satisfies the
constraint~\eqref{eq:def:sample_complexity_symmetric2_C} in Definition~\ref{def:binary_symmetric_C}, and since the optimal query complexity
cannot exceed this choice, this concludes our proof of the upper bound in~\eqref{eq:binary_symmetric3_C}.

\bigskip 
\underline{Lower bound 1:}
Let us consider the case where 
\begin{equation}
    \varepsilon \geq  p_e(p, \cN,q,\cM,n) = \inf_{\{ \cA\},\rho_{R_1A_1}} \frac12 \left( 1 - \left\| p \rho_{R_n B_n} - q \tau_{R_n B_n} \right\|_1 \right). \label{eq:start}
\end{equation}
Applying \eqref{eq:take-off-step-channels}, we conclude that
\begin{align}
    \frac{\varepsilon\left(1-\varepsilon\right)}{pq}
    &\geq \inf_{\{ \cA\},\rho_{R_1A_1}}  F\!\left( \rho_{R_n B_n}, \tau_{R_n B_n}\right) \label{eq:start_for other proofs} \\
    &\geq \inf_{\{ \cA\},\rho_{R_1A_1}}  \widehat{F}\!\left( \rho_{R_n B_n}, \tau_{R_n B_n}\right), \label{eq:combining}
\end{align}
where the last inequality follows because $\widehat{F} \leq F$.

Recall that, for $1\leq i \leq n$,
\begin{align}
    \rho_{R_{i} B_{i}} & \coloneqq \cN_{A_i \to B_i} (\rho_{R_i A_i}), \label{eq:similar_start}\\ 
    \tau_{R_{i} B_{i}} & \coloneqq  \cM_{A_i \to B_i} (\tau_{R_i A_i}),
\end{align}
and 
\begin{align}
    \rho_{R_{i+1} A_{i+1}} & \coloneqq (\cA_{R_i B_i \to R_{i+1} A_{i+1}}^{(i)} \circ \cN_{A_i \to B_i}) (\rho_{R_i A_i}), \\ 
    \tau_{R_{i+1} A_{i+1}} & \coloneqq (\cA_{R_i B_i \to R_{i+1} A_{i+1}}^{(i)} \circ \cM_{A_i \to B_i}) (\tau_{R_i A_i}),
\end{align}
for every $1 \leq i < n$.
With the above notation, and by using the chain rule for geometric fidelity in~\eqref{eq:chain_rule_fide}, we have that 
\begin{align}
   \widehat{F}\!\left( \rho_{R_n B_n}, \tau_{R_n B_n}\right) &=   \widehat{F}\!\left( \cN_{A_n \to B_n} (\rho_{R_n A_n}), \cM_{A_n \to B_n} (\rho_{R_n A_n})\right)    \\ 
   &\geq \widehat{F}(\cN, \cM) \widehat{F}\!\left( \rho_{R_n A_n}, \tau_{R_n A_n}\right).
\end{align}
Then using the data-processing inequality for the geometric fidelity under the channel $\cA_{R_{n-1} B_{n-1} \to R_{n} A_{n}}^{(n-1)}$, we have that 
\begin{align}
    \widehat{F}\!\left( \rho_{R_n A_n}, \tau_{R_n A_n}\right) & \geq \widehat{F}\!\left( \cN_{A_{n-1} \to B_{n-1}}(\rho_{R_{n-1} A_{n-1}}), \cM_{A_{n-1} \to B_{n-1}}(\tau_{R_{n-1} A_{n-1}})\right).
\end{align}
Proceeding with the application of the chain rule and data processing under the adaptive channels for $n-1$ times, we obtain 
\begin{align}
    \widehat{F}\!\left( \rho_{R_n B_n}, \tau_{R_n B_n}\right) &\geq \left(\widehat{F}(\cN, \cM)\right)^{n-1} \widehat{F}\!\left( \rho_{R_1 B_1}, \tau_{R_1 B_1}\right) \\
    &= \left(\widehat{F}(\cN, \cM)\right)^{n-1} \widehat{F}\!\left( \cN_{A_1 \to B_1}(\rho_{R_1 A_1}), \cM_{A_1 \to B_1}(\tau_{R_1 A_1})\right). \label{eq:F_hat_channel_bound}
\end{align}

Combining the above inequality with~\eqref{eq:combining}, we arrive at 
\begin{align}
     \frac{\varepsilon\left(1-\varepsilon\right)}{pq}  
    & \geq    \left(\widehat{F}(\cN, \cM)\right)^{n-1} \inf_{\{ \cA\},\rho_{R_1A_1}} \widehat{F}\!\left( \cN_{A_1 \to B_1}(\rho_{R_1 A_1}), \cM_{A_1 \to B_1}(\tau_{R_1 A_1})\right) \\ & =  \left[\widehat{F}(\cN, \cM)\right]^{n}, \label{eq:bound_on-error}
\end{align}
where the last equality follows by the definition of channel fidelity with the choice $\widehat{F}$ in~\eqref{eq:channel_fidelities}.

After taking a logarithm and rearranging, we conclude the proof of the first lower bound in~\eqref{eq:binary_symmetric3_C}.

\bigskip 
\underline{Lower bound 2:} 
Using \cite[Eq.~(II.22)]{cheng2024invitation}, we have that 
\begin{align}
\left\Vert p\rho-q\sigma\right\Vert _{1}^{2}  &  \leq1-4pqF(\rho,\sigma)\\
&  \leq1-4pq+4pq\left[  d_B(\rho,\sigma)\right]  ^{2} \\
& \leq1-4pq+4pq\left[  d_{\widehat{F}}(\rho,\sigma)\right]  ^{2} \\
&= 1-4pq+8pq \left(1-\left(1-  \frac{\left[  d_{\widehat{F}}(\rho,\sigma)\right]  ^{2}}{2}\right) \right),
\end{align}
where the third inequality follows because $\widehat{F} \leq F$.

Rewriting~\eqref{eq:start} by algebraic manipulations, we get 
\begin{align}
    (1-2\varepsilon)^2 &\leq \sup_{\{ \cA\},\rho_{R_1A_1}}  \left\| p \rho_{R_n B_n} - q \tau_{R_n B_n} \right\|_1^2 \\
    & \leq 1- 4pq +8pq \left(1- \inf_{\{ \cA\},\rho_{R_1A_1}}  \left(1-  \frac{\left[  d_{\widehat{F}}(\rho_{R_n B_n},\tau_{R_n B_n})\right]  ^{2}}{2}\right) \right) \\ 
    &\leq  1- 4pq +8pq \left(1-   \left(1-  \frac{\left[  d_{\widehat{F}}(\cN , \cM)\right]  ^{2}}{2}\right)^{n} \right) \\
    & \leq 1-4pq+4pqn \left[  d_{\widehat{F}}(\cN, \cM)\right]  ^{2}, \label{eq:last_before_SC}
\end{align}
where the last inequality follows because $1- (1-x)^n \leq nx$ for $0\leq x \leq 1$. Now, we prove the penultimate inequality: first observe that 
\begin{equation}
    1- \frac{\left[  d_{\widehat{F}}(\rho,\sigma)\right]  ^{2}}{2} = \widehat{F}(\rho,\sigma) .
\end{equation}
This relation is also true for the channel variants  
\begin{equation}\label{eq:channel_d_b_hat}
    1- \frac{\left[  d_{\widehat{F}}(\cN,\cM)\right]  ^{2}}{2} = \widehat{F}(\cN,\cM) .
\end{equation}
Using~\eqref{eq:F_hat_channel_bound}, we have that 
\begin{align}
   \widehat{F}\!\left( \rho_{R_n B_n}, \tau_{R_n B_n}\right) &\geq \left[\widehat{F}(\cN, \cM)\right]^{n-1} \widehat{F}\!\left( \cN_{A_1 \to B_1}(\rho_{R_1 A_1}), \cM_{A_1 \to B_1}(\tau_{R_1 A_1})\right) \\
   & \geq \left[\widehat{F}(\cN, \cM)\right]^{n} \\ 
   &= \left( 1- \frac{\left[  d_{\widehat{F}}(\cN,\cM)\right]  ^{2}}{2}  \right)^n,
\end{align}
where the last inequality follows by the definition of channel fidelities in~\eqref{eq:channel_fidelities}, and the last equality by~\eqref{eq:channel_d_b_hat}. With that, we establish that 
\begin{equation}
   \inf_{\{ \cA\},\rho_{R_1A_1}}  \left(1-  \frac{\left[  d_{\widehat{F}}(\rho_{R_n B_n},\tau_{R_n B_n})\right]  ^{2}}{2}\right) \geq  \left( 1- \frac{\left[  d_{\widehat{F}}(\cN,\cM)\right]  ^{2}}{2}  \right)^n.
\end{equation}
This completes the proof of~\eqref{eq:last_before_SC}.

By rearranging terms in \eqref{eq:last_before_SC}, we obtain the following:
\begin{align}
    n & \geq \frac{(1-2\varepsilon)^2 -(1-4pq)}{4pq\left[  d_{\widehat{F}}(\cN,\cM)\right]  ^{2}} \\
    &\geq \frac{pq -\varepsilon(1-\varepsilon)}{pq\left[  d_{\widehat{F}}(\cN,\cM)\right]  ^{2}},\\
    &= \frac{1 -\frac{\varepsilon(1-\varepsilon)}{pq}}{\left[  d_{\widehat{F}}(\cN,\cM)\right]  ^{2}},
\end{align}
which concludes the proof of the second lower bound in~\eqref{eq:binary_symmetric3_C}.

\section{Proof of Theorem~\ref{thm:binary_asymmetric_C}}

\label{app:proof_samp_comp_binary_asymmetric_C}

\underline{Lower bound:}

Let $\alpha \in (1,2]$. Then, by the data-processing inequality for the geometric R\'enyi divergence under the measurement channel (comprised of the POVM elements $Q, I-Q$)
\begin{align}
    \widehat{D}_\alpha (\rho\Vert \tau) &\geq \frac{1}{\alpha-1} \ln\! \left( \Tr[Q \rho]^\alpha \Tr[Q\tau]^{1-\alpha} + \Tr[(I-Q)\rho]^\alpha + \Tr[(I-Q) \tau]^{1-\alpha} \right) \\
    & \geq \frac{\alpha}{\alpha-1} \ln \Tr[Q\rho] - \ln \Tr[Q \tau].
\end{align}

Let $\rho =\rho_{R_n B_n}$, $\tau=\tau_{R_n B_n}$ with the constraint that $\Tr[Q_{R_n B_n} \rho_{R_n B_n}] \geq 1- \varepsilon$. Then, the above inequality can be rewritten as 
\begin{equation}
     \widehat{D}_\alpha (\rho_{R_n B_n}\Vert \tau_{R_n B_n}) \geq \frac{\alpha}{\alpha-1} \ln(1-\varepsilon) -  \ln \Tr[Q_{R_n B_n} \tau_{R_n B_n}].
\end{equation}

From here, let us optimize over all $\{ Q, \cA\}$ and $\rho_{R_1 A_1}$ such that 
\begin{align}
   & \sup_{\{ Q, \cA\}, \rho_{R_1 A_1}} \widehat{D}_\alpha (\rho_{R_n B_n}\Vert \tau_{R_n B_n})  \notag \\
    &\geq \frac{\alpha}{\alpha-1} \ln(1-\varepsilon) -  \ln  \inf_{\{ Q, \cA\}, \rho_{R_1 A_1}} \left\{ \Tr[Q_{R_n B_n} \tau_{R_n B_n}] : \Tr[Q_{R_n B_n} \rho_{R_n B_n}] \geq 1- \varepsilon \right\} \\ 
    & = \frac{\alpha}{\alpha-1} \ln(1-\varepsilon) -  \ln \beta_\varepsilon\!\left( \cN^{(n)} \Vert \cM^{(n)}\right).
\end{align}

By assuming $\beta_\varepsilon\!\left( \cN^{(n)} \Vert \cM^{(n)}\right) \leq \delta$, we also have that 
\begin{align}
     \sup_{\{ Q, \cA\}, \rho_{R_1 A_1}} \widehat{D}_\alpha (\rho_{R_n B_n}\Vert \tau_{R_n B_n})  &\geq \frac{\alpha}{\alpha-1} \ln(1-\varepsilon) -  \ln \delta \\ 
     &= \ln\!\left( \frac{(1-\varepsilon)^{\alpha'}}{\delta} \right), \label{eq:bound_delta_eps}
\end{align}
where $\alpha'$ is defined in the  statement of Theorem~\ref{thm:binary_asymmetric_C}.

By using the chain rule for the geometric R\'enyi divergence in~\eqref{eq:chain_rule_geometric_renyi}, we have that 
\begin{align}
   \widehat{D}_\alpha\!\left( \rho_{R_n B_n} \Vert \tau_{R_n B_n}\right) &=   \widehat{D}_\alpha\!\left( \cN_{A_n \to B_n} (\rho_{R_n A_n}) \Vert \cM_{A_n \to B_n} (\rho_{R_n A_n})\right)    \\ 
   &\leq \widehat{D}_\alpha(\cN \Vert \cM) + \widehat{D}_\alpha\!\left( \rho_{R_n A_n} \Vert \tau_{R_n A_n}\right).
\end{align}
Then using the data-processing inequality for the geometric R\'enyi divergence under the channel  $\cA_{R_{n-1} B_{n-1} \to R_{n} A_{n}}^{(n-1)}$, we have that 
\begin{align}
    \widehat{D}_\alpha\!\left( \rho_{R_n A_n} \Vert \tau_{R_n A_n}\right) & \leq \widehat{D}_\alpha\!\left( \cN_{A_{n-1} \to B_{n-1}}(\rho_{R_{n-1} A_{n-1}}) \Vert\cM_{A_{n-1} \to B_{n-1}}(\tau_{R_{n-1} A_{n-1}})\right).
\end{align}
Proceeding with the application of the chain rule and the data-processing inequality under the adaptive channels for $n-1$ times, we obtain 
\begin{align}
    \widehat{D}_\alpha\!\left( \rho_{R_n B_n} \Vert \tau_{R_n B_n}\right) &\leq (n-1)\widehat{D}_\alpha(\cN \Vert \cM) + \widehat{D}_\alpha\!\left( \rho_{R_1 B_1} \Vert \tau_{R_1 B_1}\right) \\
    &= (n-1)\widehat{D}_\alpha(\cN \Vert \cM) + \widehat{D}_\alpha\!\left( \cN_{A_1 \to B_1}(\rho_{R_1 A_1})\Vert \cM_{A_1 \to B_1}(\tau_{R_1 A_1})\right) \\
    &\leq n \widehat{D}_\alpha(\cN \Vert \cM) \label{eq:bound_D_hat_output},
\end{align}
where the last inequality follows from the definition of the channel variant of the geometric R\'enyi divergence obtained by replacing $\bm{D}$ in~\eqref{eq:channel_divergences} with $\widehat{D}_\alpha$.

Combining~\eqref{eq:bound_delta_eps} and~\eqref{eq:bound_D_hat_output}, we arrive at the following for all $\alpha \in (1,2)$:
\begin{equation}
  n \widehat{D}_\alpha(\cN \Vert \cM)  \geq   \ln\!\left( \frac{(1-\varepsilon)^{\alpha'}}{\delta} \right),
\end{equation}
which leads to 
\begin{equation}
    n \geq \sup_{\alpha \in (1,2)} \frac{\ln\!\left( \frac{(1-\varepsilon)^{\alpha'}}{\delta} \right)}{\widehat{D}_\alpha(\cN \Vert \cM) }.
\end{equation}

The other lower bound is obtained by switching the roles of $\cN$ and $\cM$, and $\varepsilon$ with $\delta$ with the use of the fact 
\begin{equation}
n^{\ast}(\cN,\cM,\varepsilon,\delta)\coloneqq \inf\left\{  n\in\mathbb{N}
:\beta_{\delta} \!\left(\cM^{(n)}\Vert \cN^{(n)} \right)\leq
\varepsilon\right\}  ,
\end{equation}
as given in~\eqref{eq:asymm-beta-rewrite-2}.

\bigskip 
\underline{Upper bound:}
By~\cite[Lemma~A.3 of the supplemental material]{cheng2024invitation}, for all $\alpha \in (0,1)$ and $\varepsilon \in (0,1)$, we have that
\begin{equation}
    - \ln \inf_{Q}\{ \Tr[Q \sigma]: \Tr[Q\rho] \geq 1-\varepsilon \} \geq D_\alpha(\rho \Vert \sigma) + \frac{\alpha}{\alpha-1} \ln\!\left( \frac{1}{\varepsilon}\right).
\end{equation}

With that for our setup, let us consider $\rho=\rho_{R_n B_n}$ and $\sigma= \tau_{R_n B_n}$ and $Q=Q_{R_n B_n}$ and supremize over all adaptive strategies $\{Q,\cA\}$ and input states $\rho_{R_1A_1}$ to obtain 
\begin{align}
-\ln \beta_\varepsilon\!\left( \cN^{(n)} \Vert \cM^{(n)}\right)   &=-  \ln  \inf_{\{ Q, \cA\}, \rho_{R_1 A_1}} \left\{ \Tr[Q_{R_n B_n} \tau_{R_n B_n}] : \Tr[Q_{R_n B_n} \rho_{R_n B_n}] \geq 1- \varepsilon \right\}   \\
  &\geq  \sup_{\{ Q, \cA\}, \rho_{R_1 A_1}} {D}_\alpha (\rho_{R_n B_n}\Vert \tau_{R_n B_n}) + \frac{\alpha}{\alpha-1} \ln\!\left( \frac{1}{\varepsilon}\right). \label{eq:bound_cN_cM}
\end{align}

Consider a product strategy, where we discriminate between the states $[\cN(\rho_{RA})]^{\otimes n}$ and $[\cM(\rho_{RA})]^{\otimes n}$ with the input state being $\rho_{RA}$ and optimizing over such strategies with different input states is included in the set of all possible strategies. This provides the following lower bound:
\begin{align}
    \sup_{\{ Q, \cA\}, \rho_{R_1 A_1}} {D}_\alpha (\rho_{R_n B_n}\Vert \tau_{R_n B_n}) \geq \sup_{\{ Q, \cA\}, \rho_{R A}} {D}_\alpha \!\left( [\cN(\rho_{RA})]^{\otimes n} \Vert [\cM(\rho_{RA})]^{\otimes n} \right) \\
    = n \cdot \sup_{\{ Q, \cA\}, \rho_{R A}} {D}_\alpha \!\left( \cN(\rho_{RA}) \Vert \cM(\rho_{RA}) \right),\label{eq:bound_additive_n}
\end{align}
where the last equality follows by the additivity of the Petz-R\'enyi relative entropy.

Combining~\eqref{eq:bound_cN_cM} and~\eqref{eq:bound_additive_n}, we have that 
\begin{equation}
  -\ln \beta_\varepsilon\!\left( \cN^{(n)} \Vert \cM^{(n)}\right) \geq   n \cdot \sup_{\{ Q, \cA\}, \rho_{R A}} {D}_\alpha \!\left( \cN(\rho_{RA}) \Vert \cM(\rho_{RA}) \right) + \frac{\alpha}{\alpha-1} \ln\!\left( \frac{1}{\varepsilon}\right),
\end{equation}
for all $\alpha \in (0,1)$.

Using that, with the choice
\begin{equation}
    n \geq \inf_{\alpha\in\left(  0,1\right)  }\left(  \frac{\ln\!\left(  \frac
{\varepsilon^{\alpha^{\prime}}}{\delta}\right)  }{D_{\alpha}(\cN\Vert\cM
)}\right),
\end{equation}
we have that $\beta_\varepsilon\!\left( \cN^{(n)} \Vert \cM^{(n)}\right) \leq \delta$. 
Then, the optimum number of channel uses with this strategy is
\begin{equation}
    n= \left\lceil \inf_{\alpha\in\left(  0,1\right)  }\left(  \frac{\ln\!\left(  \frac
{\varepsilon^{\alpha^{\prime}}}{\delta}\right)  }{D_{\alpha}(\cN\Vert\cM
)}\right) \right\rceil,
\end{equation}
which leads to 
\begin{equation}
  n^{\ast}(\cN,\cM,\varepsilon,\delta)   \leq \left\lceil \inf_{\alpha\in\left(  0,1\right)  }\left(  \frac{\ln\!\left(  \frac
{\varepsilon^{\alpha^{\prime}}}{\delta}\right)  }{D_{\alpha}(\cN\Vert\cM
)}\right) \right\rceil.
\end{equation}

 \section{Proof of Theorem~\ref{theorem:sc_M-ary_C}}

\label{app:proof-sc_M-ary_C}

\underline{Lower bound:}
Recall~\eqref{eq:error_M-ary_C}, so that
\begin{align}
p_e(\mathcal{S},n) 
	&\coloneqq \inf_{\{ Q, \cA\},\rho_{R_1A_1}} \sum_{m=1}^M p_m \Tr\!\left[ (I_{R_n B_n} -Q_{R_n B_n}^m )\rho_{R_n B_n}^m\right],
\end{align}
where $Q$ denotes a POVM $(Q^1_{R_n B_n},\ldots, Q^M_{R_n B_n})$ satisfying $Q^m_{R_n B_n} \geq 0$ for all $m\in \{1,\ldots,M\}$ and $\sum_{m=1}^M {Q}^m_{R_n B_n} = I_{R_n B_n}$.

    For $1\leq m \neq  \tilde{m}\leq M$, choose $L_{R_n B_n}$ and $T_{R_n R_n}$ to be positive semi-definite operators satisfying $L_{R_n B_n} + T_{R_n B_n}= I_{R_n B_n}- Q^m_{R_n B_n}- Q^{\tilde{m}}_{R_nB_n}$. With that, define $\tilde{Q}_{R_n B_n}^{m} \coloneqq  Q^m_{R_n B_n}+ L_{R_n B_n}$ and $\tilde{Q}^{\tilde{m}}_{R_n B_n} \coloneqq  Q^{\tilde{m}}_{R_n B_n}+ T_{R_n B_n}$, so that we have $\tilde{Q}^{{m}}_{R_n B_n}+\tilde{Q}^{\tilde{m}}_{R_n B_n}=I$. 

Consider
\begin{align}
    & \sum_{m=1}^M p_m \Tr\!\left[ (I_{R_n B_n} -Q_{R_n B_n}^m )\rho_{R_n B_n}^m\right] \notag \\
   & \geq  p_m \Tr\!\left[ (I_{R_n B_n} -Q_{R_n B_n}^m )\rho_{R_n B_n}^m\right] + p_{\tilde{m}} \Tr\!\left[ (I_{R_n B_n} -Q_{R_n B_n}^{\tilde{m}} )\rho_{R_n B_n}^{\tilde{m}}\right] \\
   & \geq p_m\Tr\!\left[ (I_{R_n B_n} -\tilde{Q}_{R_n B_n}^m )\rho_{R_n B_n}^m\right] + p_{\tilde{m}} \Tr\!\left[ (I_{R_n B_n} -\tilde{Q}_{R_n B_n}^{\tilde{m}} )\rho_{R_n B_n}^{\tilde{m}}\right] \\
   &= (p_m +p_{\tilde{m}}) \left(\frac{p_m}{p_m +p_{\tilde{m}}}\Tr\!\left[ (I_{R_n B_n} -\tilde{Q}_{R_n B_n}^m )\rho_{R_n B_n}^m\right] + \frac{p_{\tilde{m}}}{p_m +p_{\tilde{m}}} \Tr\!\left[ (I_{R_n B_n} -\tilde{Q}_{R_n B_n}^{\tilde{m}} )\rho_{R_n B_n}^{\tilde{m}}\right] \right) \\
   &\geq (p_m +p_{\tilde{m}}) \inf_{\{\tilde{Q},\cA\}, \rho_{R_1A_1}} \left(\frac{p_m}{p_m +p_{\tilde{m}}}\Tr\!\left[ \tilde{Q}_{R_n B_n}\rho_{R_n B_n}^m\right] + \frac{p_{\tilde{m}}}{p_m +p_{\tilde{m}}} \Tr\!\left[ (I_{R_n B_n} -\tilde{Q}_{R_n B_n} )\rho_{R_n B_n}^{\tilde{m}}\right] \right) \\
   & \geq \frac{p_m p_{\tilde{m}}}{p_m+p_{\tilde{m}}} \left(\widehat{F}(\cN^m,\cN^{\tilde{m}} \right)^n,
\end{align}
where the second inequality follows by the explicit construction of $\tilde{Q}$ so that 
 ${Q}_{R_n B_n}^i  \leq \tilde{Q}_{R_n B_n}^i $ for $i \in \{m,\tilde{m}\}$ and the last inequality by applying~\eqref{eq:bound_on-error}.

By optimizing the left-hand side over all $\{Q,\cA\}$ and $\rho_{R_1A_1}$ and imposing the constraint on the error, we have that 
\begin{align}
    \varepsilon &\geq p_e(\mathcal{S},n)  \\
    & \geq \frac{p_m p_{\tilde{m}}}{p_m+p_{\tilde{m}}} \left(\widehat{F}(\cN^m,\cN^{\tilde{m}}) \right)^n.
\end{align}
Rearranging the terms in the above inequality and maximizing over all arbitary pairs $m,\tilde{m}$ such that $1\leq m \neq\tilde{m} \leq M$ concludes the proof of the desired lower bound.

\bigskip
\underline{Upper bound:}
By employing the product strategy of discriminating the following states $\{ (\cN^m_{A \to B}(\rho_{RA}))^{\otimes n}\}_m$ with the initial state $\rho_{RA}$, we have that

\begin{align}
    p_e(\mathcal{S},n)  & \leq \inf_{Q, \rho_{RA}} \sum_{m=1}^M p_m \Tr\!\left[ (I_{R_n B_n} -Q_{R_n B_n}^m ) \left( \cN^m(\rho_{RA})\right)^{\otimes n}\right]  \\
    & \leq \inf_{Q, \rho_{RA}} 
    \frac{1}{2} M(M-1) \max_{\bar{m}\neq m}  \sqrt{p_m}\sqrt{p_{\bar{m}}} \sqrt{F}\left( \cN^m(\rho_{RA}), \cN^{\bar{m}}(\rho_{RA})  \right) \\ 
    & = \frac{1}{2} M(M-1) \max_{\bar{m}\neq m}  \sqrt{p_m}\sqrt{p_{\bar{m}}} \sqrt{F}\left( \cN^m, \cN^{\bar{m}} \right), \label{eq:error_bound_product_M}
\end{align}
where the second inequality follows by applying \cite[Eq.~(IV.145)]{cheng2024invitation} to this setting and the last equality by the definition of channel fidelities in~\eqref{eq:channel_fidelities} with the choice $F$.

Then, by choosing 
\begin{equation}
    n \geq \max_{m\neq \bar{m}}  \frac{2\ln\!\left( \frac{ M(M-1) \sqrt{p_m} \sqrt{p_{\bar{m}}}  }{ 2\varepsilon } \right) }{ - \ln F\!\left(\cN^{m},\cN^{\bar{m}} \right) },
\end{equation}
we obtain that $ p_e(\mathcal{S},n) \leq \varepsilon$ using~\eqref{eq:error_bound_product_M}. 
The optimum $n$ to choose with this strategy would be 
\begin{equation}
    n= \left \lceil \max_{m\neq \bar{m}}  \frac{2\ln\!\left( \frac{ M(M-1) \sqrt{p_m} \sqrt{p_{\bar{m}}}  }{ 2\varepsilon } \right) }{ - \ln F\!\left(\cN^{m},\cN^{\bar{m}} \right) }  \right \rceil. 
\end{equation}
With this, we conclude that 
\begin{equation}
    n^*(\mathcal{S},\varepsilon) \leq 
		\left\lceil  \max_{m\neq \bar{m}}  \frac{2\ln\!\left( \frac{ M(M-1) \sqrt{p_m} \sqrt{p_{\bar{m}}}  }{ 2\varepsilon } \right) }{ - \ln F\!\left(\cN^{m},\cN^{\bar{m}} \right) } \right\rceil .
\end{equation}

\end{document}